\algnewcommand\algorithmicinput{\textbf{INPUT:}}
\algnewcommand\INPUT{\item[\algorithmicinput]}
\algnewcommand\algorithmicoutput{\textbf{OUTPUT:}}
\algnewcommand\OUTPUT{\item[\algorithmicoutput]}
\newtheorem{theorem}{Theorem}
\newtheorem{proposition}{Proposition}
\newtheorem{lemma}{Lemma}
\newtheorem{definition}{Definition}
\newtheorem{remark}{Remark}
\newcommand{\diag}{\mathrm{diag}}
\newcommand{\op}{\mathrm{op}}
\newcommand{\rank}{\mathrm{rank}}
\newcommand{\sign}{\mathrm{sign}}
\newcommand{\kse}{\kappa_{\max}^{s,e}}
\DeclareMathOperator*{\argmax}{argmax}
\DeclareMathOperator*{\usvt}{USVT}
\newtheorem{assumption}{Assumption}
\date{\vspace{-5ex}}
\begin{document}
	
  	\title{Optimal Change Point Detection and Localization in Sparse Dynamic Networks}
 \author[1] {Daren Wang}
\author[2]{Yi Yu}
\author[3]{Alessandro Rinaldo}
 \affil[1]{\small Department of Statistics,  University of Chicago}
 \affil[2]{\small Department of Statistics,
				University of Warwick}
\affil[3]{\small Department of Statistics and Data Science,  Carnegie Mellon University}

\maketitle

\begin{abstract}
We study the problem of change point localization in dynamic networks models.  We assume that we observe a sequence of independent adjacency matrices of the same size, each corresponding to a realization of an unknown inhomogeneous  Bernoulli model. The underlying distribution of the adjacency matrices are piecewise constant, and may change over a subset of the  time points, called change points. We are concerned with recovering the unknown number and positions of the change points. In our model setting we allow for all the model parameters to change with the total number of time points, including the network size,  the minimal spacing between consecutive change points, the magnitude of the smallest change and the degree of sparsity of the networks.
We first identify a region of impossibility in the space of the model parameters such that no change point estimator is provably consistent if the data are generated according to parameters falling in that region. We propose a computationally-simple  algorithm for network change point localization, called Network Binary Segmentation, that relies on weighted averages of the adjacency matrices. We show that  Network Binary Segmentation is consistent over a range of the model parameters that nearly cover the complement of the impossibility region, thus demonstrating the existence of a phase transition for the problem at hand.  Next, we devise a more sophisticated algorithm based on singular value thresholding, called Local Refinement, that delivers more accurate estimates of the change point locations.  Under appropriate conditions, Local Refinement guarantees a minimax optimal  rate for network change point localization while remaining computationally feasible.

\end{abstract}
\
\\
{\bf Keywords:}   Change point detection; Low-rank networks; Stochastic block model; Minimax optimality.

\section{Introduction}\label{section:introduction}
The analysis of network is a fundamental task in statistics due to the increasing popularity of network data generated from various scientific areas, social sciences, emerging industries, as well as everyday life.  
Over the last decade, most of the advances in the area of statistical network analysis have revolved around {\it static network models}, where the properties of the data generating process are inferred from a single realization of the network. 
For this type of problems, a large collection of results  of computational, methodological and theoretical nature exist. 

In contrast to the basic premise of the static network modeling framework, many modern network data sets consist instead of multiple network realizations indexed by time, so that both the number of nodes and the connectivity structure of the network exhibit time-varying features.  Such a {\it dynamic network modeling} setting is naturally more complex and challenging, as it is necessary to additionally formalize and model the underlying temporal dynamic. While there is a vast body of work on dynamic network models  \citep[see, e.g.,][]{BarabasiAlbert1999} in the broader scientific literature, theoretical results on such models are comparatively scarce in the statistical literature, with many of the contributions being fairly recent (see \Cref{sec-related-work} below for some literature review).


In this article we  are concerned with a discrete time network dynamic setting in which the set of nodes is fixed but the edge probabilities are time-varying.  We assume that we observe a sequence of $T$ independent and possibly sparse networks of constant size whose distributions may change at $K < T$  unknown time points, or change points. We impose minimal restrictions on the number and locations of the possible change points and especially on the nature of the  distributional changes that may occur at those times. In particular, most popular static network models can fit into our framework.
Our goal is to detect whether any such change has taken place, and to accurately estimate the time  of the corresponding change point. Importantly, we are not interested in estimating the underlying data-generating distributions. As our analysis will reveal, although we only consider a fairly straightforward form of network dynamics, the associated inference problem is rather  subtle and far from trivial.  Furthermore, if one is interested in the underlying distributions, then static network estimation methods can be applied to the sample means of the adjacency matrices between two consecutive change point estimators.

\subsection{Problem setup}

To set up the problem, we assume a sequence of $T$ independent adjacency matrices of size $n$, each  from a possibly sparse inhomogeneous Bernoulli network model, defined next.  

\begin{definition}[Inhomogeneous Bernoulli networks]\label{def-1}
	A network with node set $\{1, \ldots, n\}$  is an inhomogeneous Bernoulli network if its adjacency matrix $A \in \mathbb{R}^{n\times n}$ satisfies
		\[
		A_{ij} = A_{ji} = \begin{cases}
			1, & \mbox{nodes $i$ and $j$ are connected by an edge},\\
			0, & \mbox{otherwise};
		\end{cases}
		\]
		and $\{A_{ij}, i < j\}$ are independent Bernoulli random variables with $\mathbb{E}(A_{ij}) = \Theta_{ij}$.  
\end{definition}

\Cref{def-1} covers a wide range of models for undirected networks, including the Erd\H{o}s--R\'enyi random graph \citep{ErdosRenyi1959}, the stochastic block model \citep{HollandEtal1983}, the degree corrected block model \citep{KarrerNewman2011} and the random dot product model \citep{YoungScheinerman2007}, etc. It is worth pointing out that although  we are only considering undirected networks, our results extend straightforwardly to directed networks, i.e.~asymmetric  adjacency matrices. Additionally, for technical convenience, we are allowing self-loops, even though networks with no loops can be easily accommodated; see \Cref{section:SBM} below. 
Finally, discussions on the possible relaxations on the independence and Bernoulli assumptions can be found in Section~\ref{sec-disc}.

We further assume that the probability distributions of the networks change only over an unknown subset of the time points, called change points. 
We formalize our setting below.


\begin{assumption}[Change point dynamic network model]\label{assume:model}
	Let $\left\{ A(t)\right\}_{t = 1}^T$ be a sequence of $n \times n$ adjacency matrices of independent inhomogeneous Bernoulli networks with means $ \left\{\Theta(t)\right\}_{t=1}^T$ satisfying the following properties.
\begin{enumerate}
	\item  The sparsity parameter  
		\begin{equation}\label{eq-rho-def-1} 
			\rho : = \max_{t = 1, \ldots, T} \|\Theta(t)\|_{\infty}
		\end{equation}
	is such that
	\begin{equation}\label{eq-assume-sparse}
	\rho n  \ge \log(n),
	\end{equation}
	where $\|\cdot\|_{\infty}$ denotes the entrywise maximum norm of a matrix.
	\item  There exists a sequence  $(\eta_0,\ldots,\eta_{K+1})$ of time points, called change points, such that  $1 = \eta_0 < \eta_1 < \ldots < \eta_K \leq T < \eta_{K+1} = T + 1$ and, for $t=2,\ldots,T$, 
	\[
	\Theta(t) \neq \Theta(t-1) \quad  \text{if and only if} \quad t \in \{ \eta_1,\ldots,\eta_K\}.
	\]
	\end{enumerate}
We let   
	\[
	\Delta := \min_{k = 1, \ldots, K+1} \{\eta_k-\eta_{k-1}\} \leq T
	\]
be the minimal spacing between two consecutive change points and set
	\begin{equation}\label{eq-as1-frob}
		\kappa_0 : = \frac{\min_{k=1,\ldots,K} \|\Theta (\eta_{k}) -\Theta (\eta_{k}-1 ) \|_{\mathrm{F}}}{n\rho} \in (0,1],	
	\end{equation}	
to be the normalized magnitude of the smallest changes in the data generating distribution, where $\|\cdot\|_{\mathrm{F}}$ denotes the Frobenius norm.

\end{assumption}

A few comments on our modeling assumptions are in order. First, we rely on the Frobenius norm of the difference between two consecutive expected adjacency matrices at a change point to quantify the magnitude of the corresponding distributional change. This is a fairly general metric, able to capture both  ``dense'' changes caused by small variations spread across many edge probabilities as well by ``sparse'' changes due to large difference only along few coordinates. Next, the quantity $\kappa_0 \in (0,1]$ appearing in \eqref{eq-as1-frob} measures the  size of the smallest distributional change in the model in a manner that is independent of the choice of the other parameters. Indeed, the terms $\|\Theta (\eta_{k} ) -\Theta (\eta_{k}-1 ) \|_{\mathrm{F}}$'s depend on both the sparsity parameter $\rho$ and the size of the networks $n$. To avoid such confounding, and using the fact that $\max_k \|\Theta(\eta_k) - \Theta(\eta_{k}-1)\|_{\mathrm{F}} \leq n\rho$, setting $\kappa_0$ as in \eqref{eq-as1-frob} yields a scale-free parameter in $(0, 1]$ that is independent of both $\rho$ and $n$.

The model described above is defined by the parameters $\Delta$, $\kappa_0$, $n$ and $\rho$. We adopt a high-dimensional framework whereby $T$ grows unbounded and all the defining parameters are allowed to change as a function of $T$.  The number of change points $K$ also may change with $T$, but since $K \leq \frac{T}{\Delta}$ by definition, we will capture any dependence on $K$ only through $\Delta$.
 We refer to any relationship among all the model parameters  $(\Delta, \kappa_0, n, \rho)$ and $T$ that holds as $T \rightarrow \infty$ as a {\bf scaling}. For ease of readability we will not make the dependence on $T$ explicit in our notation.

 We are concerned with the problem of estimating the unknown number and unknown locations of the change points based on one observation of a sequence $(A(1), \ldots, A(T))$ of adjacency matrices satisfying the above assumptions. More precisely, for a given  scaling  of the model parameters, we aim to construct an estimator of $(\eta_1,\ldots,\eta_{K})$ of the form
\begin{equation}\label{eq:estimator}
(A(1),\ldots,A(T)) \mapsto (\hat{\eta}_1,\ldots, \hat{\eta}_{\hat{K}}) \subset (2,\ldots,T )
\end{equation}
and with $\hat{\eta}_1 < \hat{\eta}_2 <  \ldots <  \hat{\eta}_{\hat{K}}$  satisfying the following notion of localization consistency.

\begin{definition}[Consistent localization]\label{def:consistency}
A change point estimator of the form \eqref{eq:estimator} is consistent if, with probability tending to $1$ as $T \rightarrow \infty$, 
\begin{equation}\label{eq:consistency}
\widehat{K} = K \quad \text{and} \quad \max_{k=1,\ldots,K} | \hat{\eta}_{k} -  \eta_k| \leq \epsilon, 
\end{equation}
where  $\epsilon = \epsilon(T,\Delta,\kappa_0,\rho,n)$ is such that
\begin{equation}\label{eq:consistency.def}
 \frac{\epsilon}{\Delta} \rightarrow  0.
\end{equation}
The term $\epsilon$ is called the {\bf localization error} of the estimator and the sequence  $\left\{ \frac{\epsilon}{\Delta} \right\}$ the {\bf localization rate}.
\end{definition}

Thus, we will deem a change point estimator consistent if, with high probability as the number of time points grows, its localization error is a vanishing fraction of the minimal distance between consecutive change points. 
The limiting  probability (in $T$) of the event in \eqref{eq:consistency}  and the value of the localization error $\epsilon$   depend on the choice of the scaling. For instance, it is intuitively clear that scalings in which all parameters decrease with $T$ will lead to a sequence of change point problems of increasing difficulty.  


Our main goal is to derive conditions on the scaling of the model parameters that guarantee the feasibility of consistent estimation of the change points and to derive computationally efficient estimators that are consistent and in fact optimal, in the sense of achieving the minimax localization rate. Throughout, we will specify any scaling regime among the parameters by expressing them as functions of the quantity 
\begin{equation}\label{eq:SNR}
 \sqrt{\rho} \kappa_0,
\end{equation}
which can be considered as a uniform lower bound on the signal-to-noise ratio for any network change point model satisfying  \Cref{assume:model}. Indeed, the above quantity is the minimal magnitude of the signal jump, namely $\kappa_0 n \rho$, divided by $n\sqrt{\rho}$, which is an upper bound on the total standard deviation of the entries of $A$.


\subsection{List of contributions}
The main theoretical contribution  of the paper is the identification of three regions inside the space of model parameters corresponding to different types of scaling or regimes: (i) an impossibility regimes, where no change point localization algorithm is guaranteed to be consistent (see \Cref{sec-2.1}); (ii) a feasibility regime, described in \Cref{assume:phase}), for which we demonstrate the existence of a polynomial-time, consistent change point estimator (see \Cref{sec-NBS}); and (iii) a subset of (ii), described in \Cref{assume:phase 2}, for which we further show that change point localization can be achieved at a nearly minimax optimal rate, again using a polynomial-time algorithm (see \Cref{sec-local}).  The partition of scaling regimes, represented pictorially in  \Cref{fig:rg}, is relatively sharp, in the sense that regimes (i) and (ii) are only off by any diverging factor in $T$. 

\begin{figure}
\begin{tikzpicture}
\draw (0,0) ellipse (3cm and 2cm);	
\draw (0,0) ellipse (2.5cm and 1.5cm);	
\draw (0,0) ellipse (2cm and 1cm);	
\draw (0,0) ellipse (1cm and 0.5cm);	

\node (A) at (-2, 1) {};
\node (A2) at (-3, 1.4) {};
\node (A3) at (-4, 1.4) {$\substack{\mbox{Infeasible regime}\\ (\Cref{sec-2.1})}$};
\node (B) at (0.3, 0.6) {};
\node (B2) at (3, 1.4) {};
\node (B3) at (4.6, 1.4) {$\substack{\mbox{Consistent localization} \\ (\Cref{sec-NBS})}$};
\node (C) at (-0.3, -0.3) {};
\node (C2) at (-3, -1.4) {};
\node (C3) at (-4.5, -1.4) {$\substack{\mbox{Optimal localization} \\ (\Cref{sec-local}) }$};
\node (D) at (0.2, -0.2) {};
\node (D2) at (3, -1.4) {};
\node (D3) at (4.5, -1.4) {$\substack{\mbox{An example: SBM} \\ (\Cref{section:SBM})}$};
\draw[->]
	(A) edge (A2) (B) edge (B2) (C) edge (C2) (D) edge (D2);
\draw [decorate,decoration={brace, mirror},xshift= 0.3cm,yshift=0pt]
(0,0) -- (0, 1.2) node [black, midway]{};	
\draw [decorate,decoration={brace, mirror},xshift= -0.3cm,yshift=0pt]
(0,0.2) -- (0, -0.8) node [black, midway]{};	
	
\end{tikzpicture}	
\caption{Reading guide.}\label{fig:rg}
\end{figure}

To be specific, our contributions are as follows.
\begin{itemize}
\item We first demonstrate the existence of a phase transition for the problem at hand by giving nearly matching necessary and sufficient conditions on the scaling of the model parameters and $T$ for consistent estimation of the change points. Specifically, under the low signal-to-noise scaling
\begin{equation}\label{eq:phase}
\rho \kappa^2_0 \lesssim \frac{\log(T)}{n\Delta},
\end{equation}
no algorithm is guaranteed to be consistent (in the minimax sense: there exists a change point problem setting compatible with the above assumption such that  any algorithm will have a localization rate uniformly bounded away from $0$). 
On the other hand, if for any $\xi>0$\footnote{In fact,  $\xi$ is allowed to be zero if $n$ diverges with $T$.  More generally, in that case, we may replace the term $\log^{\xi}(T)$ with any other quantity one diverging in $T$.}, 
\begin{equation}\label{eq:phase.consistent}
	\rho \kappa^2_0 \gtrsim   \frac{ \log^{2 + 2 \xi}(T)}{\Delta n},
	\end{equation}
	 we demonstrate a computationally-efficient procedure, called Network Binary Segmentation (NBS) (see \Cref{algorithm:MWBS} below) that is provably consistent. The procedure combines sample splitting with the randomized search strategy implemented in the wild binary segmentation (WBS) algorithm of \cite{fryzlewicz2014wild}. To show the consistency of the NBS we have generalized in non-trivial ways the analysis in \cite{venkatraman1992consistency} to allow for vector- and matrix-valued CUSUM statistics; we believe that such generalization may be applied to other change point detection problems and is of independent interest. 

The NBS is consistent under nearly the weakest possible conditions, since it leads to a vanishing localization rate under the scaling \eqref{eq:phase.consistent} which, save for a $\log^{1+2\xi}(T)$ term, matches the phase transition boundary in \eqref{eq:phase}.    Remarkably, no structural assumptions on the distributions of the networks themselves  are used. Indeed, in deriving the bound \eqref{eq:phase}, we construct a worst-case class of distributions consisting of dynamic networks satisfying stochastic block models. This reveals that, under the scaling in which the NBS is analyzed,  imposing extra network  structural assumptions do not necessarily lead to easier change point detection problems. This is in stark contrast with many other network problems, such as graphon estimation, clustering and testing, where some structural conditions on the edge probabilities are always necessary.
For instance, \cite{GaoEtal2015} showed that, when the number of communities $r$ in a stochastic block model is of order $n$, the minimax lower bound under the normalized mean squared error loss for graphon estimators is of order $1$.  The dynamic version optimality is shown in \cite{pensky2016dynamic}.

\item  In our second set of results, we seek to investigate conditions under which structural assumptions do help with our change point localization task. Towards that end, we introduce additional  assumptions on the model defined in \Cref{assume:model} by requiring that each difference $\Theta (\eta_{k} ) -\Theta (\eta_{k}-1)$, $k=1,\ldots,K$,  be a matrix of rank at most $r \leq n$, an additional parameter that is also allowed to change with $T$. Such low rank condition is relative mild and is satisfied by many instances of the stochastic block model. 
Then, with this assumption in place and under the  stronger  scaling
\begin{equation}\label{minimax.scaling}
\rho \kappa_0^2  \gtrsim \frac{\log^{2 + 2 \xi} (T) }{\Delta}\frac{r}{n},
\end{equation}
we are able to devise a computationally-efficient and consistent change points estimator with localization error of the order
\begin{equation}\label{eq:lr.rate}
\epsilon \lesssim \frac{\log^2(T)}{ \kappa_0^2 n^2\rho}.
\end{equation}

The proposed procedure takes as input the estimates of the change point locations from any reasonable (not necessarily consistent nor optimal) estimator, including the NBS, and further improves their accuracy to deliver the above localization rate. 
At its core, the LR algorithm relies on exactly
$K$ (this, we recall, being the number of change points) separate applications of the universal singular value thresholding  procedure of \cite{Chatterjee2015}. 
%
%
 Furthermore, we show that the localization rate afforded by the LR algorithm, given in \eqref{eq:lr.rate}, is in fact nearly minimax rate-optimal, aside for the $\log^{2}(T)$ term. Interestingly, the expression of the rate \eqref{eq:lr.rate} is essentially identical to the optimal localization rate for covariance and mean change point estimation, adjusted for the differences in the model settings \citep[e.g.][]{WangEtal2017}.

More discussions on the gap between the scalings \eqref{eq:phase.consistent} and \eqref{minimax.scaling}, and on the comparisons with \cite{WangEtal2017} are provided later in the paper.

\item We apply the LR algorithm to the problem of change point detection for sequence of networks from stochastic block models and derive optimal localization rates. For networks without self-loops -- a common feature of network models -- a technical complication arises in treating the expected adjacency matrix from a stochastic model as a low-rank matrix.  When the network has no self-loops, the diagonal entries of the expected adjacency matrix are set to be zero, which in general would prevent the low-rank assumption.  In fact, this complication is often ignored in the existing literature.   In this case, we show that with a very mild additional assumption, we are still able to recover the nearly optimal localization rate \eqref{eq:lr.rate}. In our analysis we borrow tools and ideas from several areas, including change point detection, network analysis and graphon estimation. 
\end{itemize}


The rest of this paper is organized as follows. \Cref{sec-related-work} summarize some of the related literature. In \Cref{section:consistent testing}, we first identify the scalings for which consistent localization is impossible and then present the NBS change point estimator, which we show to be consistent under almost any scaling outside this impossibility regime. In \Cref{sec-local} we develop the more sophisticated algorithm LR, which we then show to be almost minimax rate-optimal under an additional low-rank assumption. We further demonstrate in \Cref{section:SBM} how our procedure is applicable to the dynamic stochastic block model. \Cref{sec:simulations} presents few illustrative simulations that verify the effectiveness of our procedures. Finally, we conclude with more discussions including potential future work directions in \Cref{sec-disc}.  The proofs of our results are presented in the the appendix and supplementary material.

\subsection{Related work}\label{sec-related-work}

Dynamic network is a topical area which is intensely studied across different disciplines.  The relevant papers listed in this section are by no means exhaustive.  Readers may refer to \cite{CarringtonEtal2005}, \cite{GoldenbergEtal2010}, \cite{BoccalettiEtal2014}, \cite{Kolaczyk2017} for more comprehensive reviews. 

In terms of the invariant quantities, most of the existing work focus on a fixed set of nodes across time, but there are also exceptions. For instance, \cite{BarabasiAlbert1999} allowed for time-varying nodes and edges, \cite{Crane2015} assumed infinite population at every time point and allowed for random observations at different time points, to name but a few.  In terms of the network models imposed for every time point, \cite{Snijders2002} explored dynamic exponential random graph models, \cite{TangEtal2013} studied a dynamic version of random dot product models, \cite{HoEtal2011} extended the mixed membership models to a dynamic one, \cite{XuZheng2009}, \cite{SewellChen} among others considered dynamic latent space models, and dynamic stochastic block models have also been extensively studied.  

Among the work on dynamic stochastic block models, \cite{Xu2015} proposed a stochastic block transition model using a hidden Markov-type approach; \cite{XuHero2014} proposed to track dynamic stochastic block models using Gaussian approximation and an extended Kalman filter algorithm; \cite{MatiasMiele2017} integrated a Markov chain determined group labels evolving process; \cite{PenskyZhang:17} exploited kernel-based smoothing techniques dealing with the evolving block structures; \cite{BhattacharyyaChatterjee} focused on time-varying stochastic block model and variants thereof with time-independent community labels, applied spectral clustering on an averaged version of adjacency matrices, and achieved consistent community detection.  \cite{bhattacharjee2018change} dealt with a change point detection problem in a one-change-point stochastic block model sequences and focused on recovering underlying models, which resulted in a cost of sub-optimal change point detection.  \cite{WangEtal2014} used two types of scan statistics investigating change point detection on time-varying stochastic block model sequences, emphasizing testing connectivity matrices changes.  \cite{CribbenYu2017} proposed an eigen-space based statistics testing the community structures changes in stochastic block model sequences.  \cite{LiuEtal2018} proposed a loss function based on the eigen-space to track the changes of the community structures in stochastic block model sequences.  Both \cite{CribbenYu2017} and \cite{LiuEtal2018} have roots in subspace tracking in signal processing, but both lack theoretical justifications. \cite{ChuChen2017} proposed a test statistics for general data type including network sequences, and their method focuses on the testing perspective.  \cite{zhao2019change} provided a two-step algorithm, which first estimates the networks and then uses a moving window to detect change points.  The results thereof are consistent yet optimal. Another consistent yet optimal result on network change point detection problems is derived in Chapter 5 in \cite{Mukherjee}.

\subsection{Notation}\label{sec-notation}
For any $A \in \mathbb R^{n\times n}$, let $A_{ij}$ be the $(i, j)$th entry of $A$, $A_{i*}$ and $A_{*j}$ the $i$th row and $j$th column of $A$.  Let $\kappa_i(A) $ be the $i$th eigenvalue of $A$ with ordering $|\kappa_1(A)| \ge |\kappa_2(A)| \ge \ldots  \ge |\kappa_n(A)|$, and $\|A\|_{\mathrm{op}} = |\kappa_1(A)|$ be the operator norm of $A$.   Let $\|A\|_{\infty} = \max_{1 \leq i, j \leq n}|A_{ij}|$ be the entrywise maximum norm.  
In addition, for any $B \in \mathbb R^{n\times n}$, let $(A,B) =\sum_{1\le i,j\le n} A_{ij}B_{ij}$ be the inner product of $A$ and $B$ in the matrix space, and $\| A\|_{\mathrm{F}} = \sqrt{(A,A)}$ be the Frobenius norm of $A$.   For any vector $v \in \mathbb{R}^p$, let $v_i$ be the $i$th entry of $v$, $\|v\|$ and $\|v\|_{\infty}$ be the $\ell_2$- and entrywise maximum norms of $v$, respectively.  For any set $S$, let $S^c$ be its complement.  

For any positive functions of $n$, namely $f(n)$ and $g(n)$, denote $f(n) \lesssim g(n)$, if there exist constants $C > 0$ and $n_0$ such that $f(n) \leq Cg(n)$ for any $n \geq n_0$; denote $f(n) \gtrsim g(n)$, if $g(n) \lesssim f(n)$; and denote $f(n) \asymp g(n)$, if $f(n) \lesssim g(n)$ and $f(n) \gtrsim g(n)$.

We now recall the definition of cumulative sum (CUSUM) statistic \citep{page1954continuous}.

\begin{definition}[CUSUM statistics]\label{def-cusum}
	For a collection of any type of data $\{X(t)\}_{t=1}^T$, any pair of time points $(s, e) \subset \{0, \ldots, T\}$ with $s < e -1$, and any time point $t = s+1, \ldots, e-1$, let the CUSUM statistics be
	\[
	\widetilde X^{s,e}(t) = \sqrt {\frac{e-t}{(e-s)(t-s)} } \sum_{i=s+1}^t X_i  -\sqrt {\frac{t-s}{(e-s)(e-t)} } \sum_{i=t+1}^e X_i.
	\]
\end{definition}
Since the CUSUM statistic is linear in its arguments, we have that, for any $0 \leq s < t < e \leq T$, 
	\[
	\mathbb{E}(\widetilde A^{s,e}(t))   = \widetilde \Theta^{s,e}(t). 
	\]

\section{Consistent localization}\label{section:consistent testing}

In  this section we study the conditions under which consistent estimation of the change point locations for the model described in \Cref{assume:model} is feasible. Specifically, we derive a phase transition in the space of the model parameters that separates parameter scalings for which there exists some algorithm with a vanishing localization rate from the ones for which no estimator is consistent. To be precise, when we say that consistent localization is impossible for a given scaling, we mean it in a minimax sense that there exists {\it some} change point model satisfying  \Cref{assume:model} for which no estimator of the change points is consistent.

\subsection{The impossibility regime}\label{sec-2.1}
Below we establish an information-theoretic lower bound, which demonstrates that, if 
\begin{equation}\label{eq:no.consistent}
\rho \kappa^2_0 \lesssim  \frac{\log(T)}{n\Delta},
\end{equation}
then no consistent estimator of the change points  exists. The proof constructs two sequences of mixtures of stochastic block models with two communities of all possible sizes that cannot be reliably discriminated under the above scaling, and then employs the convex version of Le Cam's Lemma \citep[see, e.g.][]{yu1997assouad} to conclude that any change point estimator must have a localization rate bounded away from zero. As a by-product of our lower bound construction, we also see that imposing additional structural assumptions on the edge probabilities (such as that of a stochastic block model with a bounded number of communities and therefore low rank) does not necessarily lead to a consistent estimator under the scaling in \eqref{eq:no.consistent}. The details are given in \Cref{sec-lower-proof}.

\begin{lemma}\label{lemma:lower bound testing}
Let $\{A(t)\}_{t=1}^T$ be a sequence of independent inhomogeneous Bernoulli networks satisfying Assumption~1 with $K = 2$ (i.e. there exist two and only two change points).  Let $P_{\kappa_0, \Delta, n, \rho}^T$ denote the corresponding joint distribution.  Consider the class of distributions 
		\[
			\mathcal{P} = \left\{ P^T_{\kappa_0,\Delta,n ,\rho}: \, \Delta = \min \biggl\{\bigg\lfloor \frac{\log(T)}{n\rho\kappa_0^2} \bigg\rfloor, \, \lfloor T/4 \rfloor \biggr\},\, \rho < 1/2, \,\kappa_0 \leq 1 \right\}.
		\]
	Then there exists a $T_0$, such that for all $T \geq T_0$,
		\[
			\inf_{\hat \eta} \sup_{ P \in \mathcal{P}} \mathbb{E}_P(H(\widehat{\eta}, \eta(P))) \geq \Delta/2,
		\] 
		where the infimum is over all estimators $\widehat{\eta} = \{\widehat{\eta_k}\}^{\widehat{K}}_{k=1}$ of the change point locations, $\eta(P)$ is the set of the change points of $P \in \mathcal{P}$ and $H(\cdot,\cdot)$ denotes the Hausdorff distance.
\end{lemma}


\subsection{Network Binary Segmentation}\label{sec-NBS}

In our next result, we show that parameter scalings of the form given in \eqref{eq:no.consistent} are essentially the only ones for which consistent change point estimation is infeasible, thus proving the existence of a phase transition in the space of parameters. In particular, we will derive an algorithm (see \Cref{algorithm:MWBS} below) that will return a consistent estimator provided the following signal-to-noise condition is met.

\begin{assumption}  \label{assume:phase}
	For a constant $C_\alpha>0$ and any $\xi>0$, we have that
	\begin{equation}\label{eq:snr}
\kappa_0 \sqrt{\rho} \ge C_\alpha  \sqrt{\frac{1}{n\Delta}}  \log^{1 + \xi}(T).
\end{equation}
\end{assumption}

Recalling \eqref{eq:no.consistent}, our results cover all  parameter scalings, aside from a $\log^{\xi}(T)$ term, where $\xi > 0$ can be arbitrarily small.
%
When the size of the networks $n$ diverges with $T$, arguably a very natural asymptotic regime, one can take $\xi$ in \Cref{assume:phase} to be zero.  In fact, in this case the  signal-to-noise ratio condition \eqref{eq:snr} can be weakened to be of the form $\kappa_0 \sqrt{\rho} \ge C_\alpha  \sqrt{\frac{1}{n\Delta}}  \log(T) e_T$, for any  sequence of positive numbers $\{ e_T\}_{T=1,2,\ldots}$ diverging to infinity arbitrarily slowly.

To appreciate how \Cref{assume:phase}  is compatible with a broad range of network change point scenarios and is therefore fairly mild, we highlight the following two extreme cases.
\begin{itemize}
\item Assume a non-sparse setting (i.e. $\rho \asymp 1$). If the minimal spacing $\Delta $ is of order $\log^{2+2\xi} (T)$, 
then \Cref{assume:phase} demands that $n\kappa_0 \succeq n^{1/2}$.  This means that the edge probabilities need to change for at least  $\sqrt{n}$ order many nodes.
\item On the other hand, in the sparse setting  where $\rho$ is chosen to be $\log(n)/n$ as in \eqref{eq-assume-sparse}, if $\Delta \asymp T$ (so that the number of change points is bounded), then \Cref{assume:phase} only  requires $\kappa_0$ to be at least of the order  
	\[
	\frac{\log^{1 + \xi}(T)} {\sqrt{ T \log(n)}}.
	\]
	Thus  
	$\kappa_0$ is allowed to vanish with $T$, even for fixed $n$.
\end{itemize}

 We now introduce the procedure Network Binary Segmentation (NBS), detailed in \Cref{algorithm:MWBS}, for consistent estimation under nearly the worst possible scaling of \Cref{assume:phase}.  
	
\begin{algorithm}[!ht]
	\begin{algorithmic}
		\INPUT Two independent samples $\{A(t)\}_{t=1}^{T}, \{B(t)\}_{t=1}^{T} \in \mathbb{R}^{n\times n}$, $\tau_1$.
		\For{$m = 1, \ldots, M$}  
			\State $[s_m', e_m'] \leftarrow [s,e]\cap [\alpha_m,\beta_m]$
			\State $(s_m,e_m) \leftarrow [s_m' + 64^{-1} (e'_m - s'_m),e_m' - 64^{-1} (e_m' - s_m')] $  
			\If{$e_m - s_m \ge 1$}
				\State $b_{m} \leftarrow \arg\max_{t = s_m+1, \ldots, e_m-1}  ( \widetilde A^{s_m,e_m}( t),\widetilde B^{s_m,e_m} (t))$
				\State $a_m \leftarrow ( \widetilde A^{s_m,e_m} (b_m),\widetilde B^{s_m,e_m} (b_m))$
			\Else 
				\State $a_m \leftarrow -1$	
			\EndIf
		\EndFor
		\State $m^* \leftarrow \arg\max_{m = 1, \ldots, M} a_{m}$
		\If{$a_{m^*} > \tau_1$}
			\State add $b_{m^*}$ to the set of estimated change points
			\State NBS$((s, b_{m*}),\{ (\alpha_m,\beta_m)\}_{m=1}^M, \tau_1)$
			\State NBS$((b_{m*}+1,e),\{ (\alpha_m,\beta_m)\}_{m=1}^M,\tau_1)$
			
		\EndIf  
		\OUTPUT The set of estimated change points.
		\caption{Network Binary Segmentation. NBS$((s, e),$ $\{ (\alpha_m,\beta_m)\}_{m=1}^M, \tau_1$)} \label{algorithm:MWBS}
	\end{algorithmic}
\end{algorithm}

The NBS is a novel algorithm that builds on the traditional machinery developed for the univariate mean change point detection problem.
	 The cornerstones of the NBS are the CUSUM statistics $\widetilde{A}^{s_m, e_m}(t)$ and $\widetilde{B}^{s_m, e_m}(t)$ (see \Cref{def-1}). However, instead of searching for the maximum CUSUM statistics directly, as it is traditionally done in the binary segmentation and its more modern variants \citep[see, e.g.][] {vostrikova1981detection,fryzlewicz2014wild, wang2016high}, the NBS maximizes the inner product of two CUSUM statistics based on two independent samples.  This is due to the fact that each entry of the adjacency matrix is a Bernoulli random variable, and for any Bernoulli random variable $X$, it holds that $X^2 = X$.  As a result,  $\|\widetilde{A}^{s_m, e_m}(t)\|_{\mathrm{F}}^2$ cannot serve as a good estimator of $\|\widetilde{\Theta}^{s_m, e_m}(t)\|_{\mathrm{F}}^2$.  In practice, these two independent samples can be acquired by splitting the data into, say, odd and even time points.	 
	 	  In addition, every random interval $(s_m', e_m')$ provided to the algorithm is shrunk by a constant fraction of its original length.  This is done in order to avoid false positives around newly-found change points, a correction usually performed in  WBS-style algorithm: see, e.g., the parameter  $\delta$ used in Algorithm~3 in \cite{WangEtal2017} and the parameter $\beta$ used in Algorithm~4 \cite{wang2016high}.  Note that in our paper, however, the amount of shrinking does not depend on unknown quantities.
	
An interesting and possibly surprising feature of the NBS algorithm is that it merely relies on network CUSUM statistics -- weighted sample averages of adjacency matrices (see \Cref{def-cusum}) -- and does not rely on any  network or graphon estimation procedures, which are computationally costly. Though the NBS is not estimating any network parameters at all, 
it is still able achieve consistent network change point detection for a fairly large  class of models in a fast fashion.
In our next result we show that the NBS yields in fact a consistent estimator the change points.

\begin{theorem}\label{thm-1}
Assume the model described in \Cref{assume:model} and the condition of \Cref{assume:phase}. There exist absolute positive constants $C_R > 3/2$, $C_\beta$, $c_2 \in (0,1)$, $c$, $c_T$ and $C_1$ such that, letting $\{(\alpha_m,\  \beta_m) \}_{m=1}^M$  $\subset (0, T)$ be a collection of random intervals whose end points are drawn independently and uniformly from $\{1,\ldots, T\}$ and such that
		\begin{equation}\label{eq-thm1-statement-cr}
			\max_{m = 1, \ldots, M} (\beta_m -\alpha_m)\le C_R \Delta,
		\end{equation}  
		and
\begin{equation}\label{eq:input.parameter}
			C_\beta\rho n \log^{3/2}(T)< \tau <c_2\kappa_0^2n^2\rho^2 \Delta 
		\end{equation}		
guarantees that the collection of the estimated change points $\mathcal B=\{\hat \eta_k\}_{k=1}^{\widehat K}$	returned by the NBS procedure with input parameters $(0, T)$, $\{(\alpha_m,\beta_m)\}_{m=1}^M$ and $\tau$ will satisfy	
		\begin{align}\label{eq-thm1-result}
			& \mathbb{P}\Bigl\{  \widehat K =K ; \quad \max_{k=1,\ldots,K} |\eta_k-\hat \eta_k| \le  \epsilon \Bigr\} \ge  1 -\exp\left( \log\left(\frac{T}{\Delta}\right)-M\frac{\Delta}{4C_R T} \right) -  (6T^{3-c_T}+ 2T^{3-c}),
		\end{align}
where 
\begin{equation}\label{eq:epsilon.thm1}
			\epsilon = C_1 \log(T)\left( \frac{\sqrt \Delta }{\kappa_0n\rho } + \frac{ \sqrt{ \log(T)} }{\kappa_0^2n\rho}\right).
		\end{equation}
\end{theorem}

The constants in the theorem statement and their hierarchy of dependencies can be explicitly tracked in the proof; in particular, we require that the signal-to-noise ratio constant $C_\alpha$ in \Cref{assume:phase} to be sufficiently large. See the remark at the beginning of the proof of \Cref{thm-1} in \Cref{app:main-proof}.

To see how \Cref{thm-1} implies that the NBS is consistent according to \Cref{def:consistency}, we plug in the inequalities 
\[
\sqrt{\rho} \kappa_0 \geq \frac{C_\alpha \log^{1 + \xi} (T)}{\sqrt{n\Delta}} \quad \text{and} \quad \rho \geq \frac{\log (n)}{n},
\] 
stemming from Assumptions~\ref{assume:phase} and \ref{assume:model}, respectively, into the bound \eqref{eq:epsilon.thm1} on the localization error to get that
	\begin{align}
	\frac{\epsilon}{\Delta} & =C_1 \log(T)\left( \frac{\sqrt \Delta }{\kappa_0n\rho } + \frac{ \sqrt{ \log(T)}}{\kappa_0^2n\rho}\right) \frac{1}{\Delta}\nonumber \\
	& \leq C_1\left(\frac{1}{ C_{\alpha} \sqrt{\log (n)} \log^{\xi}(T) } + \frac{1}{C_{\alpha}^2 \log^{1/2 + 2\xi}  (T)}\right)  \rightarrow 0 \label{eq:slow.rate},
	\end{align}
	as	$T \to \infty$ (with all the remaining parameters also possibly changing in accordance to any scaling compatible with  \Cref{assume:phase}). The last expression also shows that, if $n$ diverges as $T$ grows unbounded, the parameter $\xi$ can be taken to be $0$ in \Cref{assume:phase} and consistent localization is still guaranteed.
More interestingly,  \eqref{eq:slow.rate}  continues to hold also when $n \asymp 1$, so that consistent localization is possible even when the number of nodes remains bounded. Of course, this is in striking contrast with the problem of consistent estimation of the edge probabilities -- or, more generally, of an underlying graphon -- which requires $n \rightarrow \infty$. 


We remark that, while \Cref{thm-1} shows that the NBS algorithm is consistent, we make no claim as to whether the localization rate is optimal.   In the next  section we will propose a two-step algorithm for change point localization that is not only consistent but nearly minimax rate-optimal under more favorable scalings on the parameters than the ones considered in \Cref{thm-1}. 


We conclude this section with few technical remarks on the assumptions of \Cref{thm-1}. In order for the NBS algorithm to be consistent, the threshold parameter $\tau$ needs to belong to an appropriate range: see \eqref{eq:input.parameter}. Such choice essentially guarantees that $\tau$ is both large enough to avoid false positives and small enough to never miss any true change points, both events occurring with high probability. Next,
the condition in \eqref{eq-thm1-statement-cr} requires that each of the random intervals fed to the NBS algorithm is not too large, compared to the minimal spacing parameter $\Delta$.  Without assuming \eqref{eq-thm1-statement-cr}, and using the trivial bound $C_R \leq T/\Delta$, it can be shown that the NBS will achieve a larger localization error of
\[
			\epsilon = C_1 \log(T)\left( \frac{\sqrt \Delta }{\kappa_0n\rho } + \frac{ \sqrt{ \log(T)} }{\kappa_0^2n\rho}\right) \left(\frac{T}{\Delta}\right)^2,
		\]
		under the scaling 
\[
			\kappa_0 \sqrt{\rho} \ge C_\alpha  \sqrt{\frac{1}{n\Delta}}  \log^{1 + \xi} (T) \sqrt{\frac{T}{\Delta}},
		\]
		which is stronger than the one in \Cref{assume:phase}. Assumption \eqref{eq-thm1-statement-cr} about the length of the random time intervals used as input to the algorithms is of somewhat technical nature, but it appears necessary to yield the localization error in \eqref{eq:epsilon.thm1}. Indeed, this condition,  or analogous ones requiring some knowledge of  $\Delta$, are commonly assumed  in the literature for change point localization to derive theoretical guarantees for WBS-style methods: see, e.g., \cite{fryzlewicz2014wild}, \cite{wang2016high},  \cite{wang2018univariate}, \cite{baranowski2019narrowest}, \cite{anastasiou2019detecting} and \cite{eichinger2018mosum}.
		Finally, the parameter $M$,  the number of random intervals used by the procedure, affects the results  through the probability lower bound in \eqref{eq-thm1-result}.  In order to guarantee that the probability tends to 1, one needs that 
		\[
			M \gtrsim \frac{T}{\Delta}\log\left(\frac{T}{\Delta}\right).
		\]

 \section{Optimal localization}\label{sec-local}

In the previous section we saw how the NBS algorithm  can consistently estimate the locations of the change points for the dynamic network model of \Cref{assume:model} under nearly any scaling for which this task is feasible, albeit possibly not in an optimal manner.  In this section, we are to show that under stronger, but still fairly general, conditions on both the model and the scaling, a two-step procedure that first applies the NBS and then refines the resulting estimators of  the locations of the change points, will achieve a minimax optimal localization rate.  The additional step beyond the NBS is named local refinement (LR) and is detailed in \Cref{algorithm:RI}. 


	
\begin{algorithm}[!ht]
	\begin{algorithmic}
		\INPUT Symmetric matrix $A \in \mathbb{R}^{n \times n}$, $\tau_2, \tau_3 > 0$.
		\State $(\kappa_i(A), v_i) \leftarrow $ the $i$th eigen-pair of $A$, with $|\kappa_1(A)| \geq \cdots |\kappa_n(A)|$
		\State $A' \leftarrow \sum_{i:|\kappa_i (A)|\ge\tau_2} \kappa_i(A) v_iv_i^{\top}$
		\State $\usvt(A, \tau_2,\tau_3) \leftarrow (A''_{ij})$ with
			\[
				(A'')_{ij} \leftarrow \begin{cases}
						(A')_{ij}, & \text{if} \quad |(A'_{ij})| \le \tau_3\\
					\sign ((A')_{ij})\tau_3, & \text{if} \quad |(A'_{ij})| > \tau_3\\
				\end{cases}
			\]
		\OUTPUT $\usvt(A, \tau_2,\tau_3)$.
		\caption{Universal Singular Value Thresholding. $\usvt(A, \tau_2,\tau_3)$}
		\label{algorithm:USVT}
	\end{algorithmic}
\end{algorithm} 

\begin{algorithm}[!ht]
	\begin{algorithmic}
		\INPUT $\{A(t)\}_{t=1}^{T}, \{B(t)\}_{t=1}^{T} \in \mathbb{R}^{n\times n}$, $\tau_2, \tau_3$, $\{ \nu_{k}\}_{k=1}^{K} \subset \{1, \ldots, T-1\}$, $ \nu_0 =1$, $\nu_{K+1} =T+1$.		
		\For{$k = 1, \ldots, K$}  
			\State $[s,e] \leftarrow [2^{-1}(\nu_{k-1} + \nu_{k}) , \, 2^{-1}(\nu_{k} + \nu_{k+1})] $
			\State $\widetilde{\Delta}_k \leftarrow \sqrt {\frac{(e-\nu_k)(\nu_k-s)}{e-s} }  $
			\State $\widehat \Theta_k \leftarrow \usvt(\widetilde B^{s , e} ( \nu_k) , \tau_2,\tau_3\widetilde{\Delta}_k)$
			\State $b_k \leftarrow \argmax_{s\le t \le e  } ( \widetilde A^{ s , e} (t),\widetilde \Theta_k)$
		\EndFor
		\OUTPUT $\{b_k\}_{k=1}^K$.
		\caption{Local Refinement}
		\label{algorithm:RI}
	\end{algorithmic}
\end{algorithm}


The LR algorithm takes as input  two identically distributed sequences of networks fulfilling \Cref{assume:model} (obtained for instance by sample splitting), along with a sequence $\{ \nu_k \}_{k=1}^K$ of initial change point estimates that are sufficiently close to the locations of the true change points, in way made precise in \eqref{eq:pre estimate} below. In particular, this preliminary estimates may be computed on the same data. The procedure then inspects all the triplets of consecutive change point estimators one at a time (with the time points $1$ and $T+1$ as two dummy change points, for notational consistency). For each such triplet, the LR utilizes the  universal singular value thresholding (USVT) algorithm \citep{Chatterjee2015} to construct a more accurate estimator of a local CUSUM matrix of the expected adjacency matrix at the middle point estimator. This estimator is in turn used to probe nearby locations in order to refine the original estimator of the location of the middle change point location. This results in a provably more precise estimator of that location.  From a computational standpoint,  \Cref{algorithm:RI} is parallelizable in the sense that we can deal with each $k\in \{1, \ldots, K\}$ separately. 

The signal-to-noise ratio conditions under which the LR improves upon the NBS are stronger than the ones that guarantee consistency of the latter, and are imposed in order to ensure that the USVT procedure is effective \citep[see, e.g.][]{Xu2017}. We formalize them next.


\begin{assumption}\label{assume:phase 2}
	Let $\{\Theta(t)\}_{t=1}^T$ be defined as in \Cref{assume:model}. For some $0 < r \leq n$,
		\[
			\max_{k=1,\ldots,K} \rank \left( \Theta (\eta_{k} ) -\Theta (\eta_{k}-1 ) \right) \le r. 
		\]
Furthermore, for a constant $C_\alpha>0$ and any $\xi>0$,
		\begin{equation}\label{eq:scaling.rank}
			  \kappa_0 \sqrt{\rho}  \ge C_\alpha \frac{\log^{1 + \xi} (T)}{\sqrt{\Delta}}\sqrt{\frac{r}{n}}.
		\end{equation}
\end{assumption}

The fixed quantity  $\xi>0$ in in the previous assumtion is required only for the case of $r \asymp n \asymp 1$ and can be set to zero in all other scenarios. The parameter $r$ controlling the maximal rank of the difference of consecutive expected adjacency matrices is, like all the other parameters, also allowed to change with $T$.
The first condition in \Cref{assume:phase 2} is about the model itself and requires that, in addition to all the properties listed in \Cref{assume:model}, the difference between any two different consecutive expected adjacency matrices is of low rank. 
Using the fact that,  for any matrices $A, B \in \mathbb{R}^{n \times n}$ of rank $r_1$ and $r_2$ respectively, it holds that 
	\[
		\rank(A - B) = \min\{r_1 + r_2, \, n\},
	\]
we see that \Cref{assume:phase 2} indirectly constraints the ranks of $\{\Theta(t)\}_{t=1}^T$. In particular, if $\Theta(\eta_k)$ and  $\Theta(\eta_{k-1})$ are the expected adjacency matrices of stochastic block models with $M_1$ and $M_2$ communities respectively, then $\rank \left( \Theta (\eta_{k} ) -\Theta (\eta_{k-1} ) \right) \le \min\{M_1 + M_2, \, n\}$. 

 \Cref{assume:phase 2} is compatible with  a broad range of parameter scalings. Focusing on the rank parameter, we highlight two extreme cases.
\begin{itemize}
\item  When $r \asymp 1$, the scaling \eqref{eq:scaling.rank} match the one in \Cref{assume:phase}. 
\item On the other hand, if the change points are far from each others so that $\Delta \asymp T$ and again $\kappa_0 \sqrt{\rho}  \asymp n^{-1/2}$, then as long as $r \lesssim  T\log^{-(2 + 2 \xi)} (T)$, then \Cref{assume:phase 2} holds.  This includes the situation where $T\log^{-(2 + 2 \xi)}(T) \geq n$, which essentially leaves the order of magnitude of $r$ unconstrained (though, of course, necessarily, $r \leq n$.)
\end{itemize}


\subsection{Upper and lower bounds on the localization error}
\label{subsection:LR upper}
The next theorem derives improved localization rates for the LR procedure under and is the main result of this section.

\begin{theorem}\label{theorem:localization 1}
Assume the model described in \Cref{assume:model} and the conditions of \Cref{assume:phase 2}.
There exist absolute positive constants $C$, $C_\epsilon$, $C_2$ and $C_3$ such that if $\{ \nu_k \}_{k=1}^K \subset (2, \ldots, T)$ is an increasing sequence satisfying
\begin{equation}\label{eq:pre estimate}
 \max_{k=1,\ldots,K} |\nu_k-\eta _k| <\Delta /6,
 \end{equation}
 then  the collection of the estimated change points
	$\mathcal B=\{\widehat  \eta_k \}_{k=1}^{ K}$	returned by the LR procedure with input
	parameters  $(0, T)$, $\{ \nu_k \}_{k=1}^K$,
\begin{align*}
		\tau_2 = (3/4)(C\sqrt{n\rho} + C_{\varepsilon}\log(T))
		\ \text { and } \
		\tau_3  =\rho,
	\end{align*}    is such that 
\begin{align*}
		& \mathbb{P}\Bigl\{   \max_{k=1,\ldots,K} |\eta_k-\widehat  \eta_k| \le \epsilon \Bigr\} \geq 1 - 2T^{3-3C_{\varepsilon}/4} - 4T^{3-3C_3^2/8},
	\end{align*}
	where
	\begin{equation}\label{eq:loc.rate.2}
\epsilon =  C_2 \frac{\log^2(T)}{ \kappa_0^{2}n^{2}\rho}.		
	\end{equation}

%
\end{theorem}

The proof of \Cref{theorem:localization 1} is given in \Cref{app:main-proof}. The values and dependence among the constants can be tracked throughout and, just like with \Cref{thm-1}, demand that the constant $C_\alpha$ in the signal-to-noise ratio condition \eqref{eq:scaling.rank} is chosen large enough. 

It is immediate to see that \Cref{theorem:localization 1} offers stronger consistency guarantees than \Cref{thm-1}. Indeed, using \Cref{assume:phase 2} along with the assumption that $\rho \geq \frac{\log (n)}{n}$, we see that the localization rate implied by \eqref{eq:loc.rate.2} is 
\begin{equation}\label{eq:rate.fast}
\frac{\epsilon}{\Delta} \leq \frac{1}{ C_\alpha^2 \log^{2 \xi}(T) r \log n } \rightarrow 0,
\end{equation}
as $T \rightarrow \infty$. This upper bound on the localization error is of smaller order than the one in \eqref{eq:slow.rate} afforded by \Cref{thm-1}. 
Furthermore, as remarked above, change point consistency is still guaranteed even as $n \asymp 1$. On the other hand, if $n$ is diverging in $T$, we may set $\xi = 0$ in  \Cref{assume:phase 2}.

To gain a further appreciation for the type of improvement  \Cref{theorem:localization 1} delivers over \Cref{thm-1}, assume that $r \asymp n$. Then, according to \Cref{thm-1}, in order for the NSB procedure to yield the same localization error as in \Cref{theorem:localization 1} it appears necessary to strengthen the signal-to-noise ratio requirement to be 
	\[
			\kappa_0\sqrt{n \rho \Delta} \gtrsim \sqrt{n} \log^{1 + \xi}(T)
		\]	
		instead of just $\kappa_0\sqrt{n \rho \Delta} \gtrsim \log^{1 + \xi}(T) $.

In addition to \Cref{assume:phase 2}, \Cref{theorem:localization 1} further requires that the sequence $\{ \nu_k \}_{k=1}^K$  of preliminary estimates used as an input to the  procedure to be within a constant fraction of $\Delta$ from the true change points; see \eqref{eq:pre estimate}. Notice that this assumption may be satisfied even if the ratio $\max_{k=1,\ldots,K} |\nu_k-\eta _k|$ is not a vanishing fraction of $\Delta$, thus failing to fulfill \Cref{def:consistency}. Of course, the change point estimators obtained using the NBS algorithm satisfy \eqref{eq:pre estimate} with high probability and for all large enough $T$, as demonstrated above in  \Cref{thm-1}, and therefore can be used as inputs to the LR algorithm.

Finally, the choices of threshold parameters $\tau_2$ and $\tau_3$ stem from the analysis of the USVT procedure for network estimation in \cite{Xu2017}. In particular, the parameter $\tau_2$ serves as a cutoff for the upper bound of the operator norm difference between the sample and population version of certain matrices of interest.

 \vskip 3mm
 
 In the second result of the section we prove that the localization rate demonstrated in \Cref{theorem:localization 1} is nearly minimax optimal, save for a term poly-logarithmic in $T$.


\begin{lemma}\label{lem-3.3-lower}
Let $\{A(t)\}_{t=1}^T$ be a sequence of independent inhomogeneous Bernoulli networks satisfying \Cref{assume:model} with $K = 1$ (i.e. there exists one and only one change point).  Let $P_{\kappa_0, \Delta, n, \rho}^T$
		denote the corresponding joint distribution. Consider the class of distributions 
		\[
			\mathcal Q = \left \{P_{\kappa_0, \Delta, n, \rho}^T: \kappa_0 \le 1/2, \, \rho \leq 1/2   \right\} .
		\]
	Then,
		\[
			\inf_{\hat \eta} \sup_{P \in \mathcal Q} \mathbb{E}_P (| \hat \eta -\eta |)\ge \max \{c\kappa_0^{-2}n^{-2}\rho^{-1}, 1/2\}.
		\]
\end{lemma}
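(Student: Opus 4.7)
My plan is to apply Le Cam's two-point method to a pair of carefully chosen change-point models. Specifically, I will construct $P_0, P_1 \in \mathcal{Q}$ whose change points $\eta_0, \eta_1$ are separated by $d$ time units (with $d$ of the same order as the claimed localization rate) and bound the Kullback--Leibler divergence $\mathrm{KL}(P_0 \| P_1)$ by an absolute constant; a standard two-point reduction then delivers the minimax lower bound.

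\textbf{Construction.} Fix $\kappa_0, \rho \leq 1/2$ and $n \geq 2$. Let $J \in \mathbb{R}^{n \times n}$ be the all-ones matrix and take any symmetric sign matrix $S \in \{-1, +1\}^{n \times n}$ with $\|S\|_{\mathrm{F}} = n$. Define $\Theta^{(a)} = (\rho/2) J$ and $\Theta^{(b)} = \Theta^{(a)} + \kappa_0 \rho S$. The bounds $\kappa_0, \rho \leq 1/2$ ensure that $\Theta^{(b)} \in [0, \rho]^{n \times n}$, and by construction $\|\Theta^{(b)} - \Theta^{(a)}\|_{\mathrm{F}} = \kappa_0 n \rho$, the minimal jump size permitted by \Cref{assume:model}. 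Pick $\eta_0$ in the interior of $\{1, \ldots, T\}$ (for instance $\eta_0 = \lfloor T/3 \rfloor$) and set $\eta_1 = \eta_0 + d$, with $d$ chosen below. Let $P_j$ ($j \in \{0,1\}$) denote the joint law of $T$ independent inhomogeneous Bernoulli networks with mean $\Theta^{(a)}$ for $t \leq \eta_j$ and $\Theta^{(b)}$ for $t > \eta_j$. Both $P_0, P_1$ lie in $\mathcal{Q}$, with the minimum spacing $\Delta$ satisfied for appropriately small $d$ and large $T$.

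\textbf{KL bound and Le Cam.} Since $P_0$ and $P_1$ differ only on the time points $\eta_0 + 1, \ldots, \eta_1$, and using $\mathrm{KL}(\mathrm{Bern}(p) \| \mathrm{Bern}(q)) \leq (p - q)^2 / (q(1-q))$ with $q = \rho/2$ (so $q(1-q) \geq \rho/4$), we get
\[
\mathrm{KL}(P_0 \| P_1) \;=\; d \sum_{i \leq j} \mathrm{KL}\bigl(\mathrm{Bern}(\Theta^{(b)}_{ij}) \,\big\|\, \mathrm{Bern}(\Theta^{(a)}_{ij})\bigr) \;\leq\; \frac{4 d}{\rho} \, \|\Theta^{(b)} - \Theta^{(a)}\|_{\mathrm{F}}^2 \;=\; 4 d \kappa_0^2 n^2 \rho.
\]
Choosing $d = \lfloor c_0 / (\kappa_0^2 n^2 \rho) \rfloor$ for a sufficiently small absolute constant $c_0 > 0$ forces $\mathrm{KL}(P_0 \| P_1) \leq 1/4$ and hence, by Pinsker, $\mathrm{TV}(P_0, P_1) \leq 1/(2\sqrt{2})$. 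For any estimator $\hat\eta$, the events $\{|\hat\eta - \eta_0| \geq d/2\}$ and $\{|\hat\eta - \eta_1| \geq d/2\}$ cover the sample space, so combining Markov's inequality with the TV bound yields
\[
\max_{j \in \{0,1\}} \mathbb{E}_{P_j}\bigl[|\hat\eta - \eta_j|\bigr] \;\geq\; \frac{d}{4}\bigl(1 - \mathrm{TV}(P_0, P_1)\bigr) \;\gtrsim\; d \;\asymp\; \frac{1}{\kappa_0^2 n^2 \rho}.
\]
The $1/2$ floor in the claimed maximum is handled by the same supremum over $\mathcal{Q}$: any choice of $(\kappa_0, \rho, n)$ for which the displayed rate already exceeds $1/2$ (e.g. $\kappa_0$ small enough) shows the floor without any further work.

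\textbf{Main obstacles.} The essential work is (i) verifying that the two constructed laws genuinely belong to $\mathcal{Q}$---the spacing constraint $\Delta$ and $\rho n \geq \log(n)$ are easy to arrange with $T$ large---and (ii) tracking absolute constants through the KL-to-TV-to-risk chain so that the final $c$ in the lower bound is explicit. Neither is essentially more delicate than the calculations already used to prove \Cref{lemma:lower bound testing}. I note in passing that the argument actually produces the sharper rate $1/(\kappa_0^2 n^2 \rho)$, matching the upper bound in \Cref{theorem:localization 1} up to the logarithmic factor, so the exponent of $\rho$ displayed in the statement appears to be a typo for $\rho^{-1}$.
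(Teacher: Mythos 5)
Your proposal is correct and follows essentially the same route as the paper's own proof: a two-point Le Cam argument with two change points separated by $d \asymp 1/(\kappa_0^2 n^2 \rho)$, a product-Bernoulli KL bound of order $d\kappa_0^2 n^2\rho$, and Pinsker's inequality (the paper uses a uniform entrywise shift $\rho/2 \to \rho/2 + \kappa_0\rho$ rather than a sign-matrix perturbation, but this is immaterial). Your closing observation is also consistent with the paper: its proof concludes with the bound $\tfrac{1}{16\kappa_0^2 n^2\rho}$, i.e. the rate $\kappa_0^{-2}n^{-2}\rho^{-1}$ matching \Cref{theorem:localization 1}, so the $\rho$ in the displayed statement is indeed a typo for $\rho^{-1}$.
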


The family of distributions $\mathcal{Q}$ allows for a wide range of changes. Indeed, the constraints that $\kappa_0 \leq 1/2$ is fairly general and,  in particular,  include the challenging scenario where all edge probabilities change at the change points.   The constant $1/2$ is arbitrary and can be replaced by any constant between 0 and 1.

 \subsection{Sparse stochastic block model}\label{section:SBM}
 
In \Cref{theorem:localization 1} we show that, for network models with rank constraints, combining  the NBS and the LR algorithms yields nearly optimal localization under the low rank assumption and the scaling described in \Cref{assume:phase 2}.  Low rank network models include a wide range of common network models, e.g. the Erd\H{o}s--R\'enyi random graph model \citep{ErdosRenyi1959}, stochastic block models \citep[e.g.][]{HollandEtal1983} and random dot product models \citep{YoungScheinerman2007}.  However, in  these models, it is  often also assumed that no self-loops are allowed, i.e.~the diagonal entries of the adjacency matrices are always $0$. As a result, the low rank assumption no longer holds. In this section we show that, for the case of stochastic block models,  this  issue can be overcome and that the guarantees of  \Cref{theorem:localization 1} hold also in this case.
For completeness, we include the definition of a sparse stochastic block model and some of its properties.  
 
\begin{definition}[Sparse Stochastic Block Model]\label{def-ssbm}
	A network is from a sparse stochastic block model with size $n$, sparsity parameter $\rho$, membership matrix $Z \in \{ 0,1\}^{n \times s}$ and connectivity matrix $Q \in [0, 1]^{r \times r}$, if the expected adjacency matrix  satisfies 
		\[
			\mathbb{E} (A)= \rho ZQZ^{\top}- \mathrm{diag}\bigl(\rho ZQZ^{\top}\bigr) .
		\]	
 		Each of the rows of the membership matrix $Z$ contains only one non-zero entry; moreover, $Z$ is a column full rank matrix, i.e.~$\mathrm{rank}(Z) =r $.  In particular, 
	$\mathrm{rank}(ZQZ^{\top}) \leq r$, with identity holding when $Q$ is a full rank matrix.
\end{definition}

In order to accommodate for the lack of self-loops we rely on a new set of conditions, described next. 

\begin{assumption}\label{assume:model 2}
	Let $\{A(t)\}_{t=1}^T \in \mathbb{R}^{n\times n}$ be a sequence of independent adjacency matrices satisfying the dynamic network model of \Cref{assume:model}.  
	Assume that, for all $k =1,\ldots,K$,  
		\[
			\Theta (\eta_{k} ) -\Theta (\eta_{k}-1 ) =\Lambda(k) -\diag(\Lambda(k)),
		\]
 		where $\Lambda(k) =Z_kQ_kZ_k^{\top}$,  $Z_k$ is a membership matrix such that $\rank (Z_k) \le r$, and $Q_k $ is a connectivity matrix.  
 		Furthermore, for a constant $C_\alpha>0$ and any $\xi>0$,
		\[
			\kappa_0 \sqrt{\rho}  \ge C_\alpha  \frac{\log^{1 +  \xi}(T)}{\sqrt{\Delta}}\sqrt{\frac{r}{n}}.
		\]
\end{assumption} 

 \Cref{assume:model 2} differs from \Cref{assume:phase 2} only in the how it constraints the difference of the expected adjacency matrices. Indeed, under \Cref{assume:model 2}, $\Theta(\eta_k) -\Theta(\eta_{k}-1) $ is typically not a low rank matrix, and therefore \Cref{assume:phase 2} would not hold.  Aside from this, the signal-to-noise condition is identical in the two sets of assumptions. 


Now, unlike in the problem of recovering the community assignment in a stochastic block model, where zeroing out the diagonal entries of the low rank matrix corresponding to the expected adjacency matrix is essentially inconsequential, in the localization problem this is not the case.  To see this, observe that if  the time interval $(s+1,\ldots,e)$ contains one change point $\eta_k$, then for $t \in (s+1,\ldots, e-1)$,
	\[
		\widetilde \Theta^{s, e} (t) = \begin{cases}
			\sqrt {\frac {t-s}{(e-s)(e-t)} }(e-\eta_k) (\Lambda(k) -\diag(\Lambda(k)) ), & \text{if } t\le \eta_k, \\
			\sqrt {\frac {e-t}{(e-s)(t-s)} }(\eta_k-s) (\Lambda(k) -\diag(\Lambda(k)))  , &  \text{if } t\ge\eta_k.
		\end{cases}
	\]
	In particular, at $t=\eta_k$,
	\[
		\left \| \sqrt {\frac {(t-\eta_k)(e-\eta_k) }{(e-s)} } \diag(\Lambda(k))  \right\|_{F} \lesssim \rho\sqrt{n} \sqrt {\min\{ e-\eta_k,\eta_k-s\}}  ,
	\]
a quantity that depends on the spacing between change points and may potentially be quite large. In order to handle such issue we make the following assumption.


\begin{assumption}\label{assume:SBM frobenius norm} 
For each $k=0,\ldots,K$,  set
		\[
			\Theta (\eta_{k} )  =\Gamma(k) -\diag(\Gamma(k)),
		\]
 		where $\Gamma(k) = Z'_kQ'_kZ_k^{'\top}$,  $Z'_k$ is a membership matrix and $Q'_k $ is a connectivity matrix.  
 	For an absolute constant $C_{\Gamma} > 0$, it holds that  
 	\[
 	\|\Gamma(k)\|_\mathrm{F} \ge C_{\Gamma} \| \diag (\Gamma(k))\|_{\mathrm{F}}.
 	\]
\end{assumption}

Since $\|\Gamma(k)\|_\mathrm{F}$ is of order no larger than $\rho n$ and $ \| \diag (\Gamma(k))\|_{\mathrm{F}}$ is of order no larger than $\rho\sqrt{n}$, overall \Cref{assume:SBM frobenius norm} is a mild condition. Of course, if $ \Gamma(k)$ is a diagonally-dominant matrix, then it is unclear how to estimate $\Gamma (k) $ because in the no-self-loop networks, the diagonals of the adjacency matrices are always $0$.  

\begin{theorem}\label{thm-sbm-main} 
	In \Cref{theorem:localization 1}, if \Cref{assume:phase 2} is replaced by \Cref{assume:model 2} and \Cref{assume:SBM frobenius norm}, then the same conclusion still holds.
\end{theorem}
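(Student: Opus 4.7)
The plan is to follow the five-step scaffolding of the proof of \Cref{theorem:localization 1} line by line, recycling Steps 1, 2, 4 and 5 essentially unchanged, and concentrating all new work in Step 3, which is the only place in the original argument where the low-rank structure of $\Theta(\eta_k)-\Theta(\eta_{k-1})$ is used. Under \Cref{assume:model 2} I would decompose the population CUSUM as
\[
\widetilde \Theta^{s,e}(t) \;=\; \widetilde \Theta_\Lambda^{s,e}(t) \;-\; \widetilde D^{s,e}(t),
\]
where $\widetilde \Theta_\Lambda^{s,e}(t)$ is the CUSUM of the sequence whose only jump on $[s,e]$ is $\Lambda(k)$ (hence of rank at most $r$) and $\widetilde D^{s,e}(t)$ is the analogous diagonal CUSUM built from $\diag(\Lambda(k))$. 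Combining $\kappa_k=\|\Lambda(k)-\diag(\Lambda(k))\|_{\mathrm F}$ with \Cref{assume:SBM frobenius norm} gives the basic book-keeping bounds $\|\diag(\Lambda(k))\|_{\mathrm F}\le \kappa_k/(C_\Lambda-1)$ and $\|\Lambda(k)\|_{\mathrm F}\le C_\Lambda\kappa_k/(C_\Lambda-1)$, which make the diagonal correction a small fraction of the signal whenever $C_\Lambda$ is large.

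In the revised Step 3 I would apply the USVT lemma to $\widetilde A^{s,e}(\nu_k)$ but now targeting $\widetilde \Theta_\Lambda^{s,e}(\nu_k)$, which has rank at most $r$. The effective noise becomes
\[
\widetilde A^{s,e}(\nu_k)-\widetilde \Theta_\Lambda^{s,e}(\nu_k)\;=\;\bigl[\widetilde A^{s,e}(\nu_k)-\widetilde\Theta^{s,e}(\nu_k)\bigr]\;-\;\widetilde D^{s,e}(\nu_k),
\]
whose operator norm is controlled by the random-piece bound of Proposition (operator norm) plus the deterministic bound $\|\widetilde D^{s,e}(\nu_k)\|_{\op}\le \widetilde\Delta_k\|\diag(\Lambda(k))\|_{\op}$. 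A triangle inequality then transfers the USVT estimate from $\widetilde\Theta_\Lambda^{s,e}(\nu_k)$ back to $\widetilde\Theta^{s,e}(\nu_k)$ at the cost of an additional $\|\widetilde D^{s,e}(\nu_k)\|_{\mathrm F}\le \widetilde\Delta_k\kappa_k/(C_\Lambda-1)$. Together with the signal lower bound $\|\widetilde\Theta^{s,e}(\nu_k)\|_{\mathrm F}\ge \sqrt\Delta\kappa_k/\sqrt{24}$ inherited verbatim from Step 2, choosing $C_\Lambda$ sufficiently large recovers the key inner-product bound $(\widetilde\Theta^{s,e}(\nu_k),\widehat A^{s,e}(\nu_k)/\|\widehat A^{s,e}(\nu_k)\|_{\mathrm F})\ge \|\widetilde\Theta^{s,e}(\nu_k)\|_{\mathrm F}/2$ of \Cref{theorem:localization 1}, after which Steps 4 and 5 apply verbatim and deliver the same localization rate.

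The main obstacle is controlling the additional operator-norm contribution of the diagonal correction in the effective USVT noise: the naive bound $\|\widetilde D^{s,e}(\nu_k)\|_{\op}\le \widetilde\Delta_k\rho$ is in general not dominated by the fixed threshold $\tau_2\asymp \sqrt{n\rho}$, which is exactly the tension flagged in the discussion preceding \Cref{assume:SBM frobenius norm}. The argument therefore cannot use the trivial bound; instead it relies on the tighter chain $\|\widetilde D^{s,e}(\nu_k)\|_{\op}\le \|\widetilde D^{s,e}(\nu_k)\|_{\mathrm F}\le \widetilde\Delta_k\kappa_k/(C_\Lambda-1)$ afforded by the Frobenius-norm lemma invoked in \Cref{assume:SBM frobenius norm}, and on tuning $C_\Lambda$ large enough relative to the constants in \Cref{assume:model 2}. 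This is precisely what \Cref{assume:SBM frobenius norm} is designed to enable, so that the diagonal correction enters only as lower-order terms and leaves the $\log^2(T)\,\kappa_0^{-2}n^{-2}\rho^{-1}$ localization rate unchanged.
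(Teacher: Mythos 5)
Your overall scaffolding is right --- only Step~3 needs changing, and the diagonal correction must ultimately be paid for in Frobenius norm via \Cref{assume:SBM frobenius norm} --- but the specific mechanism you propose for Step~3 does not work, and the failure is exactly at the obstacle you flag. You want to apply \Cref{lemma:USVT} with target $\widetilde\Theta^{s,e}_\Lambda(\nu_k)$, which forces the operator-norm hypothesis $\|\widetilde A^{s,e}(\nu_k)-\widetilde\Theta^{s,e}_\Lambda(\nu_k)\|_{\op}\lesssim\tau_2$, and hence requires $\|\widetilde D^{s,e}(\nu_k)\|_{\op}\lesssim\tau_2\asymp\sqrt{n\rho}+\log(T)$. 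Your ``tighter chain'' $\|\widetilde D^{s,e}(\nu_k)\|_{\op}\le\|\widetilde D^{s,e}(\nu_k)\|_{\mathrm F}\le\widetilde\Delta_k\kappa_k/(C_\Lambda-1)$ replaces the operator norm of a diagonal matrix (its largest entry, at most $\widetilde\Delta_k\rho$) by the strictly larger Frobenius norm, and the resulting bound is of order $\sqrt{\Delta}\,\kappa_k/C_\Lambda\ge C_\alpha\log^{1+\xi}(T)\sqrt{n\rho}/C_\Lambda$ under \Cref{assume:model 2}; for any fixed constant $C_\Lambda$ this exceeds $\tau_2$ once $C_\alpha$ is large, so the hypothesis of \Cref{lemma:USVT} is violated. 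Inflating $\tau_2$ to absorb the diagonal does not help either: the $r\tau_2^2$ term in the USVT error would become of order $r\Delta\kappa_k^2/C_\Lambda^2$, which is not dominated by the signal $\Delta\kappa_k^2$ unless $C_\Lambda\gtrsim\sqrt r$, i.e.\ unless $C_\Lambda$ grows with the model.

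The paper resolves this differently, and the difference is the one idea your proposal is missing. It never puts the diagonal into the operator-norm condition at all: \Cref{lemma:SBM frobenius norm} is a modified USVT bound whose hypothesis is $\|A-(\Lambda-\diag(\Lambda))\|_{\op}\le(1+\delta)\tau$ --- exactly what the concentration event $\mathcal A'$ gives for free, since the data concentrate around the diagonal-removed matrix --- while its conclusion still controls $\|\usvt(A,\tau,\infty)-\Lambda\|_{\mathrm F}^2$, i.e.\ the distance to the \emph{full} rank-$r$ matrix, up to the additive term $32(1+\delta)^2\delta^{-2}\|\diag(\Lambda)\|_{\mathrm F}^2$. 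The proof works by comparing eigenvalue tails: the eigenvalues $\lambda_i'$ of $\Lambda-\diag(\Lambda)$ beyond index $r$ satisfy $\sum_{i>r}(\lambda_i')^2\le 2\sum_{i>r}\lambda_i^2+2\|\diag(\Lambda)\|_{\mathrm F}^2=2\|\diag(\Lambda)\|_{\mathrm F}^2$, so the diagonal perturbation costs only a Frobenius-norm term, where \Cref{assume:SBM frobenius norm} makes it a $1/C_\Lambda$ fraction of $\kappa_k$ and the rest of Step~3 (the lower bound on $\|\widehat A^{s,e}(\nu_k)\|_{\mathrm F}$ and the inner-product bound) goes through as you describe. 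Without this eigenvalue-tail comparison, or some substitute for it, your Step~3 cannot be completed.
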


The proof of \Cref{thm-sbm-main} can be found in \Cref{app:main-proof}.  The main difference between this proof and the proof of \Cref{theorem:localization 1} is the treatment on the diagonal entries under \Cref{assume:SBM frobenius norm}.

\section{Illustrative Simulations}\label{sec:simulations}
In this section we will present the results of various illustrative simulations intended to  corroborate the theory developed in the paper and to demonstrate the type of improvements  the LR delivers over the NBS.  As for this, we will use well-tuned tuning parameters, which will be reported. 

We point out that we could not find a methodology for the problem of multiple change point localization in network models with which to directly compare the NBS and LR. 
We have looked into existing methods for multiple change point localization that have been proposed for change point localization in settings different than dynamic network models, such as  the ones put forward in \cite{keshavarz2018sequential}, \cite{Cho2015}, \cite{cho2015multiple} and \cite{wang2016high}, among others. However, none of these procedures could be successfully deployed in the simulation settings described below. For this reason, we do not report the results of these comparison.


We consider the following three simulation settings.   All settings have equally-spaced change points, therefore the total number of time points $T = (K+1)\Delta$.\\
\noindent {\bf Setting (i).} We set $\Delta = 60, 80, 120, 200$, $K = 2$, $n = 150$ and $\rho = 0.02$.  Each network is generated from a balanced 3-community stochastic block model.  At the change points, the connectivity matrices are
	\[
		Q_1 = \rho \left(\begin{array}{ccc}
			0.6 & 1 & 0.6 \\
			1 & 0.6 & 0.5 \\
			0.6 & 0.5 & 0.6 
			\end{array}\right), \quad 
		Q_2 = \rho \left(\begin{array}{ccc}
			0.6 & 0.5 & 0.6 \\
			0.5 & 0.6 & 1 \\
			0.6 & 1 & 0.6 
			\end{array}\right) \quad \text{and} \quad
		Q_3 = Q_1,
	\]
	respectively.\\

\noindent	{\bf Setting (ii).} We set $\Delta = 60, 80, 120, 200$, $K = 2$, $n = 150$, $\rho = 0.015$ and the connectivity matrix be
	\[
		Q = \rho \left(\begin{array}{ccc}
			0.25 & 0.5 & 0.25 \\
			0.5 & 1 & 0.5 \\
			0.25 & 0.5 & 0.25
			\end{array}\right).
	\]
	Each network is generated from a balanced 3-community stochastic block model.  At the change points, membership are reshuffled randomly.\\

	\noindent	{\bf Setting (iii).} We set $\Delta = 80$, $K = 2$, $n = 150, 180, 210, 240$, $\rho = 0.01$.  Each network is generated from a balanced 3-community stochastic block model.  At the change points, the connectivity matrices are
	\[
		Q_1 = \rho \left(\begin{array}{ccc}
			0.9 & 0.8 & 0.3 \\
			0.8 & 0.3 & 0.3 \\
			0.3 & 0.3 & 0.3 
			\end{array}\right), \quad 
		Q_2 = \rho \left(\begin{array}{ccc}
			0.3 & 0.3 & 0.7 \\
			0.3 & 0.6 & 0.3 \\
			0.7 & 0.3 & 0.3  
			\end{array}\right) \quad \text{and} \quad
		Q_3 =  \rho \left(\begin{array}{ccc}
			0.3 & 0.3 & 0.3 \\
			0.3 & 0.3 & 0.6 \\
			0.3 & 0.6 & 0.1 
			\end{array}\right),
	\]
	respectively.

For each of the above  settings we simulated a dynamic network realization and applied both the NBS and LR 200 times.  In fact, we have applied a simplified version of the NBS algorithm based on the BS procedure \citep[see, e.g.][]{vostrikova1981detection} instead of WBS. Since the number of change points is small, it can be shown that the guarantees of \Cref{thm-sbm-main} hold true even for this simpler, computationally less demanding algorithm\footnote{In general, however, when the number of change points increases with $T$, BS is  sub-optimal  compared to WBS. Note that the default choices in R packages based on \cite{Cho2015}, \cite{cho2015multiple} and \cite{wang2016high} are all based on BS instead of WBS.}.  

To evaluate the performance of the algorithms, for each simulation we recorded
	\begin{itemize}
	\item $d(\widehat{S}, S)/T$, the  Hausdorff distance between the set of change point 	estimators and the set of the true change points, normalized by $T$,
	\item $|\widehat{K} - K|$, the absolute difference between the numbers of the change point estimators and the true change points,
	\item and Prop,  the proportion of simulations (out of 200) for which $\widehat{K} = K$.
	\end{itemize}

	Table~\ref{tab-set1-1} presents the results in the form of mean(standard error).  The columns labeled by sub. $d(\widehat{S}, S)/T$ displays the results only for the simulations in which  $\widehat{K} = K$.
	All the numerical analysis were conducted on machines with CPU Intel(R) Xeon(R) CPU E5-2670 0 @ 2.60GHz.	
	
Since the LR is a local refinement to the NBS, the columns corresponding to the LR algorithm report, by construction, the same $\widehat{K}$ and therefore the same correct proportion.  Due to \eqref{eq:pre estimate}, which requires the LR to be deployed only as a refinement of an estimator that returns the correct number of change points, in order to show the improvement afforded by the LR we only considered  simulations in which the NBS outputs  the correct number of change points.  
	


\begin{table*}[htbp]
\centering
\begin{tabular}{ccccccc}
	 & \multicolumn{2}{c}{$d(\widehat{S}, S)/T$} & $|\widehat{K} - K|$ & Prop & \multicolumn{2}{c}{sub. $d(\widehat{S}, S)/T$} \\
	 & NBS & LR & & & NBS & LR\\
	 \hline \hline
	 \multicolumn{7}{c}{Setting (i)} \\ [3pt]
	 $T = 180$ & 0.164(0.010) & 0.130(0.011) & 0.955(0.062) & 0.400 & 0.043(0.005) & 0.008(0.004) \\
	 $T = 240$ & 0.113(0.009) & 0.078(0.009) & 0.820(0.063) & 0.485 & 0.023(0.002) & 0.000(0.000) \\
	 $T = 360$ & 0.049(0.006) & 0.027(0.006) & 0.450(0.051) & 0.675 & 0.010(0.001) & 0.000(0.000) \\
	 $T = 600$ & 0.019(0.003) & 0.003(0.001) & 0.265(0.036) & 0.770 & 0.004(0.000) & 0.000(0.000) \\ [5pt]
	 \multicolumn{7}{c}{Setting (ii)} \\ [3pt]
	 $T = 180$ & 0.033(0.003) & 0.004(0.002) & 0.195(0.033) & 0.830 & 0.021(0.002) & 0.000(0.000) \\
	 $T = 240$ & 0.013(0.002) & 0.001(0.000) & 0.070(0.018) & 0.930 & 0.009(0.001) & 0.000(0.000) \\
	 $T = 360$ & 0.006(0.001) & 0.001(0.000) & 0.070(0.018) & 0.930 & 0.003(0.000) & 0.000(0.000) \\
	 $T = 600$ & 0.002(0.000) & 0.000(0.000) & 0.055(0.016) & 0.945 & 0.001(0.000) & 0.000(0.000) \\ [5pt]
	 \multicolumn{7}{c}{Setting (iii)} \\ [3pt]
	 $n = 150$ & 0.115(0.010) & 0.095(0.010) & 0.415(0.038) & 0.610 & 0.029(0.004) & 0.014(0.005)\\
	 $n = 180$ & 0.027(0.003) & 0.008(0.003) & 0.250(0.034) & 0.775 & 0.012(0.001) & 0.000(0.000)\\
	 $n = 210$ & 0.013(0.002) & 0.000(0.000) & 0.165(0.027) & 0.840 & 0.004(0.001) & 0.000(0.000)\\
	 $n = 240$ & 0.013(0.002) &	0.000(0.000) & 0.165(0.026) & 0.835 & 0.002(0.000) & 0.000(0.000)
\end{tabular}
 \caption{Simulation results for both the NBS and LR. \label{tab-set1-1}	}
\end{table*}

As for tuning parameters, recall that we have one tuning parameter $\tau_1$ for the NBS, and two tuning parameters, $\tau_2$ and $\tau_3$, for the LR. The choices of  tuning parameters in these three different settings are given in Table \ref{tab-tuning}, where Inf is equivalent to no entrywise truncation in the USTV step, and $M$ is the number of communities in the stochastic block model.  In selecting the tuning parameters we have used the true number of communities $M$; of course, in practice, this quantity needs to be estimated from the data \citep[e.g.][]{chen2018network}. Finally we estimate $\rho$ using $\hat{\rho}$, defined to be the 95\% quantile of  
	\[
		\left\{T^{-1}\sum_{t = 1}^T A_{ij}(t), \, 1 \leq i, j \leq n \right\}.
	\]

\begin{table*}[htbp]
\centering
\begin{tabular}{cccc}
	Setting  & $\tau_1$ & $\tau_2$ & $\tau_3$ \\
	\hline \hline 
	(i) & $n\hat{\rho}\log^2(T)/21$ & $Mn\hat{\rho}$ & $\hat{\rho}$ \\
	(ii) & $n\hat{\rho}\log^2(T)/20$ & $Mn\hat{\rho}$ & Inf \\
	(iii) & $3n\hat{\rho}/4$ & $Mn\hat{\rho}$ & Inf 
\end{tabular}	
\caption{Tuning parameter choices.  \label{tab-tuning}}	
\end{table*}

It can be seen from Table~\ref{tab-set1-1} that,	with these choices of the tuning parameters, the performance of both the NBS and LR improves as $T$, $n$ and $\rho$ increase.  In addition, the LR significantly outperforms the NBS. 

For all the settings described above, we have also conducted additional simulations with an omnibus default choice for the tuning parameter which does not require knowledge of $M$ : $\tau_1 = n\hat{\rho} \log^2(T)/20$. The results are shown in Table~\ref{tab-ex1-2}.  Due to the default choice of the tuning parameter, it is not easy to show how the performance changes with different model parameters. Therefore we only collect the NBS results to demonstrate that we can achieve good performances in terms of $|\widehat{K} - K|$, $d(\widehat{S}, S)/T$ and Prop, with easily chosen tuning parameter.

\begin{table*}[h]
\centering
\begin{tabular}{ccccc}
	 & $d(\widehat{S}, S)/T$ & $|\widehat{K} - K|$ & Prop & Time(second/repetition)\\
	 \hline \hline
	 \multicolumn{5}{c}{Setting (i)} \\ [3pt]
	 $T = 180$ & 0.166(0.010) & 1.025(0.062) & 0.360 & 1.607(0.030) \\
	 $T = 240$ & 0.121(0.010) & 0.760(0.061) & 0.520 & 3.104(0.055) \\
	 $T = 360$ & 0.042(0.006) & 0.285(0.044) & 0.805 & 7.126(0.060) \\
	 $T = 600$ & 0.011(0.002) & 0.125(0.023) & 0.875 & 20.837(0.149) \\ [5pt]
	 \multicolumn{5}{c}{Setting (ii)} \\ [3pt]
	 $T = 180$ & 0.332(0.000) & 0.970(0.012) & 0.030 & 1.061(0009) \\
	 $T = 240$ & 0.444(0.000) & 0.955(0.015) & 0.045 & 2.032(0.028) \\
	 $T = 360$ & 0.667(0.000) & 0.985(0.009) & 0.015 & 3.950(0.023) \\
	 $T = 600$ & 1.111(0.000) & 1.000(0.000) & 0.000 & 10.994(0.022) \\ [5pt]
	 \multicolumn{5}{c}{Setting (iii)} \\ [3pt]
	 $n = 150$ & 0.154(0.013) & 0.415(0.038) & 0.610 & 1.861(0.004)\\
	 $n = 180$ & 0.050(0.006) & 0.255(0.035) & 0.770 & 2.683(0.010)\\
	 $n = 210$ & 0.015(0.002) & 0.195(0.032) & 0.825 & 4.936(0.021)\\
	 $n = 240$ & 0.009(0.001) & 0.210(0.033) & 0.815 & 8.785(0.039)
\end{tabular}
\caption{Simulation results for the  NBS  with a default tuning parameter. \label{tab-ex1-2}}	
\end{table*}

It can be seen in \Cref{tab-ex1-2} that with this default choice of tuning parameter, the NBS is still producing good results.

\section{Discussion}\label{sec-disc}
We have studied the change point localization problem in sparse  dynamic network settings. We have proposed two computationally-efficient algorithms based on CUSUM statistics: Network Binary Segmentation (NBS) and Local Refinement (LR).  The NBS is able to localize multiple change points consistently under virtually all parameter scalings for which this task is feasible. The LR guarantees sharper localization errors under slightly stronger scalings and is nearly minimax rate-optimal under those scalings. Our results are applicable to a wide class of dynamic network models and, in particular to the ones assuming a sequence of time-varying stochastic block models.

While we are  able to demonstrate a nearly optimal localization procedure only under a certain low rank assumption (see \Cref{assume:phase 2}), it remains an open problem to design a computationally efficient algorithm that is provably optimal across all scalings  for which consistent localization is possible, described in \Cref{assume:phase}.



The assumptions used in this paper can be possibly generalized in a few directions.  If one wishes to relax the independence across time and/or within networks, or replace the Bernoulli assumption with other distributional assumptions (e.g.~sub-Gaussian), then it will be necessary to change in the proofs of                                                                                                                                                                                                                                                                                                                                                                                                                                                                                                                                                                                                                                                                                                                                                                                                                                                                                                                                                                                                                                                                                                                                                                                                                                                                                                                                                                                                                                                                                                                                                                                                                                                          the concentration inequalities and the corresponding large probability events. This in turn may lead to different scaling requirements for consistency and optimality, as well as possibly different localization error bounds.


It is worth noting that, assuming a stochastic block model at each time point, replacing the USVT algorithm used in the LR  procedure with an NP-hard graphon-based algorithm  \citep[see, e.g.][]{pensky2016dynamic,GaoEtal2015} will produce the nearly optimal rate \eqref{eq:lr.rate} under the scaling
\begin{equation}\label{eq-intro}
\rho \kappa_0^2 \gtrsim  \frac{\log^{2 + 2 \xi}(T)}{\Delta} \frac{\bigl(1 + r^2/n\bigr)}{n},
\end{equation}
which is  weaker than the scaling we assume for our polynomial time algorithms (NBS and LR), namely \eqref{minimax.scaling}.   
%
Equations~\eqref{eq:phase.consistent}, \eqref{minimax.scaling} and \eqref{eq-intro} reveal that 
\begin{itemize}
\item [(i)]	in the very sparse regime, i.e.~$r \lesssim \sqrt{n}$, there is no gap between the scaling  \eqref{eq-intro}  required by NP-hard algorithms and the scaling \eqref{eq:phase.consistent};
\item [(ii)] in the moderately sparse regime, i.e.~$\sqrt{n} \lnsim r \lnsim n$, then there is a gap between statistical and computational limits;
\item [(iii)] in the very dense regime, i.e.~$r \asymp n$, \eqref{minimax.scaling} and \eqref{eq-intro} are the same, which means NP-hard algorithms are not gaining over polynomial methods.
\end{itemize}
These observations is consistent with similar phenomena observed in other statistical problems, see e.g.~\cite{zhang2012communication}, \cite{loh2013regularized}, to name but a few.

To summarize, we have the following \Cref{tab-summary}.
\begin{table*}[htbp]
\centering
\begin{tabular}{ccc}
\hline
Rate & Scaling & Algorithm\\
	\hline
	$\epsilon/T = o(1)$ & $\rho \kappa^2_0 \gtrsim   \frac{ \log^{2 + 2 \xi}(T)}{\Delta}\frac{1}{n}$ & Poly \\ [3pt]
	\hline
	$\epsilon/T = \epsilon_{\mathrm{opt}}\frac{\log^2(T)}{T} $ & $\rho \kappa_0^2  \gtrsim \frac{\log^{2 + 2 \xi} (T) }{\Delta}\frac{r}{n}$  & Poly\\
	& $\rho \kappa_0^2 \gtrsim  \frac{\log^{2 + 2\xi}(T)}{\Delta} \frac{\bigl(1 + r^2/n\bigr)}{n}$ & NP \\[3pt]
	\hline
\end{tabular}	
\caption{Summary of our rates results.  \label{tab-summary}}	
\end{table*}

\section*{Acknowledgments} We would like to thank an anonymous reviewer and the associate editor for constructive comments that led to improvements in the presentation of the paper.

\bibliographystyle{ims}
\bibliography{citation} 

\appendix

\section{Proofs of Theorems 1, 2 and 3}
\label{app:main-proof}

For simplicity, we set
\[
		\|\Theta (\eta_{k} ) -\Theta (\eta_{k}-1 ) \|_{\mathrm{F}} = \kappa_k > 0, \text{ for any }  k=1, \ldots, K.
\]

\begin{proof}[Proof of \Cref{thm-1}]

The value of the constants in statememt of the theorem can be tracked in the proof.  
The hierarchy can be abstracted as follows: first, $c$ and $c_T$ are chosen such that \eqref{eq-thm1-result} tends to 0 as $T \to \infty$; then, $C_{\beta}$ can be chosen depending on $c$ and $c_T$;  the constant $c_2$ therefore depends on $C_{\beta}$ and $C_{\alpha}$; finally, a sufficiently large $C_1 > 0$ is chosen and depends on all the aforementioned constants and $C_R$.   In particular, increasing $C_{\alpha}$ would decrease the lower bound of $C_1$. 

	As the random intervals $\{(\alpha_m,\beta_m)\}_{m=1}^M$ are generated independently from the data, we will assume throughout the proof that the event $\mathcal M$ defined in \eqref{event-M} in \Cref{app:ancillary} holds.  By \Cref{lemma:random interval}, the probability of the complementary event is smaller than 
	\[
\exp\left\{\log\left(\frac{T}{\Delta}\right) - \frac{M\Delta}{4C_R T} \right\},
	\]
	which vanishes provided that 
	\[
	M \gtrsim \bigl(T/\Delta\bigr) \log\bigl(T/\Delta\bigr).
	\]


	For $0\leq s < t < e \leq T$, we consider the event
		\begin{equation}\label{eq-event-A}
		\mathcal A (s,t,e) = \left \{ \left |  (\widetilde A^{s,e}(t) , \widetilde B^{s,e}(t) )   -\|\widetilde \Theta^{s,e}(t)  \|_{\mathrm{F}}^2 \right | \le  C_{\beta}\log(T) \left( \| \widetilde \Theta^{s,e}(t) \|_\mathrm{F} + \log^{1/2}(T)\rho n\right) \right\}. 
		\end{equation}
		Due to \Cref{lemma:exponential for sparse bernulli}, it holds that $\mathbb{P}(\mathcal A(s,t,e)^c) 	\le 6T^{-c_T}+ 2T^{-c}$ for some $c, c_T > 3$, and, by a union bound argument, 
		\[
		\mathbb{P}(\mathcal A) = \mathbb{P}\left (\bigcup_{1\le s \le t \le e \le T } \mathcal A(s,t,e) \right )\ge 1 - (6T^{3-c_T}+ 2T^{3-c})  .
		\]
		All the analysis in the rest of this proof is conducted on the event $\mathcal{A}\cap \mathcal{M}$.

	\vskip 3mm
	The general strategy of the proof is to utilize a standard induction-like argument that is commonly used in proving the consistency of change point estimators; see, e.g. \cite{fryzlewicz2014wild}, \cite{wang2016high} and \cite{WangEtal2017}. Of course the specific details and technicalities of this argument  are new and challenging in our problem. In a nutshell, we will show that,
on the event $\mathcal{A}\cap \mathcal{M}$ and assuming that the algorithm has not made any mistakes so far in the detection and localization of change points, the procedure will also correctly identify any undetected change point and estimate its location within an error of $\epsilon$, if such an undetected change point exists.
Towards that end, it suffices to consider any generic time interval $(s, e) \subset (0, T)$ that satisfies 
		\[
			\eta_{r-1} \le s\le \eta_r \le \ldots\le \eta_{r+q} \le e \le \eta_{r+q+1}, \quad q\ge -1
			\]
			and
			\[
			\max \{ \min \{ \eta_r-s ,s-\eta_{r-1}  \}, \min \{ \eta_{r+q+1}-e, e-\eta_{r+q}\} \} \le \epsilon,
		\]
		where $q = -1$ indicates that there is no change point contained in $(s, e)$ and $\epsilon$ is given in \eqref{eq:epsilon.thm1}.

	Observe that 
		\begin{align} \label{eq:MWBS bound of epsilon}
			\epsilon =C_1 \log(T)\left( \frac{\sqrt \Delta }{\kappa_0n\rho } + \frac{\log^{1/2}(T)}{\kappa_0^2n\rho}\right) \le C_1\left(\frac{\Delta}{ C_{\alpha} \log^{1/2} (n) \log^{\xi} (T) } + \frac{\Delta}{C_{\alpha}^2\log^{1/2 + 2 \xi} (T)}\right), 
		\end{align}
		where the inequality follows from \Cref{assume:model} part {\it 1.} and \Cref{assume:phase}. 	
	Therefore, using the previous bound, 
		\[
		\epsilon \le 2C_1 \Delta \max \left\{ \frac{1}{ C_{\alpha} \log^{1/2} (n) \log^{\xi} (T) }, \frac{1}{C_{\alpha}^2\log^{1/2 + 2 \xi} (T)} \right\} \le  \Delta /4,
		\]
		by appropriately assuming $C_\alpha$ to be large enough.  It, therefore, has to be the case that, for any change point $\eta_p\in (0, T)$, either $|\eta_p -s |\le \epsilon$ or $|\eta_p-s| \ge \Delta - \epsilon \geq 3\Delta /4 $. This means that $ \min\{ |\eta_p-e|, |\eta_p-s| \}\le \epsilon$ indicates that $\eta_p$ is a change point that has been previously detected and estimated within an error of magnitude $\epsilon$ in the previous induction step, even if $\eta_p\in (s, e)$.  Below we will say that a change point  $\eta_p$ in $[s,e]$ is undetected if $ \min\{ \eta_p -s, \eta_p-e\} \ge 3\Delta/4 $.

	In order to complete the induction step, it suffices to show that NBS$((s, e), \{(\alpha_m,\beta_m)\}_{m=1}^M, \tau)$ (i) will not detect any new change point in $(s,e)$ if all the change points in that interval have been previously detected, and (ii) will find a point $b$ in $(s,e)$  such that $|\eta_p-b|\le \epsilon$ if there exists at least one undetected change point in $(s, e)$.

	\vskip 3mm
	\noindent{\bf Step 1.}
	Suppose that there does not exist any undetected change points within $(s, e)$. Then, for any $(s_m', e_m') = (\alpha_m, \beta_m) \cap (s, e)$, one of the following situations must hold:
	\begin{itemize}
		\item [(a)]	there is no change point within $ (s_m', e_m')$;
		\item [(b)] there exists only one change point $\eta_{r} $ within $(s_m', e_m')$ and $\min \{ \eta_{r}-s_m' , e_m'-\eta_{r}\}\le \epsilon $ or
		\item [(c)] there exist two change points $\eta_{r} ,\eta_{r+1}$ within $(s_m', e_m')$ and $\max \{ \eta_{r}-s_m' , e_m'-\eta_{r+1}\}\le \epsilon $.
	\end{itemize}
	
	We will analyze situation (c) only, as the other two cases are similar and in fact simpler.  Observe that if (c) holds, then  by \eqref{eq:MWBS bound of epsilon} and \eqref{eq:input.parameter},
		$$\epsilon \le 64^{-1}\Delta \le 64^{-1}(e_m'-s_m') ,$$
		where the second inequality is fulfilled by choosing a  sufficiently large $C_\alpha $.  Therefore,  the interval $$[s_m,e_m]= [s_m' +64^{-1} (e_m'-s_m') ,  e_m'- 64^{-1} (e_m'-s_m')], $$
		contains no change points.  To see this, notice that, on the event $\mathcal A$, $\widetilde \Theta^{s_m,e_m}(t) = 0$  for all $t \in (s_m, e_m)$, as there is no change point in $ [s_m,e_m]$. Furthermore, by \Cref{lemma:exponential  for sparse bernulli},  there exists a large enough constant $C_{\beta} > 0$ such that 
		$$\max_{s_m  < t < e_m }  ( \widetilde A^{s_m,e_m}( t),\widetilde B^{s_m,e_m} (t)) \le C_\beta \rho n\log^{3/2}(T).$$
	Thus, with the input parameter $\tau$ satisfying 
		\[
		\tau \geq C_{\beta}\rho n \log^{3/2}(T),
		\] 
		we conclude that NBS$((s, e), \{(\alpha_m,\beta_m)\}_{m=1}^M,
	\tau)$  will always correctly reject the existence of undetected change points.

	\vskip3mm
	\noindent{\bf Step 2.} 	Suppose now that there exists a change point $\eta_p\in (s, e)$ such that $ \min\{ \eta_p -s, \eta_p-e\} \ge 3\Delta/4$.  Let $a_m, b_m$ and $m*$ be defined as in NBS$((s, e), \{(\alpha_m,\beta_m)\}_{m=1}^M, \tau)$.  On the event $\mathcal M$, for any $\eta_p\in (s, e)$ such that $ \min\{ \eta_p -s, e - \eta_p\} \ge 3\Delta/4$, there exists an interval $[s_m', e_m'] $ containing only one change point $ \eta_p$ such that 
		\[
		\eta_p -3 \Delta/4 \le s_m'  \le  \eta_p - \Delta/2 \quad \text{and} \quad  \eta_p + \Delta/2 \le e_m' \le  \eta_p +3 \Delta/4. 
		\]
		Therefore, if $[s_m,e_m] =[s_m' +64^{-1} (e_m'-s_m') , e_m'- 64^{-1} (e_m'-s_m')]$, then one has that
		\begin{equation}\label{eq:sm}
			\eta_p - \Delta 3/4 \le s_m  \le  \eta_p - \Delta/8 \quad \text{and} \quad  \eta_p + \Delta/8 \le e_m  \le  \eta_p + \Delta 3/4. 
		\end{equation}
		Next, on the event $\mathcal A$, it holds that
		\[
			( \widetilde A^{s_m,e_m}( \eta_p),\widetilde B^{s_m ,e_m } ( \eta_p)) \ge   \| \widetilde \Theta^{s_m   ,e_m    } (\eta_p)\|_{\mathrm{F}}^2 -   C_\beta \log(T)( \log^{1/2}(T)\rho n + \| \widetilde \Theta^{s_m    ,e_m   } (\eta_p)\|_{\mathrm{F}}).
		\]
		It then follows from \Cref{lemma:one change point population size} that
		\[
			\| \widetilde \Theta^{s_m,e_m} (\eta_p)\|_{\mathrm{F}}^2  = \frac{ (\eta_p - s_m)(e_m - \eta_p)  }{e_m - s_m  } \kappa_p^2 \ge \min\{ e_m - \eta_p, \eta_p - s_m \}\kappa_p^2 \ge \kappa_p^2\Delta/8, 
		\]
		where the last inequality stems from \eqref{eq:sm}. Thus, due to \Cref{assume:model} part{\it 1.} and \Cref{assume:phase}, we conclude that
		\[
			 \kappa^2_p\Delta/16 \geq \kappa^2_0n^2\rho^2\Delta/16 \geq C_{\alpha}^2/16 n\rho \log^{2+2\xi}(T) > C_{\beta}n\rho\log^{3/2}(T),
		\]
		and
		\begin{equation}\label{eq-step2-2}
		\kappa_p\sqrt{\Delta}/4 \geq \kappa_0n\rho\sqrt{\Delta}/4 \geq C_{\alpha}/4\sqrt{n\rho}\log^{1+\xi}(T)\geq C_{\alpha}/4\log^{1/2}(n)\log^{1+\xi}(T) > 2C_{\beta}\log(T),
		\end{equation}
provided that, for $n, T \geq 2$,
		\begin{equation}\label{eq-c-beta-c-alpha}
			C_{\beta} < \min\left\{8^{-1}C_{\alpha}\log^{\xi}(T)\log^{1/2}(n), \, C_{\alpha}^2/16\log^{1/2 + 2\xi}(T)\right\}.
		\end{equation}
		We remark that as for the hierarchy of all the absolute constants involved, \eqref{eq-c-beta-c-alpha} is a constraint on $C_{\alpha}$.
		Thus, with a large enough $C_{\alpha}$, there exists an absolute constant $c_2 > 0$, such that
		\[
			( \widetilde A^{s_m,e_m}( \eta_p),\widetilde B^{s_m,e_m} ( \eta_p)) \ge  c_2 \kappa_p^2 \Delta.
		\]
		By the definition of $m^*$, one then obtain the inequality
		\begin{align}\label{eq:consistent MWBS size of one change point 2} 
			a_{m*} =( \widetilde A^{s_{m*},e_{m*}} (b_{m*}),\widetilde B^{s_{m*},e_{m*}} (b_{m*})) \ge c_2 (\kse)^2 \Delta,
		\end{align}
		where $\kse= \max\{\kappa_k: \,\min\{ \eta_p -s , e -\eta_p \} \ge 3\Delta /4\}$. Thus, with input parameter $\tau$ satisfying
		\[
		\tau < c_2\kappa_0^2n^2\rho^2\Delta,
		\] 
		The NBS can consistently detect the existence of undetected change points.

	\vskip 3mm
	\noindent{\bf Step 3.}  Assume next that there exists at least one undetected change point $\eta_p\in (s, e)$ such that $ \min\{ \eta_p -s, \eta_p-e\} \ge 3\Delta/4$.  Let $a_m, b_m$ and $m*$ be defined as in \Cref{algorithm:MWBS}.

	To complete the induction step and therefore the proof, it suffices to show that there exists a (necessarily undetected) change point $\eta_p\in[s_{m*},e_{m*}]$ such that 
\begin{equation}\label{eq:first}
		 \min\{ \eta_p -s, \eta_p-e\} \ge 3\Delta/4
\end{equation}
and that
\begin{equation}\label{eq:second}
|b_{m*}-\eta_p|\le \epsilon.
\end{equation}
In this step we will prove that \eqref{eq:first} holds.
	Denote 
		\[
			[s_{m*},e_{m*}]= [s_{m*}' +64^{-1} (e_{m*}'-s_{m*}') , e_{m*}- 64^{-1} (e_{m*}'-s_{m*}')]. 
		\]	
		Suppose for the sake of contradiction that
		\begin{align}\label{eq: size of population with small spacing} \max_{s_{m*} < t < e_{m*} }\| \widetilde \Theta ^{s_{m*},e_{m*}}(t)\|_{\mathrm{F}}^2  < c_2(\kse)^2  \Delta/2. 
		\end{align}
		Then
		\begin{align*}
			& \max_{s_{m*} < t < e_{m*}  }( \widetilde A^{s_{m*},e_{m*}}( t),\widetilde B^{s_{m*} ,e_{m*} } ( t))  \\
			&\le   \max_{s_{m*} < t < e_{m*}  } \| \widetilde \Theta ^{s_{m*}   ,e_{m*}    } (t)\|_{\mathrm{F}}^2 +   C_\beta \log (T)( \log^{1/2}(T)\rho n +\max_{s_{m*} < t < e_{m*}  } \| \widetilde \Theta^{s_{m*}    ,e_{m*}   } (t)\|_{\mathrm{F}}), \\
			& \leq c_2(\kse)^2 \Delta/2+   C_\beta \log^{3/2}(T)\rho n+   C_\beta \log (T) \sqrt {c_2 /2} \kse\sqrt \Delta \\
			&< c_2(\kse)^2 \Delta/2 +c_2(\kse)^2 \Delta/4 +c_2(\kse)^2 \Delta/4=  c_2(\kse)^2 \Delta,
		\end{align*}
		where the first inequality is due to the definition of the event $\mathcal A$, the second inequality follows from \eqref{eq: size of population with small spacing} and the third  inequality from \Cref{assume:phase}, for an appropriately large $C_\alpha$.  This contradicts  \eqref{eq:consistent MWBS size of one change point 2}. Therefore
		\begin{align}\label{eq: size of population with large spacing} 
			\max_{s_{m*} < t < e_{m*} }\| \widetilde \Theta ^{s_{m*},e_{m*}}(t)\|_{\mathrm{F}}^2  \ge  c_2(\kse)^2  \Delta/2. 
		\end{align} 

	Observe that if $[s_{m*},e_{m*}] $ contains two change points, then $ e_{m*}-s_{m*} \ge \Delta$ and if $ [s_{m*},e_{m*}] $ contains one change point $\eta$, then  it has to be the case that $\min\{ \eta -s_{m*} , e_{m*} -\eta \} \ge  c_2  \Delta/2 $, as otherwise by \Cref{lemma:one change point population size}, 
		\[
			\max_{s_{m*} < t <  e_{m*} }\| \widetilde \Theta^{s_{m*},e_{m*}}(t)\|_{\mathrm{F}}^2 =	\|\widetilde \Theta^{s_{m*},e_{m*}}(\eta) \|_{\mathrm{F}}^2    \le c_2(\kse)^2  \Delta/2,
		\]
		which contradicts \eqref{eq: size of population with large spacing}. 

	Therefore, since $e_{m*} -s_{m*} \ge  c_2\Delta/2$,
	the bound \eqref{eq:MWBS bound of epsilon} implies that 
		\begin{equation}\label{eq-derive-delta}
			\epsilon \le C_1\left(\frac{\Delta}{C_{\alpha} \log^{1/2} (n) \log^{\xi} (T) } + \frac{\Delta}{C_{\alpha}^2\log^{1/2 + 2 \xi} (T)}\right)   \le 64^{-1} (e_{m*}'-s_{m*}'),
		\end{equation}
		where the second inequality follows if  $C_\alpha$ is sufficiently large. By a similar argument as in {\bf Step 1}, $[s_{m*}, e_{m*}] $ contains no detected  change points.  Observe that by \eqref{eq:consistent MWBS size of one change point 2}, $[s_{m*},e_{m*}]$ contains at least one undetected change point. 
	
	\vskip 3mm
	\noindent {\bf Step 4.}
	In the final step of the proof we will show that \eqref{eq:second} occurs. To that end, we will apply \Cref{lemma:wbs multiple d}. Let
		\begin{equation}\label{eq-lambda-S4}
			\lambda = \max_{s_{m*} <  t <  e_{m*}} \bigl|( \widetilde A^{s_{m*},e_{m*}}(t) ,\widetilde B^{s_{m*},e_{m*}}  (t))    -\|\widetilde \Theta^{s_{m*},e_{m*}}(t) \|_{\mathrm{F}}^2 \bigr|.
		\end{equation}
		Observe that \eqref{eq: size of population with large spacing} and \eqref{eq-step2-2} imply that 
		\[
			c_3 \max_{s_{m*} < t < e_{m*} }  \| \widetilde  \Theta ^{s_{m*} , e_{m*}} (t) \|_{\mathrm{F}}^2 /2> C_\beta \log( T) \max_{s_{m*}< t < e_{m*} }  \| \widetilde   \Theta ^{s_{m*},e_{m*} } (t )\|_{\mathrm{F}},
		\]
		and
		\[
			c_3 \max_{s_{m*} < t < e_{m*} }  \| \widetilde  \Theta ^{s_{m*} , e_{m*}} (t) \|_{\mathrm{F}}^2 /2> C_{\beta}\log^{3/2}(T)\rho n,
		\]
		for a sufficiently large $c_3 > 0$.  Then, due to the definition of the event $\mathcal A$,
		\begin{equation}\label{eq-wbs-noise-proof}
			\lambda \le  C_\beta \log(T)\left( \log^{1/2}(T)\rho n + \max_{s_{m*} <  t <  e_{m*}}  \| \widetilde \Theta^{s_{m*}    ,e_{m*}   } (t)\|_{\mathrm{F}}\right)\le c_3 \max_{s_{m*} < t <  e_{m*}}  \| \widetilde\Theta ^{s_{m*}    ,e_{m*}   } (t)\|^2_{\mathrm{F}}.
		\end{equation}
		Since \eqref{eq:wbs size of sample} follows from \eqref{eq:consistent MWBS size of one change point 2}, \eqref{eq:mcusum noise} follows from \eqref{eq-lambda-S4}, and \eqref{eq:wbs noise} follows from  \eqref{eq-wbs-noise-proof}, all the conditions in \Cref{lemma:wbs multiple d} hold.  \Cref{lemma:wbs multiple d} implies that there exists an undetected change point $\eta_p$  within $[s,e]$ such that 
		\[
			|\eta_{k} -b |\le \frac{C_3\Delta\lambda}{ \| \widetilde \Theta^{s_{m*},e_{m*}} (\eta_k)\|_{\mathrm{F}}^2}  \quad \text{and} \quad  \|\widetilde \Theta^{s_{m*},e_{m*}} (\eta_k )  \|_{\mathrm{F}}^2 \ge c'\max_{s_{m*} \le t \le e_{m*} } \| \widetilde \Theta^{s_{m*},e_{m*}} (t)\|_{\mathrm{F}}^2. 
		\]
		and this combining  with \eqref{eq: size of population with large spacing} provides that
		\[
		|\eta_k - b|\leq \frac{2C_3C_{\beta}}{c_2(c')^2}\frac{\log^{3/2}(T)}{\kappa_0^2n\rho} + \frac{\sqrt{2}C_3C_{\beta}}{c'\sqrt{c_2}}\frac{\sqrt{\Delta}\log(T)}{\kappa_0n\rho} \leq C_1\log(T)\left(\frac{\log^{1/2}(T)}{\kappa_0^2n\rho} + \frac{\sqrt{\Delta}}{\kappa_0n\rho}\right),
		\]
		where $C_1 > \frac{2C_3C_{\beta}}{c_2(c')^2} + \frac{\sqrt{2}C_3C_{\beta}}{c'\sqrt{c_2}}$ and $c' < 2\log(2)C_{\beta}/c_3$.  This completes the induction.		
\end{proof}

 \begin{proof}[Proof of \Cref{theorem:localization 1}]
The dependence among the constants involved in \Cref{theorem:localization 1} is as follows.  Firstly, $C$ and $C_{\varepsilon}$ are chosen to guarantee that $2T^{3-3C_{\varepsilon}/4} \to 0$.  Secondly, $C_3$ is chosen such that $4T^{3-3C_3^2/8} \to 0$. In particular, we may take $C > 64 \times 2^{1/4}e^2$, $C_{\varepsilon} > 12$ and $C_3 > 2\sqrt{2}$. Finally, the leading constant $C_2 > 0$ in the error bound depends on all the aforementioned constants and the signal-to-noise ratio constant $C_{\alpha}$ in \Cref{assume:phase 2}, which should be chosen to be sufficiently large.

 For convenience, we have broken down the proof in five steps, each of which is applied to every $k \in \{1, \ldots, K\}$.  Before proceeding to the details, we have an overview of all steps.
 	
 	In {\bf Step 1}, we are to show that each working interval $(s, e)$ contains one and only one true change point, and the two endpoints are well separated;  {\bf Step 2} shows that the population CUSUM statistics within each working interval has good performances; the reasoning of the choices of the parameters in Algorithms~\ref{algorithm:USVT} and \ref{algorithm:RI}, and the good performances of the sampler CUSUM statistics in large probability events, will be detailed in {\bf Step 3}; additional probability controls regarding data splitting are demonstrated in {\bf Step 4}; and finally to show the localization rates, we are to transfer the network CUSUM statistics into a univariate case in {\bf Step 5}.

 	\vskip 3mm
 	\noindent {\bf Step 1.}  By \eqref{eq:pre estimate}, $\eta_k \in [\nu_{k-1}  , \nu_{k+1} ]  $ and 
		\begin{align*}
			\eta_k -  \nu_{k-1} &\ge \eta_k - \eta_{k-1} -| \eta _{k-1} -  \nu_{k-1} | \ge \Delta - \Delta/6 \ge 5\Delta /6, \\
			\nu_{k+1} -\eta_k & \ge \eta_{k+1} - \eta_{k} -| \eta_{k+1} -   \nu_{k+1} | \ge \Delta - \Delta/6 \ge 5\Delta /6.  
		\end{align*}	
		Similar calculations show also  that
		\[
			\min\{ \nu_ k -\nu_{k-1},  \nu_{k+1} -\nu_k  \} \ge  2\Delta/3 .
		\]
		Therefore, it holds that
			\[
				1/2 \min\{ \nu_ k -\nu_{k-1},  \nu_{k+1} -\nu_k  \} \ge \Delta/6.
			\]
		As a result, the interval
			\[
				[s,e] =[\nu_{k-1} + 1/2 ( \nu_{k} - \nu_{k-1}) , \nu_{k+1} -1/2 ( \nu_{k+1} -  \nu_{k})]
			\]
		contains only one change point $\eta_k$.  We have that
			\[
				\nu_k-s   = (1-1/2)(\nu_{k} - \nu_{k-1}) \ge  (1 - 1/2)2\Delta /3  = \Delta /3,
			\]
			and $e - \nu_k \geq \Delta/3$. Therefore, $ \min\{e-\nu_k ,\nu_k-s  \} \ge \Delta/3. $ 

	\vskip 3mm
	\noindent {\bf Step 2.}  Let $\Lambda(k) = \Theta(\eta_k) -\Theta(\eta_{k-1})$.  Then, by \Cref{lemma:one change point population size},
     	\[
			\|\widetilde  \Theta ^{s, e} (  t) \|_{\mathrm{F}}^2= \begin{cases}
			\frac {t-s}{(e-s)(e-t)}  (e-\eta_k)^2  \|\Lambda(k)  \|_{\mathrm{F}}^2 , & t\le \eta_k, \\
			\frac {e-t}{(e-s)(t-s)}  (\eta_k -s)^2  \|\Lambda(k)  \|_{\mathrm{F}}^2   , &  t\ge\eta_k.
			\end{cases}
		\]
     	Next, we set
     	\[
     		\widetilde{\Delta}_k  =  \sqrt{\frac{(\nu_k-s)(e-\nu_k)} {e-s}}
     	\]
     	and, without loss of generality, we may assume that $ \nu_k  \le \eta_k$.  Since 
     	\[
     	\widetilde{\Delta}_ k   \ge \min\{ \nu_k-s ,e-\nu_k \}/2  \ge  \Delta /6,
     	\] 
     	we obtain that
			\begin{align}
				\|\widetilde  \Theta ^{s, e} (  \nu_k) \|_{\mathrm{F}}^2 & =  \frac {\nu_k-s}{(e-s)(e-\nu_k)}  (e-\eta_k)^2  \|\Lambda(k)  \|_{\mathrm{F}}^2 = \widetilde{\Delta}_k^2  \left (\frac{e-\eta_k}{e-\nu_k}  \right)^2\kappa_k^2 \nonumber\\
				&=  \widetilde{\Delta}_k^2\left(1 - \frac{\eta_k -\nu_k}{e-\nu_k}\right)^2 \kappa_k^2 \geq \frac{\Delta}{6}\left(1 - \frac{\Delta/6}{\Delta/3}\right)^2\kappa_k^2  \geq \Delta\kappa^2_k/24. \label{eq:estimation population size}
			\end{align}

	\vskip 3mm
	\noindent {\bf Step 3.}  We next apply  \Cref{prop:operator norm} by letting $\varepsilon = C_{\varepsilon}\log(T)$, with $C_{\varepsilon} > 12$.  Define the event  
		\[
			\mathcal A =\left\{ \sup_{0\le s < t < e\le T}  \|  \widetilde A^{s,e} (t)  - \widetilde \Theta^{s,e} (t)\|_{\op} \le C \sqrt {n \rho } + C_{\varepsilon}\log(T)\right\},
     	\]
     	where $C > 64 \times 2^{1/4}e^2$.  Due to \Cref{prop:operator norm}, we have $ \mathbb{P} \left(  \mathcal A \right) \ge  1- 2T^{3-C_{\varepsilon}/4}$.
     	
     We then apply \Cref{lemma:USVT}. Set $\tau_2 = (3/4)(C\sqrt{n\rho} + C_{\varepsilon}\log(T))$, and define
		\[
			\mathcal B =  \left \{ \sup_{0 \leq s < t < e \le T}  \| \usvt (\widetilde A^{s,e} (t) ,\tau_2,\infty )- \widetilde \Theta^{s,e}(t) \|_\mathrm{F} \le 3\sqrt{r}\bigl(C\sqrt{n\rho} + C_{\varepsilon}\log(T)\bigr) \right\}.
		\]
		In order to apply \Cref{lemma:USVT}, let $A = \widetilde A^{s,e} (t)$, $B =\widetilde \Theta^{s,e} (t)$ and $\tau = \tau_2$.  We then have $\mathbb{P}(\mathcal B) \ge 1- 2T^{3-C_{\varepsilon}/4}$. 
		
		Let 
		\begin{align}\label{eq:usvt on first samples}  
			\widehat A^{s,e} (\nu_k) =\usvt (\widetilde A^{s,e} (\nu_k) ,\tau_2,\tau_3\widetilde{\Delta}_k).
		\end{align}
		Since $ \nu_k  \le \eta_k$, for any $i, j = 1, \ldots, n$, it holds that 
		\[
			\widetilde \Theta^{s,e}_{ij} (\nu_k) = \sqrt {  \frac {\nu_k-s}{(e-s)(e-\nu_k)}  } (e-\eta_k)   \Lambda_{ij}(k) \le  \widetilde{\Delta}_k \rho  \frac{e-\eta_k}{e-\nu_k}  \le  \widetilde{\Delta}_k\rho = \widetilde{\Delta}_k\tau_3.
		\]

		On the event $ \mathcal B$, 
		\[
			\| \widehat A^{s,e} (\nu_k)- \widetilde \Theta^{s,e}(\nu_k) \|_\mathrm{F}  \le \| \usvt (\widetilde A^{s,e} (\nu_k) ,\tau_2,\infty )- \widetilde \Theta^{s,e}(\nu_k) \|_\mathrm{F} \le 3\sqrt{r}\bigl(C\sqrt{n\rho} + C_{\varepsilon}\log(T)\bigr).
		\]
		By the triangle inequality and \Cref{assume:phase 2}, we have that
		\begin{align}\label{eq:sample size frobenius}
			\|  \widehat A^{s,e} (\nu_k) \|_\mathrm{F}  \ge \| \widetilde \Theta^{s,e}(\nu_k) \|_\mathrm{F} - 3\sqrt{r}\bigl(C\sqrt{n\rho} + C_{\varepsilon}\log(T)\bigr) \ge c_1' \sqrt {\Delta} \kappa_k,
		\end{align}
		where
		\[
		c_1' \leq 1/\sqrt{24} - \frac{3C}{C_{\alpha}\log^{1+\xi}(2)} - \frac{3C_{\varepsilon}}{C_{\alpha}\log^{1/2 + \xi}(2)},
		\]
		for any $n, T \geq 2$.  As a consequence,
		\begin{align*}
			 & 2 \left(\frac{\widetilde \Theta^{s,e}(\nu_k)  } {\|\widetilde \Theta^{s,e}(\nu_k)  \|_\mathrm{F}}, \frac{\widehat A^{s,e} (\nu_k) }{ \|\widehat A^{s,e} (\nu_k)  \|_\mathrm{F}}  \right)  = 2 - \left \| \frac{\widetilde \Theta^{s,e}(\nu_k)  }{\|\widetilde \Theta^{s,e}(\nu_k)  \|_\mathrm{F}} - \frac{\widehat A^{s,e} (\nu_k)}{\|\widehat A^{s,e} (\nu_k) \|_\mathrm{F} }\right\|_\mathrm{F}^2  \\
			\ge &  2-  4  \left( \frac{ \| \widetilde \Theta^{s,e}(\nu_k)  - \widehat A^{s,e} (\nu_k)\|_\mathrm{F} }{ \max \left \{  \| \widetilde \Theta^{s,e}(\nu_k)\|_\mathrm{F},  \|\widehat A^{s,e} (\nu_k)\|_\mathrm{F} \right\}}  \right)^2 \ge 2- \frac{9r \bigl(C\sqrt{n\rho} + C_{\varepsilon}\log(T)\bigr)^2}{(c'_1)^2\kappa^2_k   \Delta} \ge 1,
		\end{align*}
		where the second inequality follows from the definition of the event $\mathcal B$ and from \eqref{eq:estimation population size}, while the last inequality follows from \Cref{assume:phase 2} with a sufficiently large $C_\alpha$.  Therefore, 
    	\begin{equation}\label{eq:population size after projection}
    		(\widetilde \Theta^{s,e}(\nu_k)  , \widehat A^{s,e} (\nu_k) / \|\widehat A^{s,e} (\nu_k) \|_{\mathrm{F}} )\ge  \|\widetilde \Theta^{s,e}(\nu_k) \|_{\mathrm{F}} /2  \ge  (4\sqrt{6})^{-1}\sqrt {\Delta} \kappa_k,
    	\end{equation}
    	where in the last inequality we have used again \eqref{eq:estimation population size}. 

	\vskip 3mm
	\noindent {\bf Step 4.} Since $\{ B(t)\}_{t=1}^T $ is independent of $\{ A(t)\}_{t=}^T $, the distribution of $\{ B( t)\}_{t=1}^T $ does not change on the event $\mathcal B$.  Observe that, from \eqref{eq:usvt on first samples}, 
		\[
			\| \widehat A^{s,e} (\nu_k)\|_{\infty} \le \widetilde{\Delta}_k \tau_3  = \widetilde{\Delta}_k\rho.
		\]
		In combination with \eqref{eq:sample size frobenius}, the previous inequality implies that 
		\[
			(e-s)^{-1/2}\|\widehat A^{s,e} (\nu_k)\|_{\infty} / \| \widehat A^{s,e} (\nu_k)\|_F  \le \frac{\rho}{c_1'\sqrt{\Delta}\kappa_k} .
		\]	
	Using this bound along with  \Cref{lemma:berstein for zero one}, we obtain that, for any $\varepsilon > 0$,
		\[
			\mathbb{P} \left (\left |\frac{1}{\sqrt {e-s }} \sum_{t=s+1} ^e \left( \Theta (t)-B(t), \, \widehat A^{s,e} (\nu_k) / \| \widehat A^{s,e} (\nu_k)\|_\mathrm{F} \right)  \right| \ge \varepsilon  \right) \le 2\exp\left( \frac{-3/2\varepsilon^2}{3\rho + \varepsilon \rho/(c_1'\kappa_k\sqrt{\Delta})}\right).
		\]
		Setting $\varepsilon = C\sqrt {\rho } \log (T) $, with $C > 2\sqrt{2}$, we finally obtain the probabilistic bound
		\begin{align}\label{eq:1d bound 1}
			\mathbb{P}\left (   \left |\frac{1}{\sqrt {e-s }}  \sum_{t=s} ^e \left( \Theta (t) -B(t),\, \widehat A^{s,e} (\nu_k) / \| \widehat A^{s,e} (\nu_k)\|_\mathrm{F} \right )  \right| \ge C\sqrt {\rho } \log (T)  \right) \le 2T^{-3C^2/8}.
		\end{align}
		Similar arguments also show that
		\begin{align}\label{eq:1d bound 2}
			\mathbb{P} \left (   \left |  \left( \widetilde \Theta^{s,e}(t) -\widetilde B^{s,e}(t),\, \widehat A^{s,e} (\nu_k) / \| \widehat A^{s,e} (\nu_k)\|_{\mathrm{F}} \right )  \right| \ge C\sqrt {\rho } \log (T)  \right) \le 2T^{-3C^2/8}.
		\end{align}
		
	\vskip 3mm	
	\noindent {\bf Step 5.}  Consider the one dimensional time series $y(t) = (B(t), \widehat A^{s,e} (\nu_k) / \| \widehat A^{s,e} (\nu_k)  \|_\mathrm{F} )$.  Conditional on $\{A(t)\}_{t=1}^T$,  on the event $\mathcal B$, it holds that
		\[
		t \in [s,e] \mapsto	f(t) : = \mathbb{E}(y(t)) = (\Theta(t), \widehat A^{s,e} (\nu_k) / \| \widehat A^{s,e} (\nu_k) )\|_\mathrm{F}) 
		\]
		is a piecewise constant function  with only one change point, namely $\eta_k$.  Due to \eqref{eq:population size after projection}, it holds that 
		\[
			| \widetilde f^{s,e}(\eta_k )| = |(\widetilde \Theta^{s,e}(\eta_k), \widehat A^{s,e} (\nu_k) / \| \widehat A^{s,e} (\nu_k)  \|_\mathrm{F} )|  \ge |(\widetilde \Theta^{s,e}(\nu_k), \widehat A^{s,e} (\nu_k) / \| \widehat A^{s,e} (\nu_k)  \| _\mathrm{F})|    \ge (4\sqrt{6})^{-1}\sqrt \Delta\kappa_k, 
		\]
		and, by \eqref{eq:1d bound 1} and \eqref{eq:1d bound 2},
		\[
			 \mathbb{P} \left (  \sup_{s\le t\le e} \left |\frac{1}{\sqrt {e-s }}  \sum_{t=s} ^e \bigl(x(t) -f(t)\bigr)  \right| \ge C\sqrt {\rho } \log (T)  \right) \le 2T^{-c}
			 \]
			 and
\[
 \mathbb{P} \left (   \sup_{s\le t\le e} \left |   \widetilde x^{s,e}(t) -\widetilde f^{s,e}(t)  \right| \ge C\sqrt {\rho } \log (T)  \right) \le 2T^{-c},
		\]
		where $c = 3(C^2/8-1) > 0$.  We then apply Lemma 12 in \cite{WangEtal2017} by setting $\lambda = C\sqrt{\rho}\log(T)$.  It follows that $b_k = \arg\max_{s < t < e} | \widetilde x^{s,e}(t) |$ is an undetected change point such that, for a large enough constant $C_2 > 0$, 
		\[
			| b_k - \eta_k|  \le C_2 \frac{\rho (\log T)^2}{\kappa_k^2}. 
		\]
     \end{proof}

\begin{proof}[Proof of \Cref{thm-sbm-main}]
	In the proof of \Cref{theorem:localization 1}, note that arguments in {\bf Steps 1} and  {\bf 2} still hold under Assumptions in this theorem, and arguments in {\bf Steps 4} and {\bf 5} will still hold if the conclusions in {\bf Step 3} still hold.

	Let $[s,e]$ be defined as that in the proof of \Cref{theorem:localization 1}. We apply  \Cref{prop:operator norm} by letting $\varepsilon = C_{\varepsilon}\log(T)$, with $C_{\varepsilon} > 12$.  Define the event  
		\[
			\mathcal A' =\left\{ \sup_{0 \le s < t < e\le T}  \|  \widetilde A^{s,e} (t)  - \widetilde \Theta^{s,e} (t)\|_{\op} \le C \sqrt {n \rho } + C_{\varepsilon}\log(T)\right\},
     	\]
     	where $C > 64 \times 2^{1/4}e^2$.  Due to \Cref{prop:operator norm}, we have $ \mathbb{P} \left(  \mathcal A' \right) \ge  1- 2T^{3-C_{\varepsilon}/4}$.
     	
	For $t \in \{1, \ldots, T\}$, define $\Gamma(t)$ to be the block structure matrix satisfying
		\[
			\Gamma(t) - \diag(\Gamma(t)) = \Theta(t);
		\]
		in addition, for any $s < t< e$, define
		\[
			\widetilde{\Gamma}^{s, e}(t) = \sqrt{\frac{e-t}{(e-s)(t-s)}} \sum_{i = s+1}^t \Gamma(i) - \sqrt{\frac{t-s}{(e-s)(e-t)}} \sum_{i = t+1}^e \Gamma(i).
		\]	
	
	By \Cref{lemma:SBM frobenius norm}, on the event $\mathcal A'$, it holds that 
		\begin{align*}
			& \mathcal B' = \bigg\{\sup_{0\leq  s < t < e\le T} \| \usvt(\widetilde A^{s,e} (t), \tau_2,\infty)-\widetilde \Gamma^{s,e}( t ) \|_{\mathrm{F}}^2 \\
			\le &  9r \bigl(C\sqrt{n\rho} + C_{\varepsilon}\log(T)\bigr)^2 + 512 \| \diag ( \widetilde \Gamma^{s,e}(\nu_k) )\|_\mathrm{F}^2\bigg\}.
		\end{align*}
	
	Let 
		\[
			\widehat A^{s,e} (\nu_k) =\usvt (\widetilde A^{s,e} (\nu_k) ,\tau_2, \widetilde{\Delta}_k\tau_3 ).
		\]

	Observe that since $ \nu_k  \le \eta_k$ and $\|\widetilde \Lambda^{s,e} (\nu_k)\|_{\infty} \le \widetilde{\Delta}_k \tau_3$, on the event $\mathcal B'$ it holds that
		\begin{align*}
			& \| \widehat A^{s,e} (\nu_k)- \widetilde \Gamma^{s,e}(\nu_k) \|_{\mathrm{F}}   \le \| \usvt (\widetilde A^{s,e} (\nu_k) ,\tau_2,\infty )- \widetilde \Gamma^{s,e}(\nu_k) \|_{\mathrm{F}} \\
			\le & 3\sqrt{r}\bigl(C\sqrt{n\rho} + C_{\varepsilon}\log(T)\bigr) + 16\sqrt{2}\| \diag (\widetilde \Gamma^{s,e}(\nu_k) )\|_{\mathrm{F}}.
		\end{align*}
	Since $[s,e]$ contains only one change point $\eta_k$, by \Cref{assume:SBM frobenius norm} and \Cref{lemma:one change point population size},
		\begin{align}
			\|  \widehat A^{s,e} (\nu_k) \|_{\mathrm{F}} & \ge \| \widetilde \Gamma^{s,e}(\nu_k) \|_{\mathrm{F}} - 3\sqrt{r}\bigl(C\sqrt{n\rho} + C_{\varepsilon}\log(T)\bigr) - 16\sqrt{2}\| \diag (\widetilde \Gamma^{s,e}(\nu_k) )\|_{\mathrm{F}} \nonumber \\
			& \ge (1 - 16\sqrt{2}/C_{\Gamma})  \| \widetilde \Gamma^{s,e}(\nu_k) \|_{\mathrm{F}} - 3\sqrt{r}\bigl(C\sqrt{n\rho} + C_{\varepsilon}\log(T)\bigr) \nonumber \\
			& \ge \frac{1 - 16\sqrt{2}/C_{\Gamma}}{1 + C_{\Gamma}} \| \widetilde \Theta^{s,e}(\nu_k) \|_{\mathrm{F}} - 3\sqrt{r}\bigl(C\sqrt{n\rho} + C_{\varepsilon}\log(T)\bigr) \ge  c_1' \sqrt {\Delta} \kappa_k, \label{eq-longgg}
		\end{align}
		with $c_1' > 0$ by choosing proper constants.  \Cref{eq-longgg} follows from the fact that
		\[
			\| \widetilde \Theta^{s,e}(\nu_k) \|_{\mathrm{F}} \leq \|\widetilde \Gamma^{s,e}(\nu_k)\|_{\mathrm{F}} + \|\diag(\widetilde \Gamma^{s,e}(\nu_k))\|_{\mathrm{F}} \leq (1 + C_{\Gamma}) \|\widetilde \Gamma^{s,e}(\nu_k)\|_{\mathrm{F}}.
		\]
		As a consequence,
		\begin{align*}
			& 2\left(\frac{\widetilde \Theta^{s,e}(\nu_k)} {\|\widetilde \Theta^{s,e}(\nu_k) \|_{\mathrm{F}}}, \, \frac{\widehat A^{s,e} (\nu_k) }{ \|\widehat A^{s,e} (\nu_k)  \|_{\mathrm{F}}}  \right) = 2 - \left \| \frac{\widetilde \Theta^{s,e}(\nu_k)  }{\|\widetilde \Theta^{s,e}(\nu_k)  \|_{\mathrm{F}}} - \frac{\widehat A^{s,e} (\nu_k)}{\|\widehat A^{s,e} (\nu_k) \|_{\mathrm{F}}}\right\|_{\mathrm{F}}^2 \\
			= & 2 - \frac{\|\|\widehat A^{s,e} (\nu_k)\|_{\mathrm{F}}\Theta^{s,e}(\nu_k) -\| \widetilde \Theta^{s,e}(\nu_k) \|_{\mathrm{F}} \widehat A^{s,e} (\nu_k) \|_{\mathrm{F}}^2}{\| \widetilde \Theta^{s,e}(\nu_k) \|_{\mathrm{F}}^2  \|\widehat A^{s,e} (\nu_k)\|_{\mathrm{F}}^2 }\\
			\geq & 2 - \frac{ \| \widetilde \Theta^{s,e}(\nu_k)  - \widehat A^{s,e} (\nu_k)\|_{\mathrm{F}}^2}{\| \widetilde \Theta^{s,e}(\nu_k) \|_{\mathrm{F}}^2} - \frac{\left|\|\widehat A^{s,e} (\nu_k)\|_{\mathrm{F}}^2 - \|\widetilde \Theta^{s,e}(\nu_k)\|_{\mathrm{F}}^2\right|}{\| \widetilde \Theta^{s,e}(\nu_k) \|_{\mathrm{F}}^2}\\
			\ge & 2-  2  \frac{ \| \widetilde \Theta^{s,e}(\nu_k)  - \widehat A^{s,e} (\nu_k)\|_{\mathrm{F}}^2 }{\| \widetilde \Theta^{s,e}(\nu_k) \|_{\mathrm{F}}^2}   \\
			\ge &  2- 2\left(\frac{9r\bigl(C\sqrt{n\rho} + C_{\varepsilon}\log(T)\bigr)^2}{(c_1')^2\kappa_k^2\Delta}  + \frac{513 \|\diag (\widetilde \Lambda^{s,e}(\nu_k) )\|_{\mathrm{F}}}{\|\widetilde \Theta^{s,e}(\nu_k) \|_{\mathrm{F}}}   \right)  \ge 1,
		\end{align*}
		where the second inequality follows from \eqref{eq:estimation population size} and the event $\mathcal B'$, and the last inequality follows from  \Cref{assume:model 2} and \eqref{eq-longgg}.  Therefore 
		\begin{equation*}
			(\widetilde \Theta^{s,e}(\nu_k)  , \widehat A^{s,e} (\nu_k) / \|\widehat A^{s,e} (\nu_k) \|_{\mathrm{F}} )\ge  1/2\|\widetilde \Theta^{s,e}(\nu_k) \|_{\mathrm{F}} \geq c''\sqrt{\Delta}\kappa_k.
		\end{equation*}
%
		Thus all the conclusions in  {\bf Step 3} of  the proof of \Cref{theorem:localization 1} still hold.	
	\end{proof}

\section{Proofs of Lemmas 3 and 4}\label{sec-lower-proof}
In this subsection, we provide proofs of \Cref{lemma:lower bound testing} in \Cref{section:consistent testing} and \Cref{lem-3.3-lower} in \Cref{subsection:LR upper}, which provide the minimax lower bounds for detection and localization respectively.  In addition, \Cref{lemma:TV} is used in the proofs of Lemmas~\ref{lemma:lower bound testing} and \ref{lem-3.3-lower}.

\begin{lemma} \label{lemma:TV}
	Let $ \Theta \in \mathbb R^{n\times n}$ such that $\Theta_{ij} =\rho$ for all $1\le i,j\le n$, where $0 < \rho < 1/2$.  Let $A$ be an adjacency matrix of an inhomogeneous Bernoulli network with independent edges such that $ \mathbb{E}(A)=\Theta$. For any $v_b ,v_c\in [-\sqrt{\rho}, \sqrt{\rho}]^{ n}$, let $B$ and $C$ be adjacency matrices of inhomogeneous Bernoulli networks with independent edges such that $\mathbb{E}(B) =v_bv_b^{\top}+ \Theta$ and $\mathbb{E}(C) =v_cv_c^{\top} +\Theta$.  Let $P_A, P_B, P_C$ be the distributions of $A$, $B$ and $C$. Then
		\[
			\mathbb{E}_{P_A} \left(\frac{dP_B}{dP_A} \frac{dP_C}{dP_A}\right) \le\exp\left( \frac{ (v_b^{\top} v_c )^2}{\rho(1-\rho)}     \right).
		\]

	Let $A' = A-\mathrm{diag}(A)$, $B'=B-\mathrm{diag} ( B)$ and $C'=C-\mathrm{diag}(C)$.  Then
		\[
			\mathbb{E}_{P_{A'}} \left(\frac{dP_ {B'}}{dP_{A'}} \frac{dP_{C'}}{dP_{A'}}\right) \le\exp\left( \frac{ (v_b^{\top} v_c )^2}{\rho(1-\rho)}     \right).
		\]
\end{lemma}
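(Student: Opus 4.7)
The plan is to exploit the tensor structure of the three inhomogeneous Bernoulli models: since the entries of $A$, $B$, and $C$ are independent across index pairs, the likelihood ratios $dP_B/dP_A$ and $dP_C/dP_A$ factor entrywise, and so does their $P_A$-expectation. Thus the whole problem reduces to a single-entry computation that I then multiply up.

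For the entrywise computation, fix an index pair $(i,j)$ and write $p = \rho$, $a = v_b(i)v_b(j)$, $b = v_c(i)v_c(j)$, so that the Bernoulli means under $P_A$, $P_B$, $P_C$ are $p$, $p+a$, $p+b$ respectively. A direct two-term sum gives
\[
\sum_{x\in\{0,1\}} \frac{P_B(x)P_C(x)}{P_A(x)} = \frac{(p+a)(p+b)}{p} + \frac{(1-p-a)(1-p-b)}{1-p},
\]
and putting both terms over the common denominator $p(1-p)$ and expanding, all the linear-in-$a$ and linear-in-$b$ terms cancel and one is left with $p(1-p) + ab$ in the numerator. Hence each entry contributes the factor $1 + \frac{ab}{p(1-p)} = 1 + \frac{v_b(i)v_b(j)v_c(i)v_c(j)}{\rho(1-\rho)}$.

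Taking the product over all independent entries and applying $1+x\le e^x$ (valid because the product is of positive quantities, so we may freely bound factor by factor), I obtain
\[
\mathbb{E}_{P_A}\!\left(\frac{dP_B}{dP_A}\frac{dP_C}{dP_A}\right) \le \exp\!\left(\frac{1}{\rho(1-\rho)}\sum_{(i,j)} v_b(i)v_c(i)\,v_b(j)v_c(j)\right).
\]
With $w_i := v_b(i)v_c(i)$ the inner double sum over all $(i,j)\in\{1,\dots,n\}^2$ collapses to $(\sum_i w_i)^2 = (v_b^\top v_c)^2$, which gives the first claimed bound. The condition $v_b,v_c\in[-\sqrt{\rho},\sqrt{\rho}]^n$ is used only to ensure that $p+a,p+b\in[0,1]$ so that $P_B,P_C$ are legitimate Bernoulli distributions (guaranteeing the entrywise ratio is well-defined). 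For the second statement, passing from $A$ to $A' = A-\diag(A)$ simply drops the diagonal entries from the product, so the same entrywise identity holds but the sum runs over $i\neq j$; since $w_i w_j \ge 0$ when we sum the cross-terms against the diagonal terms, one has $\sum_{i\neq j} w_i w_j = (v_b^\top v_c)^2 - \sum_i w_i^2 \le (v_b^\top v_c)^2$, and the same exponential bound follows.

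The only mild subtlety — really not an obstacle — is to be consistent about which pairs $(i,j)$ index independent Bernoulli coordinates. Treating the adjacency matrix as the $n^2$-dimensional independent-entry object used elsewhere in the appendix (consistent with $p = n^2$) makes the sum over $(i,j)$ complete a perfect square; if one instead restricts to $i\le j$ to respect symmetry, the sum becomes $\tfrac12[(v_b^\top v_c)^2 + \|w\|^2]$, which differs only by an absorbable constant, so either convention gives the stated bound up to notation. The cleanest route, which I would adopt, is to work with the full-independence convention used in the rest of the lower-bound appendix.
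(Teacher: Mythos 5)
Your proof is correct and follows essentially the same route as the paper's: factor the likelihood-ratio product entrywise by independence, show each factor equals $1 + \Gamma_{ij}\Lambda_{ij}/(\rho(1-\rho))$ with $\Gamma = v_bv_b^{\top}$, $\Lambda = v_cv_c^{\top}$, apply $1+x\le e^x$, and observe that dropping the diagonal only removes nonnegative exponents. The paper likewise adopts the full product over all $1\le i,j\le n$ (consistent with its $p=n^2$ vectorized convention), so your closing remark about the indexing convention matches what the authors actually do.
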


\begin{proof}
	Let $\Gamma =v_bv_b^{\top}$ and $\Lambda=v_cv_c^{\top}$.
		\begin{align*}
			& \mathbb{E}_{P_A} \left(\frac{dP_B}{dP_A} \frac{dP_C}{dP_A}\right)  = \prod_{1\le i,j\le n} \left(  \frac{(\Gamma_{ij}+\rho) (\Lambda_{ij} +\rho)}{ \rho} + \frac{ (1-\Gamma_{ij} -\rho) (1-\Lambda_{ij} -\rho)}{ (1-\rho)}\right) \\
			= &\prod_{1\le i,j\le n} \left(  1+ \frac{\Gamma_{ij}\Lambda_{ij} }{ \rho (1-\rho)} \right) \le  \prod_{1\le i,j\le n} \exp\left( \frac{\Gamma_{ij}\Lambda_{ij} }{ \rho (1-\rho)}\right) = \exp\left( \frac{ (\Gamma, \Lambda)}{\rho(1-\rho)}     \right) =  \exp\left( \frac{ (v_b^{\top}v_c)^2}{\rho(1-\rho)}     \right).
		\end{align*}
	
	Note that
		\begin{align*}
			& \mathbb{E}_{P_{A'}} \left(\frac{dP_ {B'}}{dP_{A'}} \frac{dP_{C'}}{dP_{A'}}\right) = \prod_{ i\not = j} \left(  \frac{(\Gamma_{ij}+\rho) (\Lambda_{ij} +\rho)}{ \rho} + \frac{ (1-\Gamma_{ij} -\rho) (1-\Lambda_{ij} -\rho)}{ (1-\rho)}\right) \\
			= & \prod_{i\not = j} \left(  1+ \frac{\Gamma_{ij}\Lambda_{ij} }{ \rho (1-\rho)} \right) \le \prod_{1\le i , j\le n} \left(  1+ \frac{\Gamma_{ij}\Lambda_{ij} }{ \rho (1-\rho)} \right) \le  \prod_{1\le i,j\le n} \exp\left( \frac{\Gamma_{ij}\Lambda_{ij} }{ \rho (1-\rho)}\right) = \exp\left( \frac{ (v_b^{\top}v_c)^2}{\rho(1-\rho)} \right),
		\end{align*}
		where the fist inequality follows from the observation that $\Gamma_{ii} = (v_b)_i^2 \ge 0$ and  $\Lambda_{ii}= (v_c)_i^2 \geq 0$.
\end{proof}

\begin{remark}
	Let  $\Theta_{ij} = \rho +(vv^{\top})_{ij}$, where $v\in \{ \pm \sqrt{\kappa_0\rho } \}^{n}$, $0 < \rho < 1/2$ and $0 < \kappa_0 < 1$, then the community labels can be decided according to the vector $\mathrm{sign}(v)$. More precisely let 
		\[
			\mathcal C_1 = \{ i:\, v_i >0 \} \, \mbox{ and } \, \mathcal C_2 = \{ i:\, v_i<0 \}. 
		\]
		The probability within  $\mathcal C_1$ or $\mathcal C_2$ is $\rho (1+\kappa_0)$.  The  probability between $\mathcal C_1$ and $\mathcal C_2$ is $\rho (1-\kappa_0)$.
\end{remark}

\begin{proof}[{\bf Proof of \Cref{lemma:lower bound testing}}]
		Without loss of generality, suppose that $L = 4^{-1}T\log^{-1}(T)$ is an integer.  For $l \in \{1, \ldots, L\}$, $v \in \{1, -1\}^n$, let $\widetilde{P}^l$ be the joint distribution of a collection of independent adjacency matrices $\{A(t)\}_{t=1}^T$ such that
			\[
				\mathbb{E}\{(A(t))_{ij}\} = \rho/2, \quad i, j \in \{1, \ldots, n\}, \quad t \in \{1, \ldots, T\} \setminus \{(l-1)\log(T) + 1, \ldots,l \log(T)\},
			\]
			and 
			\[
				\mathbb{E}\{(A(t))_{ij}\} = \rho/2 + \frac{\rho^{1/2}(vv^{\top})_{ij}}{n^{1/2}}, \quad i, j \in \{1, \ldots, n\}, \quad t \in \{(l-1)\log(T) + 1, \ldots,l \log(T)\}.
			\]	
			Let $\widetilde{Q}^l = \widetilde{P}^{4L-l}$,
			\[
				\widetilde{P} = \frac{1}{L} \sum_{l = 1}^L  \widetilde{P}^l \quad \mbox{and }\quad  \widetilde{Q} =\frac{1}{L} \sum_{l = 1}^L  \widetilde{Q}^l.
			\]
			
			Note that for each $l \in \{1, \ldots, L\}$, $\widetilde{P}^l$ has two change points and $\Delta = \log(T)$.  Furthermore, 
			\[
				\max_{t = 1, \ldots, T}\|\mathbb{E}(A(t))\|_{\infty} \leq \rho \quad \mbox{and }\quad \kappa_0 = (n\rho)^{-1/2}.  
			\]
			As a result,
			\[
				\kappa_0\sqrt{n\rho \Delta} = \log^{1/2}(T),
			\]
			which implies that $\widetilde{P}^l \in \mathcal{P}$ and therefore $\widetilde{Q}^l \in \mathcal{P}$ for all $l$.
			
			It follows from Le Cam's lemma that
			\[
				\inf_{\widehat{\eta}} \sup_{ P \in \mathcal{P} } \mathbb{E}_P(H(\widehat{\eta}, \eta(P)))  \geq \frac{T}{4}\{1 - d_{\mathrm{TV}}(\widetilde{P}, \widetilde{Q})\}.
			\]
			
			Let $P^l$ be a finite-dimensional distribution of $\widetilde{P}^l$ consisting of only the first $T/2$ time points and $P_0$ be the joint distribution of a collection of independent adjacency matrices $\{B(t)\}_{t=1}^T$ such that
			\[
				\mathbb{E}\{(B(t))_{ij}\} = \rho/2, \quad i, j \in \{1, \ldots, n\}, \quad t \in \{1, \ldots, T/2\}.
			\]
			Due to the symmetry, we have
			\[
				d_{\mathrm{TV}}(\widetilde{P}, \widetilde{Q}) \leq 2 d_{\mathrm{TV}}(P, P_0),
			\]
			where 
			\[
				P = \frac{2}{L}\sum_{l = 1}^{L/2} P^l.
			\]
			Since $d_{\mathrm{TV}}(\cdot, \cdot) \leq \sqrt{\chi^2(\cdot, \cdot)}$, it suffices to bound $\chi^2(P, P_0)$.  We have
			\begin{align*}
				\chi^2(P, P_0) & = \left(\frac{2}{L}\right)^2 \left[\sum_{l = 1}^{L/2} \mathbb{E}_{P_0} \left(\frac{dP^l}{dP_0}\frac{dP^l}{dP_0}\right) + (L/2)(L/2 - 1)\right] - 1 \\
				& \leq \left(\frac{2}{L}\right)^2 \left[L/2 \exp(2\rho)^{\log(T)} + (L/2)(L/2 - 1)\right] - 1  \\
				& = (2/L)\left\{\exp\left(2\rho\log(T) \right) - 1\right\} = 8\log(T)(T^{2\rho - 1} - T^{-1}),
			\end{align*}
			where the inequality follows from \Cref{lemma:TV}.
			Therefore, since $\rho < 1/2$, there exits a sufficiently large $T$ such that $\log(T)(T^{2\rho - 1} - T^{-1}) = 64^{-1}$ and that concludes the proof.
	\end{proof}

\begin{proof}[{\bf Proof of \Cref{lem-3.3-lower}}]
	Let $\Theta (1),\Theta(2) \in \mathbb R^{n\times n}$ be such that for all $i, j = 1, \ldots, n$, $\Theta_{ij}(1) =\rho/2 $ and that  $\Theta_{ij}(2) = \rho/2 + \kappa_0\rho$.  Since $\kappa_0\le 1/ 2$, it holds that $\| \Theta (2) \|_{\infty}\le \rho$.

	For $\delta>0$ to be chosen later, let $P_1^\delta$ be the joint distribution of a collection of independent  adjacency matrices $\{A(t )\}_{t=1}^T$ such that 
		\[
			\mathbb{E} (A (t)) = \begin{cases}
				\Theta(1), & \mbox{if } t\le T/2+\delta, \\
				\Theta(2), & \mbox{if } t > T/2 +\delta.
			\end{cases}
		\]
		Let $P_2^\delta$ be the joint distribution of a collection of independent  adjacency matrices $\{B(t)\}_{t=1}^T$  such that 
		\[
			\mathbb{E} (B(t) ) = \begin{cases}
				\Theta(1), & \mbox{if } t\le T/2, \\
				\Theta(2), & \mbox{if } t > T/2.
			\end{cases}
		\]
		Then we have,
		\begin{align*}
			& 2d_{TV}^2 (P_1, P_2) \le  KL (P_1, P_2) \\
			= & \delta n^2 \left(  (\rho/2 + \kappa_0\rho) \log\left( \frac{\rho/2 + \kappa_0\rho}{\rho/2 }  \right)  	+  (1-\rho/2- \kappa_0\rho)\log\left( \frac{  1-\rho/2 - \kappa_0\rho}{1- \rho/2 } \right) \right)	\\
			\le & \delta n^2\left( (\rho/2 + \kappa_0\rho)\frac{ \kappa_0\rho}{\rho/2 } 	+ (1-\rho/2- \kappa_0\rho)\frac{ -  \kappa_0\rho}{1- \rho/2 } \right) 	\\
			= & \delta n^2\left(\kappa_0\rho + 2 \kappa^2_0 \rho -\kappa_0\rho +  \kappa_0^2 \rho^2(1-\rho/2)^{-1} \right) \le 4\delta \kappa_0^2n^2\rho = 4\delta\kappa^2_0n^2\rho.
		\end{align*}
	Since  
		\[
			\inf_{\hat \eta} \sup_{P\in \mathcal P} \mathbb{E}_P (| \hat \eta -\eta |)\ge  \delta (1- d_{TV} (P_1,P_2 )),
		\]
		taking $\delta =  \frac{1}{8\kappa^2n^2\rho}$, we have
		\[
			\inf_{\hat \eta} \sup_{P\in \mathcal P} \mathbb{E} _P (| \hat \eta -\eta |)\ge \frac{1}{16\kappa^2_0n^2\rho}.
		\]
\end{proof}

\section{Proofs of technical results used in Theorem 1 }\label{app:proofa}
 Throughout this section, for notational convenience we set $p = n^2$ and assume $\rho \sqrt{p} \geq \log(p)$.  We admit the discrepancy with \eqref{eq-assume-sparse} -- where we require $\rho n \geq \log(n)$. This will only affect the constants.

Observe that in \Cref{section:consistent testing}, no additional structure is imposed on the adjacency matrix.  In addition, for two matrices $A, B \in \mathbb{R}^{n\times n}$, we have
	\[
	(A, B) = \bigl\{\mathrm{vec}(A)\bigr\}^{\top}\mathrm{vec}(B),
	\]
	where $\mathrm{vec}(\cdot)$ is the vectorized version of a matrix by stacking the columns thereof.  It, therefore, suffices to view $A$ as a sparse Bernoulli vector with $p=n^2$ entries.  The assumptions below are vector versions of \Cref{assume:model}.  We  include them here for brevity.  

\begin{assumption}\label{assume:model bernoulli}
	Let $X(1),\ldots, X(T) \in \mathbb{R}^p$ be independent random vectors with independent Bernoulli entires. Suppose that the $i$th coordinate $X_i(t)$ of $X(t)$ satisfies $\mathbb{E}(X_i(t))=\mu_i(t)$ and that 
		\[
		\max_{1\le t\le T} \left\|\mu(t)\right\|_{\infty} \le \rho .
		\] 
\end{assumption}

Note that in fact if $A$ is an adjacency matrix of an inhomogeneous Bernoulli network defined in \Cref{def-1}, then due to symmetry, there are in fact $p = n(n-1)/2$ independent entries.  In this section, for notational simplicity, we let $p = n^2$ which has the same order as $n(n-1)/2$.

\begin{assumption}
	\label{assume:model bernoulli population}
	Let  $\{\eta_k\}_{k=0}^{K+1} \subset \{1, \ldots, T+1\}$ be a collection of change points, such that  $1 = \eta_0 < \eta_1 < \ldots < \eta_K \leq T < \eta_{K+1} = T + 1$ and, for $t=2,\ldots,T$, 
	\[
\mu(t) \neq \mu(t-1) \quad  \text{if and only if} \quad t \in \{ \eta_1,\ldots,\eta_K\}.
	\]
	Assume the spacing $\Delta $ satisfy that
	\[
	\Delta := \min_{k = 1, \ldots, K+1} \{\eta_k-\eta_{k-1}\} \leq T,
	\]
	and the normalized jump size $\kappa_0$ satisfies
	\[
	\inf_{k = 1, \ldots, K}\|\mu(\eta_k)-\mu(\eta_{k} -1)\| = \inf_{k = 1, \ldots, K}\kappa_{k} \geq \kappa_0\rho\sqrt{p} > 0.
	\]
\end{assumption}

\subsection{Probability bounds}
\label{section:probability bounds}

In this subsection, our task is to provide a probability bound for the event $\mathcal{A}(s, e, t)$ defined in \eqref{eq-event-A} to hold.  The result is formally stated in \Cref{lemma:exponential for sparse bernulli}, and necessary technical details are provided in Lemmas~\ref{lemma:berstein for zero one} and \ref{lemma:condition part}.

Suppose $\{ w_t\}_{t=1}^T \subset \mathbb R$ satisfies
	\begin{equation}\label{eq-wt-1}
	\sum_{t=1}^T w_t^2 =1. 
	\end{equation}

\begin{lemma}\label{lemma:berstein for zero one}
	Suppose \Cref{assume:model bernoulli} holds. Let $ v \in  \mathbb R^p$ and $\{w_t\}_{t=1}^T \subset \mathbb{R}$ satisfy \eqref{eq-wt-1}. Then for any $\varepsilon > 0$, we have
	$$\mathbb{P} \left (  \left |  \sum_{i=1} ^p v_i \sum_{t=1}^T w_t (X_i(t) -\mu_i(t)) \right| \ge \varepsilon  \right) \le 2\exp\left( 
	-\frac{3/2\varepsilon^2}{3\rho \| v\|_2^2  +\varepsilon  \max_{i=1}^p | v_i|  \max_{t=1}^T |w_t|}
	\right).$$
\end{lemma}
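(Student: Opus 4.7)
The plan is to recognize the quantity
\[
Y \;=\; \sum_{i=1}^{p} v_i \sum_{t=1}^{T} w_t \bigl(X_i(t)-\mu_i(t)\bigr) \;=\; \sum_{i=1}^{p}\sum_{t=1}^{T} Z_{i,t}
\]
as a sum of \emph{independent, mean-zero, uniformly bounded} random variables $Z_{i,t}:=v_i w_t(X_i(t)-\mu_i(t))$, and then to apply the classical Bernstein inequality. Independence across the pairs $(i,t)$ follows from \Cref{assume:model bernoulli}, which posits that the vectors $X(1),\dots,X(T)$ are independent with independent Bernoulli coordinates.

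Next I would record the two ingredients Bernstein requires. First, the almost-sure bound
\[
|Z_{i,t}| \;\le\; |v_i|\,|w_t| \;\le\; M \;:=\; \max_{1\le i\le p} |v_i|\cdot\max_{1\le t\le T} |w_t|,
\]
since $|X_i(t)-\mu_i(t)|\le 1$. Second, using $\mathrm{Var}(X_i(t))=\mu_i(t)(1-\mu_i(t))\le \mu_i(t)\le \rho$ from \Cref{assume:model bernoulli},
\[
\sigma^2 \;:=\; \sum_{i,t}\mathrm{Var}(Z_{i,t}) \;=\; \sum_{i,t} v_i^2 w_t^2\,\mu_i(t)(1-\mu_i(t)) \;\le\; \rho\,\|v\|_2^{2}\sum_{t=1}^{T} w_t^{2}\;=\;\rho\,\|v\|_2^{2},
\]
where the final equality uses the normalization $\sum_{t} w_t^{2}=1$ from \eqref{eq-wt-1}.

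Bernstein's inequality then yields, for every $\varepsilon>0$,
\[
\mathbb{P}(|Y|\ge \varepsilon) \;\le\; 2\exp\!\left(-\frac{\varepsilon^{2}/2}{\sigma^{2}+M\varepsilon/3}\right),
\]
and multiplying numerator and denominator of the exponent by $3$, together with the bounds $\sigma^{2}\le \rho\|v\|_2^{2}$ and $M\le \max_i|v_i|\max_t|w_t|$, gives the claimed
\[
2\exp\!\left(-\frac{(3/2)\varepsilon^{2}}{3\rho\,\|v\|_2^{2}+\varepsilon\,\max_i |v_i|\,\max_t|w_t|}\right).
\]
There is essentially no obstacle: the only thing to watch is to avoid loose scalar bookkeeping (e.g., not over-counting independence across $i$ and $t$ separately, and keeping the variance bound sharp by using $\mu_i(t)(1-\mu_i(t))\le \rho$ rather than $\rho(1-\rho)$), so that the constants in the exponent precisely match the stated form.
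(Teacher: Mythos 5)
Your proposal is correct and follows essentially the same route as the paper's own proof: bound the total variance by $\rho\|v\|_2^2$ using $\sum_t w_t^2=1$, bound each summand almost surely by $\max_i|v_i|\max_t|w_t|$, and invoke Bernstein's inequality, with the stated constants obtained by multiplying the exponent's numerator and denominator by $3$.
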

\begin{proof}
	Observe that  
	$$  \mathbb{E} \left( \sum_{i=1} ^p v_i \sum_{t=1}^T w_t (X_i(t) -\mu_i(t)) \right )^2  = \sum_{i=1}^p \sum_{t=1}^Tv_i^2 w_t^2\mathbb{E}(X_i(t) -\mu_i(t))^2  \le \rho \| v\|_2^2,$$
	due to the independence assumption and the fact that $\sum_{t=1}^Tw_t^2 = 1$, and that 
	$$ \max_{\stackrel{t = 1, \ldots, T}{i = 1, \ldots, p}}|w_tv_i(X_i(t) -\mu_i(t))  | \le  \max_{i=1}^p |v_i | \max_{t=1}^T|w_t |,$$
	since $X_i(t)$ is a Bernoulli random variable with mean $\mu_i(t)$.
	The desired result follows from Bernstein inequality.
\end{proof}

\begin{lemma}\label{lemma:condition part}
	Assume that the collection $\{Y(t)\}_{t=1}^T$ satisfies \Cref{assume:model bernoulli}.  Let $v = \sum_{t=1}^T w_t(Y(t) -\mu(t)) \in \mathbb{R}^p$.  Then there exists $C>0$ depending on $c > 0$ such that
		\[
			\mathbb{P}\left(\max_{1\le i\le p}v_i \ge C \sqrt {\log(p) \vee \log(T) } \right) \le T^{-c},
		\]
		and
		\[
			\mathbb{P}\left( \|v\| \ge C\sqrt {\log(p) \vee \log(T)} + \sqrt{\rho p}   \right) \le T^{-c},
		\]
\end{lemma}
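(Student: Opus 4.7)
The plan is to treat the two statements separately, using the fact that each coordinate $v_i = \sum_{t=1}^T w_t(Y_i(t) - \mu_i(t))$ is a weighted sum of independent centred Bernoulli increments whose ranges are $|w_t|$ and whose variances are at most $\rho$, under the normalisation $\sum_t w_t^2 = 1$. For the coordinate-wise maximum bound, Hoeffding's inequality immediately yields $\mathbb{P}(|v_i| \geq s) \leq 2\exp(-s^2/2)$ for each fixed $i$; equivalently, $\|v_i\|_{\psi_2} \leq C_0$ in the language of Lemma~5.5 of \cite{vershynin2010introduction}. A union bound over the $p$ coordinates, together with the choice $s = C\sqrt{\log(p) \vee \log(T)}$ and $C$ sufficiently large in terms of $c$, then gives the first claim.

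For the $\ell_2$-norm bound I would work with $\|v\|^2 = \sum_i v_i^2$ via a Bernstein-type inequality for independent centred sub-exponentials. The crucial observation beyond bare sub-Gaussianity is that the fourth moment is controlled by the sparsity: expanding $\bigl(\sum_t w_t(Y_i(t)-\mu_i(t))\bigr)^4$ and using independence across $t$, the only non-vanishing summands pair up the $t$-indices, so
\[
\mathbb{E} v_i^4 \leq \rho \sum_t w_t^4 + 3\rho^2\bigl(\textstyle\sum_t w_t^2\bigr)^2 \leq 4\rho,
\]
and hence $\mathrm{Var}(v_i^2) \leq 4\rho$ while $\|v_i^2\|_{\psi_1} = O(1)$. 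Bernstein for independent centred sub-exponentials that exploits both refinements yields
\[
\mathbb{P}\bigl(\bigl|\textstyle\sum_{i=1}^p (v_i^2 - \mathbb{E} v_i^2)\bigr| \geq t\bigr) \leq 2\exp\bigl(-c\min\{t^2/(p\rho),\, t\}\bigr).
\]
Choosing $t = c'\log T$ when $p\rho < \log T$ and $t = \sqrt{c'p\rho\log T}$ otherwise makes the right-hand side at most $T^{-c''}$. In either case $\sum_i v_i^2 \leq p\rho + t$, and taking square roots with the elementary inequality $\sqrt{a+b} \leq \sqrt a + \sqrt b$ together with the refinement $\sqrt{p\rho + \sqrt{p\rho\log T}} \leq \sqrt{p\rho} + \tfrac{1}{2}\sqrt{\log T}$, valid when $\log T \leq p\rho$, yields the claimed bound.

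The main obstacle is precisely the refined variance bound $\mathrm{Var}(v_i^2) = O(\rho)$ rather than the naive $O(1)$: without exploiting the Bernoulli sparsity, the standard sub-exponential Bernstein tail $\exp(-c\min(t^2/p, t))$ would force $t \gtrsim \sqrt{p\log T}$ and contribute a spurious $(p\log T)^{1/4}$ term instead of $\sqrt{\log T}$. An equivalent alternative would be to invoke Talagrand's concentration inequality for convex $1$-Lipschitz functionals on $[0,1]^{pT}$, applied to $F(y) = \|\sum_t w_t(y^{(t)} - \mu(t))\|$, which is convex as a norm of an affine map and $1$-Lipschitz by Cauchy-Schwarz combined with $\sum_t w_t^2 = 1$; this directly delivers sub-Gaussian concentration of $\|v\|$ around $\mathbb{E}\|v\| \leq \sqrt{p\rho}$ with a dimension-free parameter, reaching the same conclusion.
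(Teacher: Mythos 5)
For the first display your argument coincides with the paper's: each $v_i$ is sub-Gaussian with an absolute constant (Hoeffding, or Lemma~5.9 of \cite{vershynin2010introduction}), and a union bound over the $p$ coordinates at threshold $C\sqrt{\log(p)\vee\log(T)}$ gives the claim. For the second display, the route you relegate to an ``equivalent alternative'' is precisely the paper's proof: $y\mapsto\bigl\|\sum_t w_t(y^{(t)}-\mu(t))\bigr\|$ is a convex $1$-Lipschitz function of the $pT$ bounded independent entries, so the Talagrand-type inequality of \cite{samson:2000} (Corollary~4) yields $\mathbb{P}(\|v\|\ge\mathbb{E}\|v\|+\varepsilon)\le e^{-\varepsilon^2/2}$, and $\mathbb{E}\|v\|\le\sqrt{\sum_i\mathbb{E}(v_i^2)}\le\sqrt{\rho p}$ closes the argument; that part of your proposal is complete and correct. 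Your primary route via $\|v\|^2=\sum_i v_i^2$ is genuinely different, and the fourth-moment bound $\mathbb{E}(v_i^4)\le\rho\sum_t w_t^4+3\rho^2\le 4\rho$ is right, but the tail bound you then invoke, $2\exp(-c\min\{t^2/(p\rho),\,t\})$, is not the off-the-shelf Bernstein inequality for sub-exponential summands: the standard $\psi_1$ version (e.g.\ Proposition~5.16 of \cite{vershynin2010introduction}) carries the variance proxy $\sum_i\|v_i^2\|_{\psi_1}^2\asymp p$ rather than $\sum_i\mathrm{Var}(v_i^2)\asymp p\rho$, and the factor $\rho$ is exactly what you need to save. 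Obtaining a variance-sensitive tail for unbounded summands requires either verifying a Bernstein moment condition $\mathbb{E}|v_i^2-\mathbb{E}(v_i^2)|^k\le\tfrac{k!}{2}\sigma^2 b^{k-2}$ with $\sigma^2\asymp\rho$ and a bounded $b$ --- which does not follow from the second and fourth moments alone and is delicate here, since the higher moments of $v_i^2$ gain only a single factor of $\rho$ over their sub-exponential size --- or an Adamczak-type truncation argument. As written, this key step is asserted rather than proved, so the primary route has a genuine gap; the Lipschitz-concentration alternative you sketch is the one that actually works, and it is the paper's argument.
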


\begin{proof}
	For the first part observe that it follows from Lemma~5.9 in \cite{vershynin2010introduction} that there exists some absolute constant $C_1 > 0$ such that
		\[
		\|v_i \|_{\psi_2}^2 \le C_1\sum_{t=1}^T w_t^2 \|Y_i (t) -\mu_i(t) \|_{\psi_2}^2 \le 2C_1,
		\]
		where $\|\cdot\|_{\psi_2}$ is the Orlicz norm \citep[e.g. Definition~5.7 in][]{vershynin2010introduction}, and the second inequality follows from $\|Y_i (t) -\mu_i(t) \|_{\psi_2}^2 \leq 2$ and $\sum_{t=1}^Tw_t^2 = 1$.  Therefore for each $i = 1, \ldots, p$, $v_i$ is sub-Gaussian and there exist a constant $c > 0$ and a large enough $C >0$ depending on $c$ and $C_1$ such that 
		$$ \mathbb{P}\left(v_i \ge C\sqrt {\log(p) \vee \log(T) }  \right) \le (p \vee T) ^{-c-1}.$$
		Since
		\[
		p(p \vee T) ^{-c-1} \leq \begin{cases}
 			T^{-c}, & p \leq T,\\
 			p^{-c} \leq T^{-c}, & p \geq T,
			\end{cases}
		\]
		the desired result follows from a union bound argument.

	For the second part, define $F(x_1, \ldots, x_p) =\|x\|$ and $G_i(y_1,\ldots,  y_t) =\sum_{t=1}^T w_t (y_t -\mu_i(t))$, $i = 1, \ldots, p$.  Since for all $i$, both $F$ and $G_i$  are one Lipschitz function, $\|v\|$ is a one Lipschitz function of $\{ \{Y_i(t)\}_{i=1}^p \}_{t=1}^T$.  It follows from the proof of Corollary 4 in \cite{samson:2000} that, for any $\varepsilon > 0$, 
		\[
		\mathbb{P}\left(\|v\| > \mathbb{E} \| v\|  + \varepsilon \right )\le \exp\left( -\varepsilon^2/2\right). 
		\]
	Since $\mathbb{E }\| v\|\le \sqrt { \sum_{i=1}^ p \mathbb{E}(v_i^2)}  \le \sqrt {\rho p}, $  the desired results follows by by taking $ \varepsilon =C\sqrt {\log (p) \vee \log(T)}$.

\end{proof}
 
\begin{lemma} \label{lemma:exponential for sparse bernulli}
	Let $\{X(t)\}_{t=1}^T$ and $\{Y(t)\}_{t=1}^T$ be two independent copies, both of which satisfying \Cref{assume:model bernoulli}.  Suppose in addition that 
		\[
		\rho  \sqrt{p} \ge  \log(p) .
		\]
		For $\{w_t\}_{t=1}^T$ satisfying $\sum_{t=1}^Tw_t^2 = 1$, let $\widetilde{X} = 	\sum_{t=1}^Tw_tX(t)$, $\widetilde{Y} = 	\sum_{t=1}^Tw_tY(t)$ and $\widetilde{\mu} = \sum_{t=1}^Tw_t\mu(t)$.  There exists $C_{\beta} > 0$ depending on $c$ and $c_T$ such that
		\[
			\mathbb{P}\left(  \left | \sum_{i=1}^p \widetilde X_i \widetilde Y_i  -\sum_{i=1}^p \widetilde  \mu_i^2 \right | \ge  C_{\beta}\log(T) \left( \|\widetilde \mu\| + \log^{1/2}(T)\rho\sqrt {  p}\right) \right) \le 6T^{-c_T}+ 2T^{-c},
		\]
		where $C_{\beta} > \max\{ 4c_T/3, \sqrt{3c_T(C+1)^2 + C^2}\}$, and $C, c$ are from \Cref{lemma:berstein for zero one}.
\end{lemma}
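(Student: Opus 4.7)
The plan is to split the bilinear quantity into one genuinely quadratic noise term and two linear noise terms,
\[
\sum_{i=1}^{p}\widetilde X_i\widetilde Y_i - \sum_{i=1}^{p}\widetilde\mu_i^{\,2}
= (\widetilde X-\widetilde\mu,\widetilde Y-\widetilde\mu) + (\widetilde\mu,\widetilde X-\widetilde\mu) + (\widetilde\mu,\widetilde Y-\widetilde\mu),
\]
to control each piece separately, and then to take a union bound. The two linear pieces will produce the $\log(T)\|\widetilde\mu\|$ contribution, while the bilinear piece will produce the $\log^{3/2}(T)\,\rho\sqrt{p}$ contribution; the independence of $\{X(t)\}$ and $\{Y(t)\}$ is what makes the bilinear piece tractable.

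For the linear pieces, I apply \Cref{lemma:berstein for zero one} directly with coefficient vector $v=\widetilde\mu$. Because $\sum_t w_t^2=1$ forces $\max_t|w_t|\le 1$ and trivially $\max_i|\widetilde\mu_i|\le\|\widetilde\mu\|$, taking $\varepsilon\asymp \log(T)\|\widetilde\mu\|$ produces a sub-Gaussian ratio of order $\log^2(T)/\rho\gtrsim\log^2(T)$ and a sub-exponential ratio of order $\log(T)$, yielding a tail of size at most $2T^{-c_T}$ for each linear piece.

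For the bilinear piece $(u,v)$ with $u=\widetilde X-\widetilde\mu$ and $v=\widetilde Y-\widetilde\mu$, I will condition on $\{X(t)\}_{t=1}^T$. Conditionally, $(u,v) = \sum_i u_i \sum_t w_t(Y_i(t)-\mu_i(t))$ is exactly of the form covered by \Cref{lemma:berstein for zero one} with deterministic coefficient vector $u$. I then invoke \Cref{lemma:condition part} on the $X$-sample to place us, outside an event of probability at most $2T^{-c}$, on
\[
\|u\|\le \sqrt{\rho p}+C\sqrt{\log(p)\vee\log(T)},\qquad \max_i|u_i|\le C\sqrt{\log(p)\vee\log(T)}.
\]
Under the assumption $\rho\sqrt{p}\ge\log(p)$ these simplify to $\|u\|\lesssim \sqrt{\rho p}\vee\sqrt{\log T}$ and $\max_i|u_i|\lesssim \sqrt{\log(T)\vee\log(p)}$. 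Feeding these into the conditional Bernstein bound with $\varepsilon\asymp \log^{3/2}(T)\,\rho\sqrt{p}$ yields a sub-Gaussian ratio of order $\log^3(T)$ (after using $\rho p\ge \sqrt{p}\log p$) and a sub-exponential ratio of order $\log(T)$, giving a tail of size at most $2T^{-c_T}$ on the high-probability event.

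The main obstacle is the book-keeping at the level of constants: one must check that under $\rho\sqrt{p}\ge \log(p)$ every auxiliary term that the Bernstein analysis of $(u,v)$ produces, namely $\rho\sqrt{p\log T}$, $\sqrt{\rho}\log T$ and $\log^{3/2}(T)$, is dominated by $\log^{3/2}(T)\,\rho\sqrt p$, and that the lower bounds $C_\beta>4c_T/3$ and $C_\beta>\sqrt{3c_T(C+1)^2+C^2}$ in the statement are exactly what makes the three Bernstein deviation probabilities collapse into the claimed $6T^{-c_T}+2T^{-c}$; once this is done, a final union bound over the three pieces in the decomposition closes the proof.
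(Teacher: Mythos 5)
Your proposal matches the paper's own proof essentially line for line: the same three-term decomposition into a bilinear noise term and two linear terms, the same use of \Cref{lemma:berstein for zero one} with $v=\widetilde\mu$ and $\varepsilon\asymp\log(T)\|\widetilde\mu\|$ for the linear pieces, and the same conditioning-plus-\Cref{lemma:condition part} argument with $\varepsilon\asymp\rho\sqrt{p}\log^{3/2}(T)$ for the bilinear piece (the paper conditions on $\{Y(t)\}$ rather than $\{X(t)\}$, which is immaterial by symmetry). The probability accounting $2T^{-c_T}+2T^{-c}$ for the bilinear term plus $2T^{-c_T}$ for each linear term also reproduces the stated bound $6T^{-c_T}+2T^{-c}$.
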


\begin{proof}
	Note that $ \sum_{i=1}^p \widetilde X_i \widetilde Y_i  -\sum_{i=1}^p \widetilde  \mu_i^2 = I + II + III$, where
	\begin{align*}
	I = &\sum_{i=1}^p  (\widetilde X_i -\widetilde \mu_i ) (\widetilde Y_i -\widetilde \mu_i ), \quad II =  \sum_{i=1}^p(\widetilde X_i -\widetilde \mu_i )\widetilde \mu_i \quad \mbox{and} \quad III =\sum_{i=1}^p(\widetilde Y_i -\widetilde \mu_i )\widetilde \mu_i.
	\end{align*}
	It suffices to bound $I$ and $II$, due to the fact that $\{X(t)\}_{t=1}^T$ and $\{Y(t)\}_{t=1}^T$ are iid.

	As for $I$, for any $i = 1, \ldots, p$, let $v_i = \sum_{t=1}^T w_t (Y_i(t) -\mu_i(t)) $. Conditional on $\{Y(t)\}_{t = 1}^T$, it follows from \Cref{lemma:berstein for zero one} that for any $\varepsilon > 0$, we have
		\[
		\mathbb{P}_{X|Y}\left( \left|\sum_{i=1}^p v_i \sum_{t=1}^T w_t (X_i(t) -\mu_i(t)) \right| \ge \varepsilon  \right) \le 2\exp\left(-\frac{3/2\varepsilon^2}{3\rho\|v\|^2 + \varepsilon \max_{i }| v_i|}\right),
		\]
		due to the fact that $\max_t|w_t| \leq 1$.  By \Cref{lemma:condition part}, there exist $C, c > 0$ such that
		\[
		\mathbb{P}_Y \left(\max_{i = 1, \ldots, p}|v_i| \ge C\sqrt {\log (p) \vee \log(T)} \right) \le T^{-c}, 
		\]
		and that 
		\[
		\mathbb{P}_Y\left( \|v\| \ge C\sqrt {\log(p) \vee \log(T)} + \sqrt{\rho p}   \right) \le T^{-c}.
		\]
		Thus for any $\varepsilon > 0$, it holds that
		\begin{align*}
		& \mathbb{P}_{X, Y}\left (  \left |  \sum_{i=1} ^p v_i \sum_{t=1}^T w_t (X_i(t) -\mu_i(t)) \right| \ge  \varepsilon  \right) \\
		\le &  2\exp\left( -\frac{3/2\varepsilon ^2}{3\rho\bigl(C\sqrt{\log(p) \vee \log(T)} + \sqrt{\rho p}\bigr)^2  + C\varepsilon  \sqrt {\log(p) \vee \log(T)} }\right) + 2T^{-c}.
		\end{align*}
		
	Since $\rho \sqrt  p \ge  \log(p)$, by taking $ \varepsilon= C ''\rho \sqrt {p} \log^{3/2}(T)$ for sufficiently large 
		\[
		C'' \geq \sqrt{3c_T(C+1)^2 + C^2},
		\]
		it holds that
		\[
		\mathbb{P}(|I | \ge C''\rho \sqrt p \log^{3/2}(T) ) \le 2T^{-c_T} +2T^{-c}.
		\]

	Observe that $III$ is identically distributed as $II$. For $II$, observe that for $\varepsilon > 0$, it follows from \Cref{lemma:berstein for zero one},
		\[
		\mathbb{P} \left (  \left |  \sum_{i=1} ^p  \widetilde \mu_i \sum_{t=1}^T w_t (X_i(t) -\mu_i(t)) \right| \ge \varepsilon  \right) \le 2\exp\left(-\frac{3/2\varepsilon^2}{3\rho \| \widetilde \mu\|^2  +\varepsilon \max_{i }| \widetilde \mu_i| \max_t |w_t |}\right) .
		\]
	
	Let  $\varepsilon = C'\|\widetilde \mu\| \log (T)$, with $C' > 4c_T/3$,
		\[
			3\rho \| \widetilde \mu\|^2  + \varepsilon \max_{i }| \widetilde \mu_i| \max_t |w_t |  \le 3\rho \| \widetilde \mu\|^2 + \varepsilon\rho  \le  3\| \widetilde \mu\|^2 + \varepsilon  \le 3/(2c_T) \epsilon^2/\log(T) .  
		\]
		Therefore $P(| II| \ge C' \|\widetilde \mu\|\log (T) )  \le 2T^{-c_T}$.
	
\end{proof}


\subsection{Localization}

This is the key lemma used in the proof of \Cref{thm-1} to localize the change points.  We deliberately present this lemma with seemingly low-level conditions, in order for us to directly check the conditions in the proof of \Cref{thm-1}.

\begin{lemma}\label{lemma:wbs multiple d} 

	Assume $\{X_t\}_{t=1}^T$ and $\{Y_t\}_{t=1}^T$ be two independent copies $\mathbb{E}(X_t) = \mathbb{E}(Y_t) = \mu(t)$ such that \Cref{assume:model bernoulli population} holds.  
	
	Let $[s_0, e_0]$ be any interval with $e_0 - s_0\le C_R\Delta$ and containing at least one change point $\eta_r$ such that 
	\[
	\eta_{r-1} \le s \le \eta_r \le \ldots\le \eta_{r+q} \le e \le \eta_{r+q+1},
	\quad q\ge 0
	\]
	and that $\min\{s_0 -\eta_r, e_0 -\eta_{r+q} \}\ge \Delta /2$.  Denote  $\kse= \max\{\kappa_p: \,\min\{\eta_p - s_0, e_0 -\eta_p \} \ge \Delta /16\}$.
	Consider any generic $[s,e] \subset [s_0,e_0]$  such that 
	$[s,e]$ contains at least one change point.
	Let $b \in \arg \max_{s < t < e} ( \widetilde X^{s,e} (t),\widetilde Y^{s,e} (t) )  $.
	For some $c_4>0$ and $\lambda>0$, suppose that
	\begin{align}
		&( \widetilde X^{s,e} (b),\widetilde Y ^{s,e}(b)   )  \ge c_4 (\kse)^2 \Delta  \label{eq:wbs size of sample} \\
		&\sup_{s < t < e} |( \widetilde X^{s,e}(t) ,\widetilde Y^{s,e}  (t))    -\|\widetilde \mu^{s,e}(t) \|_2^2 |  \le \lambda 
		\label{eq:mcusum noise}
	\end{align}
	If there exists a sufficiently small absolute constant $ c_5>0 $ satisfying
		\[
		c_5 < \min\left\{\frac{c_3}{2C_R^2 + 2c_3}, \, \frac{1}{2+32C_R^2/\min\{1/4, 1/2-2c_3\}}\right\}
		\]
		with $c_3$ defined in \Cref{lemma:Venkatraman}, such that
	\begin{equation}\label{eq:wbs noise}
		\lambda\le c_5 \max_{s<t<e} \| \widetilde \mu^{s,e}(t)\|_2^2
	\end{equation}
	then there exists a change point $\eta_{k} \in (s, e)$  such that 
	\[
	\min \{e-\eta_k,\eta_k-s\}  >  \Delta /4,  \quad
	|\eta_{k} -b |\le \frac{C_3\Delta\lambda}{ \| \widetilde \mu^{s,e} (\eta_k)\|_2^2}  \quad 
	 \text{and} \quad  \|\widetilde \mu^{s,e} (\eta_k )  \|_2^2 \ge (1-2c_5)\max_{s<t <e} \| \widetilde \mu^{s,e} (t)\|_2^2,
	\]
	where $C_3 = 2C_R^2/\min\{1/4, 1/2-2c_3\}$.
\end{lemma}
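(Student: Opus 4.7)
The plan is to run a Venkatraman-type argument on the sample-split inner product $(\widetilde X^{s,e}(t),\widetilde Y^{s,e}(t))$, whose expectation equals $\|\widetilde \mu^{s,e}(t)\|_2^2$. First I would observe that, since $\mu$ is piecewise constant with jumps only at the $\eta_p$'s, the map $t\mapsto \widetilde \mu^{s,e}(t)$ is piecewise linear between consecutive change points in $(s,e)$, so $\|\widetilde \mu^{s,e}(t)\|_2^2$ attains its maximum $M := \max_{s<t<e}\|\widetilde \mu^{s,e}(t)\|_2^2$ at some change point $\eta^\star\in (s,e)$. Combining the noise bound \eqref{eq:mcusum noise} at both $b$ and $\eta^\star$ with the optimality of $b$,
\[
\|\widetilde \mu^{s,e}(b)\|_2^2 \ge (\widetilde X^{s,e}(b),\widetilde Y^{s,e}(b)) - \lambda \ge (\widetilde X^{s,e}(\eta^\star),\widetilde Y^{s,e}(\eta^\star)) - \lambda \ge M - 2\lambda \ge (1-2c_2)M,
\]
where the last step invokes \eqref{eq:wbs noise}. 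Thus $b$ is a noiseless near-maximizer of the population CUSUM norm.

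The second step would be to invoke \Cref{lemma:Venkatraman}, which quantifies how close any such near-maximizer of a vector CUSUM must lie to a change point whose own CUSUM norm is still close to $M$. Feeding the displayed bound into that lemma yields a change point $\eta_k\in(s,e)$ satisfying $\|\widetilde \mu^{s,e}(\eta_k)\|_2^2 \ge (1-2c_2)M$ (the third claim) and $|\eta_k - b|\le C_3\,\Delta\,\lambda/\|\widetilde \mu^{s,e}(\eta_k)\|_2^2$ (the second claim). The stated constant $C_3 = 2C_R^2/\min\{1/4, 1/2-2c_3\}$ emerges from the quadratic decay rate of $\|\widetilde \mu^{s,e}(t)\|_2^2$ away from $\eta_k$ on an interval of length at most $C_R\Delta$, with the decay parameter $c_3$ coming from \Cref{lemma:Venkatraman}.

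It remains to establish the geometric claim $\min\{e-\eta_k,\eta_k-s\}>\Delta/4$. Using the explicit form of the CUSUM at a change point, together with $e-s\le C_R\Delta$ and the piecewise-linear structure of $\widetilde \mu^{s,e}$ away from $\eta_k$, one obtains an upper bound of the form $\|\widetilde \mu^{s,e}(\eta_k)\|_2^2\lesssim \min\{e-\eta_k,\eta_k-s\}\,C_R(\kse)^2$. On the other hand, chaining \eqref{eq:wbs size of sample} and \eqref{eq:mcusum noise} with the factor $(1-2c_2)$ already obtained yields the matching lower bound $\|\widetilde \mu^{s,e}(\eta_k)\|_2^2\ge (1-2c_2)\bigl(c_1(\kse)^2\Delta-\lambda\bigr)$. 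The numerical thresholds imposed on $c_2$ in the statement — in particular the term $32C_R^2/\min\{1/4,1/2-2c_3\}$ — are calibrated precisely to make these two inequalities incompatible whenever $\min\{e-\eta_k,\eta_k-s\}\le \Delta/4$, forcing the desired lower bound.

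The main obstacle will be \Cref{lemma:Venkatraman} itself. The original univariate argument of \cite{venkatraman1992consistency} relies on sign changes of the CUSUM, which have no direct analogue for a vector- or matrix-valued signal. The necessary decay rate of $\|\widetilde \mu^{s,e}(t)\|_2^2$ around a change-point maximizer has to be extracted directly from the piecewise-quadratic structure of that norm, with enough explicit dependence on $C_R$, on $c_3$, and on the local spacing to recover the precise constant $C_3$ appearing in the statement; once this multivariate Venkatraman lemma is available, the bookkeeping sketched above assembles the three stated conclusions.
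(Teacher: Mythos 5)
Your overall route is the paper's: use \eqref{eq:mcusum noise}, \eqref{eq:wbs noise} and the optimality of $b$ to show that $b$ is a near-maximizer of the population quantity $\|\widetilde\mu^{s,e}(\cdot)\|_2^2$ (your first displayed chain is exactly the paper's inequality \eqref{eq:mcusum signal noise 3}), then convert a Venkatraman-type quadratic decay of that quantity around a near-maximal change point into the bound on $|b-\eta_k|$, and finally rule out change points too close to $s$ or $e$ by playing an upper bound on $\|\widetilde\mu^{s,e}(\eta_k)\|_2^2$ against a lower bound.

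The genuine gap is in the middle step. \Cref{lemma:Venkatraman} is only a local decay estimate around a \emph{given} change point $\nu$ that is already known to be well separated from the endpoints and to nearly maximize $\|\widetilde\mu^{s,e}(\cdot)\|_2^2$; it does not by itself produce a change point near $b$. What your plan is missing is the monotonicity structure of $t\mapsto\|\widetilde\mu^{s,e}(t)\|_2^2$ on each segment between consecutive change points: it is either monotone or decreases and then increases (\Cref{eq:monotonicity of mcusum}, the vector analogue of Venkatraman's Lemma~2.2; note each coordinate of $\widetilde\mu^{s,e}(t)$ has the form $(a-bt)/\sqrt{t(1-t)}$ after rescaling, so "piecewise linear" is not the right description, even though your conclusion that the maximum sits at a change point is correct). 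This monotonicity is what (i) identifies $\eta_k$ as the change point adjacent to $b$ on the side where the norm decreases toward $b$, giving $\|\widetilde\mu^{s,e}(\eta_k)\|_2^2\ge\|\widetilde\mu^{s,e}(b)\|_2^2\ge(1-2c_2)M$ (the third conclusion), and (ii) is indispensable at the end: having chosen $d$ with $d-\eta_k=2\lambda(e-s)^2/\bigl(c\Delta\|\widetilde\mu^{s,e}(\eta_k)\|_2^2\bigr)$ so that \Cref{lemma:Venkatraman} gives $\|\widetilde\mu^{s,e}(\eta_k)\|_2^2-\|\widetilde\mu^{s,e}(d)\|_2^2>2\lambda$, one argues that $b>d$ would force $\|\widetilde\mu^{s,e}(b)\|_2^2\le\|\widetilde\mu^{s,e}(d)\|_2^2<M-2\lambda$, contradicting the sample near-optimality of $b$ --- and the inequality $\|\widetilde\mu^{s,e}(b)\|_2^2\le\|\widetilde\mu^{s,e}(d)\|_2^2$ is precisely the monotone decrease on $[\eta_k,b]$. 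Without this ingredient you cannot exclude $b$ landing far from every change point. A smaller issue: in the boundary step, the bound $\|\widetilde\mu^{s,e}(\eta_k)\|_2^2\lesssim\min\{e-\eta_k,\eta_k-s\}\,C_R(\kse)^2$ does not hold as stated when $[s,e]$ contains several change points; the paper instead invokes \Cref{lemma:mcusum boundary bound}, which controls the CUSUM at a near-boundary change point by a small multiple of the CUSUM at the \emph{next} change point plus a term linear in $\eta_k-s$, and derives the contradiction against $\|\widetilde\mu^{s,e}(\eta_k)\|_2^2\ge\|\widetilde\mu^{s,e}(b)\|_2^2$ rather than against \eqref{eq:wbs size of sample} alone.
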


\begin{proof}
	For any $t\in \{s+1, \ldots, e-1\}$, denote $\widetilde Z^{s,e}(t)=  ( \widetilde X^{s,e} (t) ,\widetilde Y^{s,e} (t)  )$.  It follows from \Cref{eq:monotonicity of mcusum} that without loss of generality, we can assume $\|\widetilde \mu^{s,e}(t) \|^2  $ is locally decreasing at $b$.  Observe that  this implies that  there exists a change point $\eta_k \in [s,b]$, since otherwise $\|\widetilde \mu^{s,e}(t) \|_2^2$  is increasing on $[s,b] $ as a consequence of  \Cref{lemma:mcusum property near the boundary}.  Therefore, we have
		\[
		s\le \eta_k\le b \le  e .
		\]
		Observe that 
		\begin{align}
			\widetilde  Z^{s,e} (b) \ge \max_{s<t <e} \| \widetilde \mu^{s,e} (t)\|^2 -\lambda  \ge c_5^{-1}(1-c_5) \lambda,  \label{eq:mcusum signal noise 1}  
		\end{align}
		which follows from \eqref{eq:mcusum noise} and \eqref{eq:wbs noise}, and
		\begin{align}
			\| \widetilde \mu^{s,e} (b)\|^2 &\ge \widetilde  Z^{s,e} (b) -\lambda   \ge \max_{s<t <e} \| \widetilde \mu^{s,e} (t)\|^2 -2\lambda \ge c_5^{-1}(1-2c_5)\lambda, \label{eq:mcusum signal noise 3}
		\end{align}
		which follows from \eqref{eq:mcusum signal noise 1}.  We, consequently, have
		\begin{align} 
			\| \widetilde \mu^{s,e} (b)\|^2  \ge \widetilde Z^{s,e}(b)   -\lambda 
		\ge  (1-c_5(1-c_5)^{-1})\widetilde Z^{s,e}(b)   
		> \widetilde Z^{s,e}(b) /2
		\ge  (c_4/2)  (\kse)^2  \Delta . \label{eq:wbs size of change point}
		\end{align}
		where the second inequality follows from \eqref{eq:mcusum signal noise 1} and the last inequality follows from \eqref{eq:wbs size of sample}.

	Since $s\le \eta_k\le b \le  e $ and $\|\widetilde \mu^{s,e}(t) \|_2^2$ is locally decreasing at $b$, by \Cref{eq:monotonicity of mcusum}, $\|\widetilde \mu^{s,e}(t) \|_2^2$ is decreasing within $[\eta_k,b] $.  Therefore
		\begin{align} \label{eq:mcusum decreasing}
			\|\widetilde \mu^{s,e}(\eta_k) \|^2 &\ge \| \widetilde \mu^{s,e} (b)\|^2.
		\end{align}
		\Cref{eq:mcusum decreasing} combining with \eqref{eq:mcusum signal noise 3} gives
		\[
			\|\widetilde \mu^{s,e} (\eta_k )  \|^2 \ge (1-2c_5)\max_{s<t <e} \| \widetilde \mu^{s,e} (t)\|^2. 
		\]

	\vskip 3mm
	\noindent {\bf Step 1.} 
	In this step, it will be shown that $\min\{ \eta_k -s , e -\eta_k \} \ge \min\{1, c_4 \}\Delta /16$. 

	Suppose $\eta_k$ is the only change point in $(s, e)$.  It must hold that $\min\{ \eta_k -s , e -\eta_k \} \ge   \min\{1, c_4\}\Delta /16$, otherwise by \Cref{lemma:one change point population size}, 
		\[
			\|\widetilde \mu^{s,e}(\eta_k) \|^2 = \frac{(\eta_k - s)(e - \eta_k)}{e-s}\kappa_k^2  <  \frac{c_4 }{16} \kappa_k^2\Delta \le \frac{c_4}{2}  (\kse)^2\Delta,
		\]
		which contradicts \eqref{eq:wbs size of change point}.
	
	Suppose $[s, e]$ contains at least two change points. For the sake of contradiction, suppose $\min\{ \eta_k -s , e -\eta_k \} < \min\{1, c_4 \}\Delta /16$.  Reversing the time series if necessary, it suffices to consider  
		\begin{align}\label{eq:first change point mcusum} 
			\eta_k -s <  \min\{1,c_4  \}\Delta /16.
		\end{align} 
		Observe that \eqref{eq:first change point mcusum} implies that $\eta_k$ is the first change point in $[s,e]$.  Therefore 
		\begin{align*}
			\|\widetilde \mu^{s,e}(\eta_k) \|^2  &\le \frac{1}{8} \| \widetilde \mu^{s,e} (\eta_{k+1})\|^2 +4\kappa_r^2  (\eta_k -s) \le \frac{1}{8}\max_{s < t < e} \|\widetilde \mu^{s,e}(t) \|^2+\frac{c_4}{4}\kappa_k ^2\Delta \\
			& \le  \frac{1}{8} (1-2c_5)^{-1} \| \widetilde \mu^{s,e}(b) \|^2+\frac{1}{2}\| \widetilde \mu^{s,e}(b) \|^2 < \| \widetilde \mu^{s,e}(b) \|_2,
		\end{align*}
		where the first inequality follows from \Cref{lemma:mcusum boundary bound} and \eqref{eq:first change point mcusum}, the second inequality  follows from \eqref{eq:first change point mcusum}, the third inequality follows from  \eqref{eq:mcusum signal noise 3} and \eqref{eq:wbs size of change point}, and the fourth inequality follows from $c_5 < 3/8$.  This contradicts \eqref{eq:mcusum decreasing}.

	\vskip 3mm
	\noindent{\bf Step 2.} In order to apply  \Cref{lemma:Venkatraman}, it suffices to check that \eqref{eq:ven 2} for $\widetilde \mu^{s,e}(t)$. Observe that  
		\begin{align*}  
			& \max_{s<t <e} \| \widetilde \mu^{s,e} (t)\|^2  -	\|\widetilde \mu^{s,e} (\eta_k )  \|^2 \le 2c_5 \max_{s<t <e}  \| \widetilde \mu^{s,e} (t)\|^2  \le  2c_5(1-2c_5)^{-1}	\|\widetilde \mu^{s,e} (\eta_k )  \|^2  \\
			&\le \frac{2c_5C_R^2}{c_3 (1-2c_5) } c_3\|\widetilde \mu^{s,e} (\eta_k )  \|^2   \Delta^2 (e-s)^{-2} \le c_3\|\widetilde \mu^{s,e} (\eta_k )  \|^2   \Delta^2 (e-s)^{-2},
		\end{align*}
		where $c_3$ is defined as in \eqref{eq:ven 2}, the first and the second inequality follow from \eqref{eq:mcusum signal noise 3},  the third inequality follows from $e-s\le C_R \Delta$ and the last inequality hold for sufficiently small
		\[
			0 < c_5 < \frac{c_3}{2C_R^2 + 2c_3}.
		\]
		Let $c$ be defined in \Cref{lemma:Venkatraman}. Since $ e-s \le C_R\Delta$,
		\[
			\frac{2\lambda (e-s)^2}{c\Delta \|	\widetilde \mu^{s,e}(\eta_k) \|^2 } \le 2C_R^2 \frac{\lambda \Delta }{c c_5^{-1} (1-c_5) \lambda   } < \Delta /16, 
		\]
	where the first inequality follows from \eqref{eq:mcusum signal noise 3} and the last inequality holds for sufficiently small $c_5$ satisfying
		\[
			c_5 < \frac{1}{2 + 32C_R^2/c}.
		\] 
		By \Cref{lemma:Venkatraman} if  $d$ is chosen such that   
		\begin{align}\label{eq:choice of d in derivative} 
	  		d -\eta_k = \frac{2\lambda (e-s)^2}{c\Delta \|	\widetilde \mu^{s,e}(\eta_k) \|^2 } < \Delta /16,
		\end{align}
		and that 
		\begin{align}\label{eq:venkatraman ucusum}
			\|\widetilde \mu^{s,e}(\eta_k) \|^2  -	\| \widetilde \mu^{s,e}  (d)\|^2    > c \|\widetilde \mu^{s,e}(\eta_k) \|^2  |d-\eta_k | \Delta (e-s)^{-2}\ge 2 \lambda,
		\end{align}
		where the first inequality follows from \Cref{lemma:Venkatraman} and the second inequality follows from  \eqref{eq:choice of d in derivative}.

	For the sake of contradiction, suppose $b\ge d $. Then 
		\[
			\|\widetilde \mu^{s,e}(b) \|^2 \le  \|\widetilde \mu^{s,e}(d) \|^2 < \|\widetilde \mu^{s,e}(\eta_k) \|^2 -2\lambda  \le \max_{s < t < e} \|\widetilde \mu^{s,e}(t) \|^2  -2\lambda \le \max_{s < t < e}\widetilde Z(t)  +\lambda-2 \lambda  =\widetilde Z(b) -\lambda ,
		\]
		where the first inequality follows from \Cref{eq:monotonicity of mcusum},  which ensures that  $\|\widetilde \mu^{s,e}(t) \|^2 $ is decreasing on $[\eta_{k},b] $ and $d\in [\eta_{k},b]$, the second inequality follows from \eqref{eq:venkatraman ucusum}.  This is a contradiction to \eqref{eq:mcusum noise}.  Thus $b\le d$ and so 
		\begin{align*}
			0\le  b-\eta_k   \le d-\eta_k \le  \frac{2\lambda (e-s)^2}{c\Delta \|	\widetilde \mu^{s,e}(\eta_k) \|^2 }    \le \frac{2C_R^2}{c} \frac{\Delta  \lambda}{   \|	\widetilde \mu^{s,e}(\eta_k) \|^2  } 
		\end{align*}
		where  the third inequality follows from $ e-s\le C_R\Delta$. 
	
\end{proof}

\section{Proofs of technical results used in Theorems 2 and 3}

\subsection{Matrix estimation}
We first establish some results concerning matrix estimation.

\begin{lemma}\label{prop:operator norm}

\noindent {\bf 1.}	Let $\{ A (t)\}_{t=1}^T $ be a collection of independent matrices with independent Bernoulli entries satisfying
	\[
		\max_{1\le t\le T}\|\mathbb{E}A(t)\|_{\infty}\le \rho,
	\]	
	with $n\rho \geq \log(n)$. Let  $\{w(t)\}_{t=1}^T \subset \mathbb R $ be a collection of scalars such that $\sum_{t=1}^T w(t)^2 =1$ and $\sum_{t=1}^T w(t) =0$.  Then there exists an absolute constant $C > 32 \times 2^{1/4}e^2$ such that 
		\begin{align}\label{eq:operator norm bounds} 
			\mathbb{P}\left( \left\| \sum_{t=1}^T w(t) A(t) - \mathbb{E} \left( \sum_{t=1}^T w(t) A(t)  \right)\right\|_{\mathrm{op}}  \ge C \sqrt{n\rho} + \varepsilon  \right)\le  \exp(-\varepsilon^2/2).
		\end{align}

\vskip 3mm
\noindent {\bf 2.} If $\{ A (t)\}_{t=1}^T $ are symmetric matrices, then \eqref{eq:operator norm bounds} still holds.
\end{lemma}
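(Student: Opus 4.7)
Write $M = \sum_{t=1}^T w(t)(A(t) - \mathbb{E}A(t))$, so the quantity to control is $\|M\|_{\mathrm{op}}$. My plan is to split the bound into (i) an estimate $\mathbb{E}\|M\|_{\mathrm{op}} \lesssim \sqrt{n\rho}$ and (ii) Gaussian-type concentration of $\|M\|_{\mathrm{op}}$ around its expectation at the rate $\exp(-\varepsilon^2/2)$, then combine the two.

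For step (i), I would first note that $M$ has jointly independent entries (and jointly independent above-diagonal entries in the symmetric case), because the $A(t)$ are independent and the weights are deterministic. Each entry $M_{ij}$ is a centered weighted sum of independent Bernoullis with
$\mathrm{Var}(M_{ij}) = \sum_{t=1}^T w(t)^2 \mathrm{Var}(A_{ij}(t)) \le \rho \sum_t w(t)^2 = \rho$,
while each individual summand is bounded in absolute value by $\max_t |w(t)| \le 1$. I would then invoke a standard spectral-norm bound for random matrices with independent mean-zero entries — for instance Bandeira--van Handel's theorem, Seginer's bound, or matrix Bernstein applied to the decomposition $M = \sum_{i,j,t} w(t)(A_{ij}(t) - \mathbb{E}A_{ij}(t))\, E_{ij}$ (symmetrized in the symmetric case) — to obtain
$\mathbb{E}\|M\|_{\mathrm{op}} \le C_0 \bigl(\sqrt{n\rho} + \sqrt{\log n}\bigr).$
The hypothesis $n\rho \ge \log n$ absorbs the second term into the first, yielding $\mathbb{E}\|M\|_{\mathrm{op}} \le C\sqrt{n\rho}$ for an explicit constant $C$.

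For step (ii), I would regard $\|M\|_{\mathrm{op}}$ as a function $f$ of the underlying independent Bernoulli inputs $\{A_{ij}(t)\}$. Because the operator norm is convex and $M$ depends linearly on these inputs, $f$ is convex. It is also $1$-Lipschitz with respect to the Euclidean norm on the inputs: a perturbation $\{\delta_{ij}(t)\}$ produces a change in $M$ whose Frobenius norm is at most
$\Bigl(\sum_{i,j}\bigl(\sum_t w(t)\delta_{ij}(t)\bigr)^2\Bigr)^{1/2} \le \Bigl(\sum_{i,j,t}\delta_{ij}(t)^2\Bigr)^{1/2}$,
by Cauchy--Schwarz together with $\sum_t w(t)^2 = 1$, while $\|\cdot\|_{\mathrm{op}} \le \|\cdot\|_{\mathrm{F}}$. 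I would then apply Talagrand's convex-Lipschitz concentration inequality for Bernoulli random variables (equivalently, the Herbst/log-Sobolev form as in Boucheron--Lugosi--Massart) to obtain
$\mathbb{P}\bigl(\|M\|_{\mathrm{op}} - \mathbb{E}\|M\|_{\mathrm{op}} \ge \varepsilon\bigr) \le \exp(-\varepsilon^2/2).$
Combining with step (i) gives the stated tail bound. For Part~2, the same argument applies with the independent inputs being the on- and above-diagonal entries of $M$; the Lipschitz computation picks up at most a harmless factor of $\sqrt{2}$, which the matrix-Bernstein bound in step (i) absorbs into the constant $C$.

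The main obstacle will be step (i): carefully tracking the constants through the spectral bound so as to produce the explicit $C > 32 \cdot 2^{1/4} e^2$ quoted in the statement, and verifying that the sub-Gaussian $\sqrt{\log n}$ correction is dominated by $\sqrt{n\rho}$ under $n\rho \ge \log n$ without a hidden worse constant. Step (ii), once convexity and $1$-Lipschitzness are isolated, is a direct invocation of a standard concentration result, and the passage from non-symmetric to symmetric matrices is a routine bookkeeping exercise.
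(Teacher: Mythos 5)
Your overall architecture matches the paper's: bound $\mathbb{E}\|M\|_{\mathrm{op}}$ by $C\sqrt{n\rho}$, then apply a Talagrand-type convex--Lipschitz concentration inequality (the paper uses Corollary~4 of \cite{samson:2000}) to get the $\exp(-\varepsilon^2/2)$ tail around the mean. Your step~(ii), including the convexity and the $1$-Lipschitz computation via Cauchy--Schwarz and $\sum_t w(t)^2=1$, is correct and is exactly what the paper does, as is the reduction of the symmetric case to the triangular parts.

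The genuine gap is in step~(i). The matrix $M$ has independent entries, but they are neither identically distributed (the means $\Theta_{ij}(t)\le\rho$ vary with $(i,j)$) nor uniformly bounded by an $O(1)$ constant: $|M_{ij}|\le\sum_t|w(t)|$ can be as large as $\sqrt{T}$. None of the three tools you name then delivers $\mathbb{E}\|M\|_{\mathrm{op}}\lesssim\sqrt{n\rho}+\sqrt{\log n}$. Matrix Bernstein with summands bounded by $\max_t|w(t)|\le 1$ and matrix variance $n\rho$ yields $\sqrt{n\rho\log n}+\log n$, which is off by a $\sqrt{\log n}$ factor on the leading term. Seginer's two-sided bound (Corollary~2.2 of \cite{seginer2000expected}) requires i.i.d.\ entries; his version for merely independent bounded entries carries an extra polylogarithmic factor. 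The bounded-entries form of Bandeira--van Handel gives $\sigma+\sigma_*\sqrt{\log n}$ with $\sigma_*=\max_{ij}\|M_{ij}\|_\infty$, which here can be of order $\sqrt{T}$. The paper resolves exactly this difficulty with a coupling step (\Cref{lemma:Tomozei}): it constructs matrices $Z(t)$ with $\mathbb{E}(Z\mid B)=B$ such that $(1-\rho)Z_{ij}(t)+\rho$ are i.i.d.\ Bernoulli$(\rho)$, uses convexity of the operator norm and Jensen to pass to these, and only then applies Seginer's i.i.d.\ result together with a row-norm concentration argument (where $n\rho\ge\log n$ enters). Note that the hypothesis $\sum_t w(t)=0$ is used precisely to cancel the additive $\rho$ shift in this coupling; your argument never uses that hypothesis, which is a symptom of the missing step. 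To repair your proof you would need either this coupling device or a genuinely sharper non-homogeneous spectral bound (e.g.\ of Lata{\l}a--van Handel--Youssef type) with the entry-size term controlled by a truncation argument rather than by the worst-case bound $\sqrt{T}$.
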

\begin{proof}
	Observe that the conclusion in	{\bf 2} is a consequence of that in {\bf 1}, as if $A(t)$ is symmetric, then $A(t) =A'(t) +A''(t) $, where $A'(t)$ is the upper diagonal matrix of $A(t)$ including the diagonal and $A''(t)$ is the lower diagonal matrix of $A(t)$.  Therefore the conclusion in {\bf 2} follows by applying the conclusion in {\bf 1}  to $A'(t)$ and $A''(t)$.  In the rest of this proof, we will only consider {\bf 1}.

	Let $B(t) = A(t) - \mathbb{E}(A(t))$ and $\widetilde B =\sum_{t=1}^Tw(t)B(t) $.  The function 
		\[
			H(B(1), \ldots, B(T))  = \left\| \sum_{t=1}^T w(t)B(t)\right\|_{\op} = \bigl\|  \widetilde B \bigr\|_{\op}
		\] 
		is one-Lipschitz, therefore by Corollary 4 in \cite{samson:2000}, one has for any $\varepsilon > 0$, 
		\[
			\mathbb{P}\left\{ \left\|\sum_{t=1}^Tw(t)B(t) \right\|_{\op} \ge \mathbb{E} \left(\left\|\sum_{t=1}^Tw(t)B(t) \right\|_{\op} \right  )+\varepsilon \right\}\le \exp(-\varepsilon^2/2).
		\]
		To complete the argument, it suffices to bound $\mathbb{E} \left(\|\sum_{t=1}^Tw(t)B(t) \|_{\op} \right)$.  By \Cref{lemma:Tomozei}  and for all $t \in \{1, \ldots, T\}$, the entries of $w(t)B(t)$ are bounded on $[-\rho,1]$, there exists a collection of random matrices $\{Z(t)\}_{t=1}^T\subset \mathbb R^{n\times n}$ such that $\mathbb{E}(Z| B) =B$, where $Z=  (Z(1),\ldots, Z(T))$ and $B=  (B(1),\ldots,B(T))$, and that $(Y_t)_{ij}= (1-\rho)(Z(t))_{ij} + \rho$ are mutually independent Bernoulli random variables with parameter $\rho$.  Denote $G(B) =\|\sum_{t=1}^Tw(t)B(t) \|_{\op}$. Then
		\begin{align*}
			\mathbb{E}\left(\left\|\sum_{t=1}^Tw(t)B(t) \right\|_{\op}\right) = & \mathbb{E} \left(G(B  ) \right) = \mathbb{E} \left( G(\mathbb{E}( Z | B ) ) \right) \le \mathbb{E} \left( \mathbb{E}  (G(Z) ) |B\right)  = \mathbb{E}\left(\left\| \sum_{t=1}^T w(t) Z(t)\right\|_{\op}\right) \\
			= &\frac{1}{1-\rho } \mathbb{E}\left(\left\|\sum_{t=1}^T w(t)Y(t) \right\|_{\op}\right),
		\end{align*}
		where $G$ being convex is used in the inequality and $\sum_{t=1}^T w(t)=0$ is used in the last equality.  Since the entries of $\sum_{t=1}^T w(t)Y(t)$ are independent and identically distributed, by  \Cref{lemma:iid edge}, 
		\[
			\mathbb{E}\left(\left\|\sum_{t=1}^Tw(t)B(t) \right\|_{\op}\right)  \le C\sqrt {\rho n},
		\]
		where $C > 32 \times 2^{1/4}e^2$.
\end{proof}

 \begin{lemma}\label{lemma:Tomozei}
 	Let $X \in [-\rho, 1]$ be a centered Bernoulli random variable. Then there exists a random variable $Y$ such that 
 	\begin{itemize}
 		\item $\mathbb{E}(Y|X) = X$, and
 		\item $(1-\rho)Y +\rho$ is a Bernoulli random variable with parameter $\rho$.
 	\end{itemize}
 \end{lemma}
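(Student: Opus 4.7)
The plan is to give an explicit construction of $Y$ by randomized rounding, using a coupling with an auxiliary Bernoulli random variable $W$ that will play the role of $(1-\rho)Y+\rho$.

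First I would unpack the setup. Since $X$ is a centered Bernoulli variable supported in $[-\rho,1]$, it must take the value $1-\rho$ with probability $\rho$ and the value $-\rho$ with probability $1-\rho$. Also, if $W:=(1-\rho)Y+\rho$ is to be Bernoulli$(\rho)$, then $Y$ must be supported in $\{1,-\rho/(1-\rho)\}$, and the constraint $\mathbb{E}(Y\mid X)=X$ is equivalent to $\mathbb{E}(W\mid X)=(1-\rho)X+\rho$. So it suffices to construct a coupling of $X$ with a Bernoulli$(\rho)$ variable $W$ satisfying this conditional expectation identity.

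Next I would check feasibility: the candidate conditional parameter $p(x):=(1-\rho)x+\rho$ needs to lie in $[0,1]$ for every value $x$ in the support of $X$. At $x=-\rho$ one gets $p(-\rho)=\rho^2\in[0,1]$, and at $x=1-\rho$ one gets $p(1-\rho)=1-\rho+\rho^2\in[0,1]$; both hold for any $\rho\in[0,1]$. With this in hand, define $W$ conditionally on $X$ to be Bernoulli with parameter $p(X)$. Then $W\in\{0,1\}$ automatically, and
\[
\mathbb{E}(W)=\mathbb{E}\bigl((1-\rho)X+\rho\bigr)=\rho,
\]
since $\mathbb{E}(X)=0$, so the marginal of $W$ is indeed Bernoulli$(\rho)$.

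Finally, I would define $Y:=(W-\rho)/(1-\rho)$, so that $(1-\rho)Y+\rho=W$ is Bernoulli$(\rho)$ by construction, and
\[
\mathbb{E}(Y\mid X)=\frac{\mathbb{E}(W\mid X)-\rho}{1-\rho}=\frac{(1-\rho)X+\rho-\rho}{1-\rho}=X,
\]
which verifies both required properties. There is essentially no hard step here; the only mild subtlety is the verification that $p(X)\in[0,1]$ almost surely so that the conditional Bernoulli is well-defined, and implicitly the assumption $\rho<1$ so that the denominator $1-\rho$ does not vanish (the case $\rho=1$ forces $X\equiv 0$ and the statement is vacuous anyway).
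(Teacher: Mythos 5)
Your construction is correct and is essentially the paper's argument: the paper simply cites Lemma~2 of \cite{tomozei2014distributed} and realizes the conditional Bernoulli with parameter $(1-\rho)X+\rho$ by thresholding an independent uniform, which is exactly your randomized rounding followed by the affine rescaling $Y=(W-\rho)/(1-\rho)$. One small remark: in the paper's application $X$ is a centered Bernoulli with parameter $q\le\rho$ (values $1-q$ and $-q$), not necessarily $q=\rho$ as you assert at the outset, but this is harmless because your construction only uses $X\in[-\rho,1]$ and $\mathbb{E}(X)=0$.
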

 
 \begin{proof}
 	The proof is taken from the proof of Lemma~2 in \cite{tomozei2014distributed}, by letting 
 		\[
 		Y = 1 - \mathbbm{1}\{X \leq (1+\rho)U - \rho\},
 		\]
 		where $U$ is a Uniform$[0, 1]$ random variable independent with $X$.
 \end{proof}

\begin{lemma}\label{lemma:iid edge}
	Let $\{ A(t)\}_{t=1}^T$ be a collection of independent adjacency matrices whose entries are independent Bernoulli random variables with parameter $\rho$ satisfying with $n\rho\ge c_2\log (n)$, $c_2 > 4$, and let $B_t=A_t-E(A_t)$. Suppose $\{w_t\}_{t=1}^T \subset \mathbb R $ be a collection of scalar such that $\sum_{t=1}^T w_t^2 =1$. Then there exists an absolute constant $C > 32 \times 2^{1/4}e^2$ such that 
		\[
			\mathbb{E} \left( \left\|\sum_{t=1}^T  w_tB_t\right\|_{\op}\right)\le C\sqrt {n \rho}.
		\]
\end{lemma}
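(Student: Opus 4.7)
The approach is to bound $\mathbb{E}\|M\|_{\op}$, where $M := \sum_{t=1}^T w_t B_t$, by combining a Seginer-type estimate for the operator norm with an entrywise Bernstein tail bound. First, I would observe that each entry $M_{ij} = \sum_{t=1}^T w_t (B_t)_{ij}$ is a sum of independent mean-zero random variables, each of magnitude at most $|w_t|$ and variance $w_t^2\rho(1-\rho)$; since $\sum_t w_t^2 = 1$ and $\max_t|w_t|\le 1$, one has $\mathbb{E}M_{ij}=0$ and $\mathrm{Var}(M_{ij}) \leq \rho$. By the iid Bernoulli structure of the $(B_t)_{ij}$'s, all $n^2$ entries of $M$ are in fact independent and identically distributed.

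Next, I would invoke (or prove directly via symmetrization plus a moment-method calculation) an inequality of the form
\begin{equation*}
\mathbb{E}\|M\|_{\op} \leq C_0\,\mathbb{E}\max_i\|M_{i\cdot}\|_2,
\end{equation*}
valid up to an absolute constant for any $n\times n$ matrix with iid centered entries. To control the right-hand side, I would apply Bernstein's inequality to $\|M_{i\cdot}\|_2^2 = \sum_j M_{ij}^2$, whose summands have expectation $\leq \rho$, variance $\lesssim \rho$, and magnitude $\lesssim 1$. Concentration around the mean $\mathbb{E}\|M_{i\cdot}\|_2^2 \leq n\rho$, combined with a union bound over $i = 1, \ldots, n$, then gives $\mathbb{E}\max_i\|M_{i\cdot}\|_2 \leq C_1 \sqrt{n\rho}$ under the hypothesis $n\rho \geq c_2 \log n$ with $c_2 > 4$; multiplying through yields the desired $\mathbb{E}\|M\|_{\op} \leq C\sqrt{n\rho}$.

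The main obstacle is tracking the explicit constant $C > 32\cdot 2^{1/4} e^2$, since Seginer-type inequalities are usually stated only with an unspecified absolute constant. A cleaner route delivering explicit constants is a direct moment method: bound $\bigl(\mathbb{E}\|M\|_{\op}^{2k}\bigr)^{1/(2k)}$ by $\bigl(\mathbb{E}\,\tr((MM^\top)^k)\bigr)^{1/(2k)}$, expand the trace combinatorially using the iid Bernoulli structure of the entries, and optimize over $k \asymp \log n$. The sparsity hypothesis $n\rho \geq c_2\log n$ enters decisively at this step: it guarantees that the dominant contribution to the $2k$-th moment comes from pair partitions (giving the $\sqrt{n\rho}$ scaling familiar from Wigner-type arguments), while the contribution of rare large entries is negligible. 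Without such a lower bound on $n\rho$, the operator norm would instead be governed by maximum-degree fluctuations and the asserted $\sqrt{n\rho}$ bound would fail.
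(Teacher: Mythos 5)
Your outline follows the same skeleton as the paper's proof: reduce $\mathbb{E}\|M\|_{\op}$ to $\mathbb{E}\max_i\|M_{i\cdot}\|_2$ via Seginer's inequality, then control the maximum row norm by concentration plus a union bound, with the hypothesis $n\rho\ge c_2\log n$ absorbing the $\log n$ from the union bound. Two remarks. First, your worry about unspecified constants in Seginer-type bounds is moot here: the paper invokes Corollary~2.2 of Seginer (2000) with the explicit constant $16\times 2^{1/4}e^2$, which is exactly where the $32\times 2^{1/4}e^2$ in the statement comes from (after the row/column symmetry), so the trace-method fallback you sketch is unnecessary. Second, your Bernstein step has a genuine inaccuracy: the summands $M_{ij}^2$ in $\|M_{i\cdot}\|_2^2=\sum_j M_{ij}^2$ are \emph{not} of magnitude $\lesssim 1$ in general, since $|M_{ij}|\le\sum_t|w_t|$ can be as large as $\sqrt{T}$ (take $w_t\equiv T^{-1/2}$); they are only sub-exponential with $O(1)$ norm, so you would need the sub-exponential form of Bernstein, whose constants are not explicit. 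The paper sidesteps this entirely by applying Talagrand/Samson concentration to $\|\widetilde B_{i*}\|_2$ directly, viewed as a $1$-Lipschitz convex function of the underlying bounded independent variables $\{(B_t)_{ij}\}_{j,t}$ (the linear map $x\mapsto(\sum_t w_t x_{jt})_j$ has operator norm $\|w\|_2=1$); this yields the clean tail $\exp(-\varepsilon^2/2)$ around $\mathbb{E}\|\widetilde B_{i*}\|_2\le\sqrt{n\rho}$, after which the tail integral over $\varepsilon\ge\sqrt{\rho n}$ uses $c_2>4$ exactly as you intend. Your argument is repairable, but as written the boundedness claim is false and the explicit constant would be lost.
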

\begin{proof}
	Let $\widetilde B =\sum_{t=1}^Tw_tB(t)$.  To bound $\mathbb{E}( \|\widetilde B \|_{op})$, since the entries of $\widetilde B$ are independent and identically distributed with $\mathbb{E}(\widetilde B) =0$, by Corollary 2.2 in \cite{seginer2000expected}, one has
		\[
			\mathbb{E} \left( \|\widetilde B \|_{\op} \right)  \le C_1 \mathbb{E}\left( \max_{1\le i\le n} \| \widetilde  B_{i*}\|\right),
		\]
		where $C_1 = 16 \times 2^{1/4}e^2$.  For any $i \in \{1, \ldots, n\}$,  since $\| \widetilde  B_{i*}\|$ is one-Lipschitz convex function, by Corollary 4 in \cite{samson:2000}, it holds that for any $\varepsilon > 0$,
		\[
			\mathbb{P}\left(  \|\widetilde  B_{i*}\| \ge \mathbb{E} \|\widetilde  B_{i*}\| +\varepsilon  \right) \le  \exp(-\varepsilon^2/2).
		\]
		Since
		\[
			(\mathbb{E} \|\widetilde  B_{i*}\|)^2 \le \mathbb{E} ( \|\widetilde  B_{i*}\|^2)  = \sum_{t=1}^T w_t^2 \mathbb{E} (\| (B(t))_{i*}\|^2)  + \sum_{s\not = t} w_sw_t \mathbb{E}( B_s , B_t) = \sum_{t=1}^T w_t^2 n\rho(1-\rho) \le  n\rho,
		\]
		one has
		\begin{equation}\label{eq:column bound}
			\mathbb{P}\left(  \|\widetilde  B_{i*}\| \ge \sqrt {n\rho} +\varepsilon  \right)\le  \exp(-\varepsilon^2/2).
		\end{equation}
		
	Using the above display, it follows that
	\begin{align*}
			& \mathbb{E}\left(\max_{1\le i\le n}  \|\widetilde  B_{i*}\|\right)   = \int_{0}^\infty \mathbb{P}\left(\max_{1\le i\le n}  \|\widetilde  B_{i*}\|_2 \ge \varepsilon\right)\,d \varepsilon \le \int_0^{2\sqrt {\rho n} }  1\,d \varepsilon  + \int_{2\sqrt {\rho n}}^{\infty} n\mathbb{P}( \|\widetilde  B_{1*}\| \ge  \varepsilon)\,d \varepsilon  \\
			&=2\sqrt {\rho n}  + \int_{\sqrt {\rho n}}^{\infty} n\mathbb{P}( \|\widetilde  B_{1*}\| \ge  \varepsilon +\sqrt{\rho n})\,d \varepsilon \le 2\sqrt {\rho n}  + \int_{\sqrt {\rho n}}^{\infty} n \exp (-\varepsilon^2/2)\,d \varepsilon \\
			& \le 2\sqrt {\rho n}  + \frac{1}{\sqrt{\rho n}} \int_{\sqrt {\rho n}}^{\infty} n\varepsilon  \exp (-\varepsilon^2/2)\,d \varepsilon \le 2\sqrt {\rho n}  + n^{1-c_2/2} \frac{1}{\sqrt{c_2 \log(n)}} < C_2\sqrt {\rho n},
	\end{align*}	
	where $C_2 > 2$, the first inequality follows from the observation that $\|\widetilde  B_{i*}\|$ are identically distributed, the second inequality follows from \eqref{eq:column bound} and the last inequality follows from $\rho n\ge c_2\log (n) $, $c_2 > 2$.
\end{proof}
	
Lemmas~\ref{lemma:USVT}	and \ref{lemma:USVT 2} are from Lemma~1 in \cite{Xu2017}.
	
\begin{lemma}
	\label{lemma:USVT}
	Let $A, B\in \mathbb{R}^{n \times n}$ be two symmetric matrices with $\|A - B\|_{\mathrm{op}}< \tau/(1+\delta)$, $\tau >0$.  Then for a fixed $\delta < 1$, 
	we have 
	\[
		\| \usvt (A, \tau, \infty) - B\|_\mathrm{F}^2 \le 16\min_{0\leq r \leq n}\left\{r\tau^2+ (1+\delta)^2\delta^{-2} \sum_{i = r+1}^n \lambda_i^2(B)\right\}.
	\]
\end{lemma}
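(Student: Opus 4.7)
The plan is to exploit the explicit spectral form of $\hat{A} := \usvt(A,\tau,\infty)$. Writing $\hat{A} = \sum_{i:\,|\lambda_i(A)|\ge \tau}\lambda_i(A) v_i v_i^\top = PAP$, where $P$ is the orthogonal projection onto the span of the retained eigenvectors of $A$ and $k := \rank(P)$, reduces the question to estimating $\|PAP - B\|_{\mathrm{F}}^2$. I first plan to invoke Weyl's inequality together with the hypothesis $\|A-B\|_{\op} < \tau/(1+\delta)$ to deduce the two key eigenvalue comparisons: (a) every retained index satisfies $|\lambda_i(B)|\ge \tau\delta/(1+\delta)$, and (b) for every retained $i$, $|\lambda_i(A)|\le (1+\delta)\delta^{-1}|\lambda_i(B)|$. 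Statement (a) bounds how many large eigenvalues of $B$ exist in terms of $k$; statement (b) converts squared tails of $A$ into squared tails of $B$ with the factor $(1+\delta)^2\delta^{-2}$ that appears in the lemma.

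Next, for any fixed $r\in\{0,1,\dots,n\}$, I will introduce the best rank-$r$ truncations $A_r,\,B_r$ of $A$ and $B$, and use the two-step decomposition
\[
\|\hat{A}-B\|_{\mathrm{F}}^2 \;\le\; 2\|\hat{A}-A_r\|_{\mathrm{F}}^2 + 2\|A_r-B\|_{\mathrm{F}}^2,\qquad \|A_r-B\|_{\mathrm{F}}^2 \le 2\|A_r-B_r\|_{\mathrm{F}}^2 + 2\sum_{i>r}\lambda_i^2(B).
\]
The first term $\|\hat{A}-A_r\|_{\mathrm{F}}^2$ depends only on whether $k\ge r$ or $k<r$: in the former case it equals $\sum_{i=r+1}^k \lambda_i^2(A)$, which by (b) is dominated by $(1+\delta)^2\delta^{-2}\sum_{i>r}\lambda_i^2(B)$; in the latter it equals $\sum_{i=k+1}^r \lambda_i^2(A)\le (r-k)\tau^2\le r\tau^2$. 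This alone already produces the two summands appearing on the right-hand side of the lemma.

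For the cross-term $\|A_r-B_r\|_{\mathrm{F}}^2$, I will use that $A_r-B_r$ has rank at most $2r$, so $\|A_r-B_r\|_{\mathrm{F}}^2 \le 2r\|A_r-B_r\|_{\op}^2$. The operator-norm factor is bounded via the triangle inequality $\|A_r-B_r\|_{\op}\le |\lambda_{r+1}(A)|+\|A-B\|_{\op}+|\lambda_{r+1}(B)|$, and Weyl converts $|\lambda_{r+1}(A)|$ into $|\lambda_{r+1}(B)|+\tau/(1+\delta)$. The resulting terms are absorbed: the pieces involving $\tau^2$ combine with the $r\tau^2$ term of Step~2, and the pieces involving $\lambda_{r+1}^2(B)$ are absorbed into $\sum_{i>r}\lambda_i^2(B)$ (possibly after re-indexing $r$, using that $\lambda_{r+1}^2(B)\le\sum_{i>r}\lambda_i^2(B)$). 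Finally, taking the infimum over $r$ yields the stated inequality.

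The main obstacle will be arithmetic rather than structural: the lemma requires the exact constant $16$, so I will have to track every factor of $2$ coming from the $(a+b)^2\le 2(a^2+b^2)$ inequality, from the rank-doubling bound, and from the $\tau+\tau/(1+\delta)\le 2\tau$ simplifications, and verify that the $(1+\delta)^2\delta^{-2}$ factor only attaches to the tail $\sum_{i>r}\lambda_i^2(B)$. A secondary subtlety is that the case split on $k$ versus $r$ must be done in a manner compatible with taking the minimum over $r$, since in the regime $k>r$ one cannot simply discard the extra rank-$k-r$ component without paying the Weyl price $(1+\delta)^2\delta^{-2}$.
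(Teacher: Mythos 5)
First, a point of comparison: the paper does not actually prove this lemma --- it is imported verbatim as Lemma~1 of \cite{Xu2017} (see the sentence ``Lemmas~\ref{lemma:USVT} and \ref{lemma:USVT 2} are from Lemma~1 in \cite{Xu2017}''), so there is no in-paper argument to match your proposal against. Judged on its own terms, your skeleton (the projection form of $\usvt$, the two Weyl consequences (a) and (b), and the split through $A_r$ and $B_r$) is mostly sound, but the handling of the cross term contains a genuine gap. You bound $\|A_r-B_r\|_{\mathrm{F}}^2\le 2r\|A_r-B_r\|_{\op}^2$ with $\|A_r-B_r\|_{\op}\le 2|\lambda_{r+1}(B)|+2\tau/(1+\delta)$, which produces a term of order $r\,\lambda_{r+1}^2(B)$, and you then propose to absorb it into $\sum_{i>r}\lambda_i^2(B)$ via $\lambda_{r+1}^2(B)\le\sum_{i>r}\lambda_i^2(B)$. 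That step silently discards the factor $r$: if, say, $|\lambda_{r+1}(B)|\gg\tau$ and $\lambda_i(B)=0$ for $i>r+1$, then $r\lambda_{r+1}^2(B)$ exceeds $\sum_{i>r}\lambda_i^2(B)$ by a factor of $r$ and also dwarfs $r\tau^2$, so the per-$r$ inequality cannot be reached along this route for such $r$. ``Re-indexing $r$'' does not rescue it, because the obstruction is the multiplicative $r$, not the index of the tail.

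The repair is to prove the bound only at the value of $r$ that minimizes the right-hand side. Since $f(r)-f(r-1)=\tau^2-(1+\delta)^2\delta^{-2}\lambda_r^2(B)$, the minimum is attained at $r^*=\#\{i:|\lambda_i(B)|\ge\tau\delta/(1+\delta)\}$, and there $|\lambda_{r^*+1}(B)|<\tau\delta/(1+\delta)$, whence $\|A_{r^*}-B_{r^*}\|_{\op}\le|\lambda_{r^*+1}(A)|+\|A-B\|_{\op}+|\lambda_{r^*+1}(B)|<2\tau$ and the cross term is at most $8r^*\tau^2$ --- i.e.\ it must be charged to the $r\tau^2$ budget, not to the tail. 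This choice also simplifies your case split: your own observation (a) shows every retained index of $A$ satisfies $|\lambda_i(B)|\ge\tau\delta/(1+\delta)$, hence $k\le r^*$, so the branch $k>r$ never occurs and $\|\hat A-A_{r^*}\|_{\mathrm{F}}^2\le(r^*-k)\tau^2\le r^*\tau^2$. Finally, on the arithmetic you flagged: repeated use of $(a+b)^2\le 2a^2+2b^2$ overshoots the constant $16$; instead apply the triangle inequality once, $\|\hat A-B\|_{\mathrm{F}}\le(1+2\sqrt{2})\sqrt{r^*}\,\tau+\bigl(\sum_{i>r^*}\lambda_i^2(B)\bigr)^{1/2}$, and square at the end, using $(1+\delta)^2\delta^{-2}\ge 4$ for $\delta\le 1$ to absorb the cross product; then the constant $16$ does come out.
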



\begin{lemma}\label{lemma:USVT 2}
	Let $A$ and $B$ be defined as in \Cref{lemma:USVT}, and that $\|B\|_{\infty}\le \tau' $, then
		\[
			\| \usvt (A, \tau,  \tau') - B\|_\mathrm{F}^2 \le 16\min_{0\leq r \leq n}\left\{r\tau^2+ (1+\delta)^2\delta^{-2} \sum_{i = r+1}^n \lambda_i^2(B)\right\}.
		\]
\end{lemma}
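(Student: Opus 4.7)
The plan is to derive this lemma directly from the previous Lemma~\ref{lemma:USVT} by showing that the additional entrywise truncation at level $\tau'$ performed by $\usvt(A,\tau,\tau')$ --- beyond the singular value thresholding already carried out by $\usvt(A,\tau,\infty)$ --- is a contractive operation relative to $B$, and so it can only decrease the Frobenius error.

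Concretely, I would first write $A' = \usvt(A,\tau,\infty)$ for the matrix obtained after spectral truncation only, and $A'' = \usvt(A,\tau,\tau')$ for the matrix obtained after both spectral and entrywise truncation. Inspecting Algorithm~\ref{algorithm:USVT}, the two differ only in that $A''_{ij} = \sign(A'_{ij})\,\tau'$ whenever $|A'_{ij}| > \tau'$ and $A''_{ij} = A'_{ij}$ otherwise; equivalently, $A''_{ij}$ is the metric projection of $A'_{ij}$ onto the interval $[-\tau',\tau']$.

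The hypothesis $\|B\|_{\infty}\le \tau'$ means that every entry of $B$ already lies in $[-\tau',\tau']$. Since projection onto a closed interval (in general, a closed convex set) is $1$-Lipschitz and fixes points in the interval, one obtains the entrywise inequality
\[
|A''_{ij} - B_{ij}| \le |A'_{ij} - B_{ij}| \quad \text{for all } i,j,
\]
which one can also verify by a three-line case analysis on the sign and magnitude of $A'_{ij}$. Squaring and summing yields $\|A'' - B\|_{\mathrm{F}}^2 \le \|A' - B\|_{\mathrm{F}}^2$. Applying Lemma~\ref{lemma:USVT} to the right-hand side then delivers exactly the claimed bound.

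There is no serious obstacle here: the whole content is the observation that entrywise clipping toward an interval that already contains $B$ cannot move $A'$ farther from $B$. The only thing to be mildly careful about is confirming the exact form of the thresholded matrix produced by Algorithm~\ref{algorithm:USVT} so that the projection interpretation is rigorous, but this is immediate from the definition.
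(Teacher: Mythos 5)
Your proof is correct: since $\|B\|_{\infty}\le\tau'$, each entry of $B$ is a fixed point of the projection onto $[-\tau',\tau']$, so the non-expansiveness of that projection gives $\|\usvt(A,\tau,\tau')-B\|_{\mathrm{F}}\le\|\usvt(A,\tau,\infty)-B\|_{\mathrm{F}}$, and Lemma~\ref{lemma:USVT} finishes the argument. The paper does not spell out a proof (it simply cites Lemma~1 of \cite{Xu2017}), but your reduction is exactly the standard route one would take.
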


\subsection{Proofs of technical results used in \Cref{thm-sbm-main}}
\begin{lemma}\label{lemma:SBM frobenius norm}
	Suppose $A, \Gamma \in \mathbb{R}^{n \times n}$ are symmetric matrices, satisfying that the entries of $A$ are Bernoulli random variables, $\|\Gamma\|_{\infty}\le \rho$ and $\| A-(\Gamma-\diag (\Gamma))\|_{\op}\le (1+\delta)\tau $.  Then 
		\[
			\| \usvt(A,\tau,\infty)- \Gamma\|_{\mathrm{F}}^2 \le 16\min_{0\leq r \leq n}\left\{r\tau^2 + 2(1+\delta)^2\delta^{-2} \sum_{i = r+1}^n \lambda_i\right\} +32(1+\delta)^2\delta^{-2}\| \mathrm{diag} (\Gamma)\|_\mathrm{F}^2,
		\]
	where $\{ \lambda_i\}_{i=1}^n$ are the eigenvalues of $\Gamma$ ordered in decreasing absolute values.

	\end{lemma}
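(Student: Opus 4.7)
The plan is to reduce the statement to Lemma~\ref{lemma:USVT} by treating the diagonal-removed matrix $B := \Lambda - \diag(\Lambda)$ as the natural target of the USVT procedure, and then re-inserting the diagonal correction in two places: in the spectral tail and in the error metric. Note that $B$ is symmetric (since $\Lambda$ is), and by hypothesis $\|A - B\|_{\op} \le (1+\delta)\tau$, which (after aligning the threshold parameter of Lemma~\ref{lemma:USVT} with the present $\tau$) puts us in the regime of that lemma. Applying it gives
\[
\|\usvt(A, \tau, \infty) - B\|_{\mathrm{F}}^2 \le 16\min_{0 \le r \le n}\Bigl\{r\tau^2 + (1+\delta)^2\delta^{-2}\textstyle\sum_{i>r}\lambda_i^2(B)\Bigr\},
\]
where $\lambda_i(B)$ denotes the $i$th eigenvalue of $B$ in decreasing absolute value.

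Next, I would pass from $\lambda_i(B)$ to $\lambda_i(\Lambda)$ using a spectral perturbation inequality (Hoffman--Wielandt / Mirsky) applied to the symmetric pair $B,\Lambda$:
\[
\textstyle\sum_i |\lambda_i(B) - \lambda_i(\Lambda)|^2 \le \|B - \Lambda\|_{\mathrm{F}}^2 = \|\diag(\Lambda)\|_{\mathrm{F}}^2.
\]
Combined with $\lambda_i^2(B) \le 2\lambda_i^2(\Lambda) + 2(\lambda_i(B) - \lambda_i(\Lambda))^2$ and summed over $i>r$, this yields $\sum_{i>r}\lambda_i^2(B) \le 2\sum_{i>r}\lambda_i^2(\Lambda) + 2\|\diag(\Lambda)\|_{\mathrm{F}}^2$. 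Substituting into the previous display and pulling the $r$-independent diagonal term out of the minimum produces the intermediate bound
\[
\|\usvt(A, \tau, \infty) - B\|_{\mathrm{F}}^2 \le 16\min_r\Bigl\{r\tau^2 + 2(1+\delta)^2\delta^{-2}\textstyle\sum_{i>r}\lambda_i^2(\Lambda)\Bigr\} + 32(1+\delta)^2\delta^{-2}\|\diag(\Lambda)\|_{\mathrm{F}}^2.
\]

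Finally, I would convert the distance to $B$ into distance to $\Lambda$ via the triangle inequality $\|\usvt(A,\tau,\infty) - \Lambda\|_{\mathrm{F}} \le \|\usvt(A,\tau,\infty) - B\|_{\mathrm{F}} + \|\diag(\Lambda)\|_{\mathrm{F}}$, which, after squaring and absorbing the cross term into the $\|\diag(\Lambda)\|_{\mathrm{F}}^2$ coefficient (note that the downstream use in the proof of the SBM theorem immediately takes a square root and applies $\sqrt{x+y}\le\sqrt{x}+\sqrt{y}$, so it is actually the square-root form $4\sqrt{r}\tau + 4\sqrt{2}(1+\delta)\delta^{-1}\|\diag(\Lambda)\|_{\mathrm{F}} + \ldots$ that is invoked), yields the stated inequality.

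The main obstacle will be the bookkeeping of multiplicative constants across these three steps. In particular, the hypothesis here is phrased as $\|A - B\|_{\op} \le (1+\delta)\tau$ whereas Lemma~\ref{lemma:USVT} is phrased with $\|A-B\|_{\op} < \tau/(1+\delta)$, so a careful reparametrisation of the threshold (or of $\delta$) is needed to apply it cleanly; and the ratio $(1+\delta)/\delta$ appears repeatedly and must be propagated consistently through the Hoffman--Wielandt step and the final triangle inequality to match the claimed constants $16$ and $32$.
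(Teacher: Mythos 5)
Your overall route is the same as the paper's: apply \Cref{lemma:USVT} with the hollowed matrix $B=\Lambda-\diag(\Lambda)$ as the target, transfer the spectral tail from $B$ to $\Lambda$ at the cost of a $\|\diag(\Lambda)\|_{\mathrm F}^2$ term, and then account for the diagonal in the error metric. (You are also right that the threshold conventions of \Cref{lemma:USVT} and of this lemma's hypothesis need to be reconciled, and that the stated constants do not obviously survive the final squared triangle inequality --- the paper's own proof in fact stops at the bound on $\|\usvt(A,\tau,\infty)-(\Lambda-\diag(\Lambda))\|_{\mathrm F}^2$ and leaves that last step, as well as the $\lambda_i$ versus $\lambda_i^2$ typo in the statement, implicit.)

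The one step where you diverge, and where your argument as literally written fails, is the eigenvalue-tail transfer. You invoke $\sum_i|\lambda_i(B)-\lambda_i(\Lambda)|^2\le\|B-\Lambda\|_{\mathrm F}^2$ with the eigenvalues ordered by \emph{decreasing absolute value}. Hoffman--Wielandt/Mirsky holds for the decreasing-by-value ordering of the (signed) eigenvalues of symmetric matrices, not for the absolute-value ordering: e.g.\ for $B=\diag(1,-1.01)$ and $\Lambda=\diag(1.01,-1)$ the absolute-value pairing gives a left-hand side of about $8$ while $\|B-\Lambda\|_{\mathrm F}^2=0.0002$. The fix is cheap because only squares of eigenvalues enter your bound: for symmetric matrices the absolute eigenvalues are the singular values, and Mirsky's inequality for singular values gives $\sum_i(\sigma_i(B)-\sigma_i(\Lambda))^2\le\|B-\Lambda\|_{\mathrm F}^2$, whence $\sum_{i>r}\lambda_i^2(B)=\sum_{i>r}\sigma_i^2(B)\le 2\sum_{i>r}\lambda_i^2(\Lambda)+2\|\diag(\Lambda)\|_{\mathrm F}^2$ exactly as you need. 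The paper reaches the same tail inequality by a different, more elementary device: the variational characterization $\sum_{i>r}(\lambda_i'(B))^2\le\sum_{i>r}v_i^{\top}B^2v_i$ evaluated at the eigenvectors $v_i$ of $\Lambda$, followed by expanding $B^2=\Lambda^2-2\Lambda\diag(\Lambda)+\diag(\Lambda)^2$. Either mechanism works; yours is shorter once the singular-value form is cited correctly, while the paper's avoids any perturbation theorem at the price of a short computation.
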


\begin{proof}
	Let $\{ \lambda'_i\}_{i=1}^n$ be the eigenvalues of  $\Gamma-\diag (\Gamma)$ ordered in absolute value, $\{ \lambda_i\}_{i=1}^n$ be the eigenvalues of  $\Gamma$ ordered in absolute value and $ \{ v_i\}_{i=1}^n$ be the eigenvectors  of  $\Lambda$.  Observe that for any orthonormal basis $ \{ u_i\}_{i=1}^n$, and any $r = 1, \ldots, n-1$,
		\[
			\sum_{i=r+1}^n (\lambda'_i)^2  \le \sum_{i=r+1} ^n u_i^{\top}(\Gamma-\diag (\Gamma) )^2u_i.
		\]
		By \Cref{lemma:USVT}, one has
		\[
			\| \usvt (A, \tau, \infty) - (\Gamma-\diag (\Gamma)) \|_{\mathrm{F}}^2 \le 16\min_{0\leq r \leq n}\left\{r\tau^2+ (1+\delta)^2\delta^{-2} \sum_{i = r+1}^n (\lambda'_i)^2\right\}.
		\]
		For any $r = 1, \ldots, n$,
		\begin{align*}
			& \sum_{i = r+1}^n (\lambda_i')^2 \le  \sum_{i=r+1} ^n v_i^{\top}(\Gamma-\diag (\Gamma) )^2v _i  -  v_i^{\top}\Gamma^2 v_i +\sum_{i = r+1}^n \lambda_i^2  \\  
			= & \sum_{i=r+1} ^n v_i^{\top} \left( -2\Gamma  \diag(\Gamma) + \diag(\Lambda)^2 \right)v_i +\sum_{i = r+1}^n \lambda_i^2   \le \sum_{i=r+1} ^n \|\Gamma v_i \|_2^2  +2 v_i^{\top} \diag(\Gamma)^2 v_i +\sum_{i = r+1}^n \lambda_i^2   \\
			\le & 2\sum_{i = r+1}^n  \lambda_i^2    + 2\| \diag (\Gamma)\|_{\mathrm{F}}^2,
		\end{align*}
		which leads to the desired results.
	\end{proof}

\section{Properties of the population CUSUM statistics}
\label{section:properties of cusum}

Recall that in \Cref{def-1} we introduced a general version of CUSUM statistics, which can be applied to various types of data.  In Sections~\ref{sec-vec-cusum} and \ref{sec-1d-cusum}, we apply \Cref{def-1} to vectors and scalars respectively.

\subsection{Vector CUSUM}\label{sec-vec-cusum}

\begin{assumption}\label{assum-multi-cusum}
	Let $\{ V(t)\}_{t=1}^T\subset  \mathbb  R^{p}$. Assume there exists $\{ \nu_m\}_{m=0}^M \subset \{1, \ldots, T\}$ such that $1 = \nu_0 < \nu_1 < \ldots < \nu_M \leq T < \nu_{M+1} = T + 1$ and, for $t=2,\ldots,T$, 
		\[
	V(t) \neq V(t-1) \quad  \text{if and only if} \quad t \in \{ \nu_1,\ldots,\nu_M\}.
	\]
	Let $\inf_{m = 1, \ldots, M}\|V (\nu_m) -V (\nu _{m}-1) \| = \inf_{m = 1, \ldots, M} \kappa_m \ge \kappa = \kappa_0\rho\sqrt{p}$.
\end{assumption}

The results in this subsection are used in the proofs of the main theorems.  Below, $\{V(t)\}_{t=1}^T$ corresponds to $\{\mu(t)\}_{t=1}^T$ as defined in \Cref{assume:model}, and $\kappa = \kappa_0\rho\sqrt{p}$ (see \Cref{assume:model}).  For brevity, we introduce new notation in this subsection such that it is self-contained within this subsection.

For $0 \leq s < t < e \leq T$, denote the CUSUM statistics
	\begin{equation}\label{eq-v-cusum}
	\widetilde V^{s,e}(t) =   \sqrt {\frac{e-t}{(e-s)(t-s)} } \sum_{r=s+1}^t V(r)  -\sqrt {\frac{t-s}{(e-s)(e-t)} } \sum_{r=t+1}^e V(r) .
	\end{equation}
	For simplicity denote  $ \widetilde V(t) =\widetilde V^{0,T}(t)$. It is desired to show that this vector version CUSUM statistics have the same properties as the univariate CUSUM statistic. 

\begin{remark}
	The CUSUM statistic defined in \eqref{eq-v-cusum} is translational invariant. In other words, let $W\in \mathbb R^p$ and $U(t) =V(t)+W$ for all $t$, then
		\[
		\widetilde V(t) =\widetilde U(t).
		\]
		Consequently it can be assumed that $ \sum_{t=1}^T V(t) =0$, and
	\begin{align}\label{eq:translation free}
	\widetilde V(t)  = \left(\sum _{r=1}^tV(r) -\frac{t}{ T} \sum_{r=1}^T V(r)  \right)/ \sqrt {  \frac{t(T-t)}{T} } =  \left(\sum _{r=1}^tV(r) \right)/ \sqrt {  \frac{t(T-t)}{T} }.
	\end{align}
\end{remark}

\begin{proposition}\label{eq:monotonicity of mcusum}
	The quantity $\|\widetilde V (t)\|^2$ is maximized at the change points.  For $t\in [\nu_{m-1},\nu_{m}]$, $\|\widetilde V (t)\|^2$ is either monotone or decreases and then increases.
\end{proposition}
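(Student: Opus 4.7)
The plan is to exploit the fact that on each constancy interval $[\nu_{m-1},\nu_m]$ the partial sum $h(t):=\sum_{r=1}^t V(r)$ is affine in $t$, so that $\|\widetilde V(t)\|^2$ becomes an explicit rational function of $t$. After translating via \eqref{eq:translation free} to assume $\sum_{r=1}^T V(r)=0$, one has
\[
\|\widetilde V(t)\|^2 \;=\; \frac{T\,\|h(t)\|^2}{t(T-t)}.
\]
Writing $h(t)=p+tq$ on $[\nu_{m-1},\nu_m]$ with $q$ the common value of $V$ on this interval and $p=\sum_{r=1}^{\nu_{m-1}}V(r)-\nu_{m-1}q$, a direct computation gives $\|h(t)\|^2 = at^2+bt+c$ with $a=\|q\|^2\ge 0$, $b=2\langle p,q\rangle$, $c=\|p\|^2\ge 0$, and differentiation shows that $\tfrac{d}{dt}\|\widetilde V(t)\|^2$ vanishes on $(0,T)$ exactly where
\[
\psi(t) \;:=\; (aT+b)\,t^2 + 2ct - cT \;=\; 0.
\]

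The crux is to show that $\psi$ has at most one root in $(0,T)$ and that at such a root $\psi$ changes from negative to positive, so that the unique critical point of $\|\widetilde V(t)\|^2$, if any, is a local minimum. The boundary identities $\psi(0)=-cT\le 0$ and $\psi(T)=T\|p+Tq\|^2\ge 0$ are the driving forces. I would then split on the sign of the leading coefficient $aT+b$. If $aT+b>0$, the vertex $-c/(aT+b)$ of the upward parabola $\psi$ is nonpositive, so $\psi$ is increasing on $[0,T]$ and the sign change is immediate. If $aT+b=0$, $\psi$ is linear (or identically zero). If $aT+b<0$, then $\psi$ is a downward parabola whose discriminant equals $4c(c+bT+aT^2)=4c\|p+Tq\|^2\ge 0$, so its two real roots $t_1\le t_2$ must satisfy $0\le t_1\le T\le t_2$ by the boundary sign conditions, and hence $\psi$ again changes sign only once in $(0,T)$, from negative to positive.

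Combining these cases, on each open subinterval $(\nu_{m-1},\nu_m)$ the function $\|\widetilde V(t)\|^2$ has at most one interior critical point, necessarily a local minimum, and is therefore either monotone or first decreases and then increases, which is the second assertion. The maximum of such a shape on a closed subinterval is attained at one of its endpoints, so the global maximum of $\|\widetilde V(t)\|^2$ over $t\in(0,T)$ is attained at some change point $\nu_m$ (the endpoints $t=0$ and $t=T$ give value $0$ after the translation, so they are never maximizers). The main obstacle is the case $aT+b<0$: one must carefully use \emph{both} $\psi(0)\le 0$ and $\psi(T)\ge 0$ to prevent the second root $t_2$ from falling inside $(0,T)$, which would produce a spurious interior local maximum; it is precisely the nonnegativity $\|p+Tq\|^2=\|h(T)\|^2\ge 0$ that rules this out and keeps the shape as claimed.
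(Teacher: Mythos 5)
Your proof is correct and follows essentially the same route as the paper's: both reduce the claim to the observation that on a constancy interval the derivative of $\|\widetilde V(t)\|^2$ has a quadratic numerator that is nonpositive at the left end of $[0,T]$ and nonnegative at the right end, hence changes sign at most once and only from negative to positive, so the unique interior critical point (if any) is a local minimum. The only differences are cosmetic: the paper cites Venkatraman's Lemma~2.2 to write each coordinate as $(a_j-b_jx)/\sqrt{x(1-x)}$ and treats the resulting quadratic $g$ rather tersely, whereas you derive the quadratic $\psi$ directly from the affine partial sums $h(t)=p+tq$ and carry out the case analysis on the leading coefficient explicitly.
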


\begin{proof} 
	Let $t\in   (\nu_{m-1},\nu_{m})$. By Equation (2.7) of Lemma~2.2 in  \cite{venkatraman1992consistency}, for every $j = 1, \ldots, p$, $\widetilde V_j (t)$ can be continuously extended to the function 
		\[
		f_j(x)  =\frac{a_j-b_jx}{\sqrt {x(1-x)}},
		\]
		where $x =t/T$, $a_j$ and $b_j$ are defined similarly as in Lemma~2.2 in \cite{venkatraman1992consistency}.  Thus it suffices to show that for $x\in (c,d)$ where $0\le c\le d \le1$, the function
		\[
		f(x)=\sum_{j=1}^{p}\frac{(a_j-b_jx)^2}{x(1-x)}
		\]
		is maximized at either $c$ or $d$.

	Let  
		\[
		f'(x)=\sum_{j=1}^n \frac{-(2a_jx -b_jx -a_j)(b_jx-a_j)}{(x-1)^2x^2} = \frac{g(x)}{(x-1)^2x^2}.
		\]
		The desired result follows if $f'(x)$ is either nonpositive, or nonnegative or that there exists $x_0\in (0,1) $ such that 
		\begin{equation}f'(x)\begin{cases}
			\le 0 \quad \text{when} \quad x \le x_0\\
			\ge 0\quad \text{when} \quad x \ge x_0
			\end{cases}\label{eq:derivative cases}
		\end{equation}
	Since $(x-1)^2 x\ge 0 $ for all $x\in (0,1)$. Observe that $g$ is quadratic and that $g(0) = -\sum_{i=1}^na_i^2\le 0 $ and $g(1) = (b_ix-a_i)^2 \ge0 $.  Therefore $g(x)$ can have at most one root in $(c,d)$. If $g(x)$ has no root in $(c,d)$, then $g(x)$ is either positive or negative.  If $g(x)$ has a root $x_0\in (c,d)$, then \eqref{eq:derivative cases} holds.
\end{proof}

\begin{lemma}
	Suppose there exists a change point $\nu \in (0, T)$ such that any other change point $\nu'$ within $(0,T)$ satisfies $ \min \{ |\nu' -\nu| \}\ge \Delta$. Then
		\[
		\max_{0 < t < T} \| \widetilde V (t)\|^2 \ge \frac{\|V(\nu) -V(\nu+1) \|^2   \Delta^2}{48T} .
		\]
\end{lemma}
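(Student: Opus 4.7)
The plan is to reduce the vector CUSUM bound to a scalar one by projecting onto the direction $u=(V(\nu+1)-V(\nu))/\kappa_\nu$ of the jump at $\nu$, and then exploit a second-difference identity at the points $\nu-m,\nu,\nu+m$ for an $m\asymp\Delta$ chosen to fit inside the no-other-change-point window $[\nu-\Delta+1,\nu+\Delta-1]$.

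\textbf{Setup.} By translation invariance of the CUSUM I would assume $\sum_{t=1}^T V(t)=0$, so that, writing $S(t)=\sum_{r=1}^t V(r)$,
\[
\widetilde V(t)=\sqrt{\tfrac{T}{t(T-t)}}\,S(t).
\]
Set $g(r)=\langle u,V(r)\rangle$, $G(t)=\langle u,S(t)\rangle=\sum_{r=1}^t g(r)$, and $f(t)=\langle u,\widetilde V(t)\rangle=\sqrt{T/(t(T-t))}\,G(t)$. Since $u$ is a unit vector, Cauchy--Schwarz gives $|f(t)|\le\|\widetilde V(t)\|$, so it suffices to lower bound $\max_t|f(t)|^2$.

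\textbf{Second-difference identity.} I would pick $m=\lfloor\Delta/2\rfloor$. Because $\nu$ is at distance $\ge\Delta$ from every other change point, $g$ is constant on $\{\nu-m+1,\dots,\nu\}$ with value $g(\nu)$ and constant on $\{\nu+1,\dots,\nu+m\}$ with value $g(\nu+1)=g(\nu)+\kappa_\nu$ (using $\langle u,V(\nu+1)-V(\nu)\rangle=\kappa_\nu$ by construction of $u$). Summing yields the exact identity
\[
G(\nu-m)-2G(\nu)+G(\nu+m)=m\,[g(\nu+1)-g(\nu)]=m\kappa_\nu.
\]

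\textbf{Extracting the bound.} Substituting $G(t)=f(t)\sqrt{t(T-t)/T}$ and using the elementary bound $\sqrt{t(T-t)/T}\le\sqrt T/2$ valid for every $t\in\{1,\dots,T-1\}$, the triangle inequality gives
\[
m\kappa_\nu\le\bigl(|f(\nu-m)|+2|f(\nu)|+|f(\nu+m)|\bigr)\tfrac{\sqrt T}{2}\le 2\sqrt T\,\max_t|f(t)|.
\]
Hence $\max_t\|\widetilde V(t)\|^2\ge\max_t|f(t)|^2\ge m^2\kappa_\nu^2/(4T)$. A direct check using $m\ge(\Delta-1)/2$ (verified case-by-case for $\Delta=2,3$ and by the inequality $2\Delta^2-6\Delta+3\ge 0$ for $\Delta\ge 3$) gives $m^2/4\ge\Delta^2/48$, yielding the claimed $\Delta^2\kappa_\nu^2/(48T)$ bound.

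\textbf{Possible obstacle.} The only delicate point is the admissibility of $\nu\pm m$, i.e.\ the requirements $1\le\nu-m$ and $\nu+m\le T-1$. In the paper's model the artificial boundary points $\eta_0=0,\eta_{K+1}=T$ are themselves at distance $\ge\Delta$ from any interior change point $\nu$, so this is automatic. In a truly unconstrained setting one would split into cases and, when $\nu$ is within $\Delta$ of a boundary, replace the symmetric second-difference by a one-sided first-difference $G(\nu+m)-G(\nu)=m g(\nu+1)$ combined with $G(T)=0$ to trade away the baseline on whichever side of $\nu$ has room. Beyond this bookkeeping I do not expect a serious obstacle: the crux is the scalar reduction through the jump direction $u$ and the exact second-difference identity, which convert the multivariate question into a single line of triangle inequality.
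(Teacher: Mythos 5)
Your proof is correct and takes a genuinely different route from the paper's. The paper reduces to one dimension coordinate-wise: it applies a scalar three-point lemma (its Lemma on the 1D population CUSUM) to each coordinate $V_i(\cdot)$ separately, which shows that every index $i$ lands in at least one of three sets $I_1,I_2,I_3$ according to whether the partial sum up to $\nu-\Delta$, $\nu$, or $\nu+\Delta$ is large relative to $\Delta|V_i(\nu)-V_i(\nu+1)|/4$; a pigeonhole then yields one set carrying at least $\kappa^2/3$ of the squared jump, and the vector CUSUM is evaluated at the corresponding single time point, discarding the other coordinates. You instead reduce to one dimension by projecting onto the unit jump direction $u$, after which the exact second-difference identity $G(\nu-m)-2G(\nu)+G(\nu+m)=m\kappa_\nu$ with $m=\lfloor\Delta/2\rfloor$ does all the work in one line of triangle inequality; Cauchy--Schwarz transfers the scalar bound back to $\|\widetilde V(t)\|$. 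Your argument is cleaner (no pigeonhole, no auxiliary scalar lemma, no WLOG over which time point works), and it happens to recover the same constant $48$ ($\tfrac14\cdot\tfrac1{12}$ versus the paper's $\tfrac13\cdot\tfrac1{16}$); the paper's coordinate-wise version is more mechanical but reuses a 1D lemma it needs elsewhere anyway. The only loose ends in your write-up are cosmetic: the degenerate case $m=0$ (i.e.\ $\Delta<2$, where one should take $m=1$ and use the one-sided first difference $G(\nu+1)-G(\nu-1)-2G(\nu)+G(\nu)=g(\nu+1)-g(\nu)=\kappa_\nu$ directly), and, if $\Delta$ is not assumed integral, the numerical check $\lfloor\Delta/2\rfloor^2\ge\Delta^2/12$ can fail on a narrow band such as $\Delta\in(2\sqrt3,4)$ — harmless for the paper's purposes but worth a sentence. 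The boundary admissibility of $\nu\pm m$ you already handled correctly, since $G(0)=G(T)=0$ under the centering convention.
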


\begin{proof}
	Denote $\kappa=\|V(\nu) -V(\nu+1) \|$.

	\noindent{\bf Step 1.} Let 
		\begin{align*}
			I_1&=\left\{ i  : \, \left|\sum_{r=1}^{\nu -\Delta}  V_i(r) \right| \ge  \Delta|V_i (\nu) - V_i(\nu+1)| /4\right\},\\
			I_2&=\left\{ i  : \, \left|\sum_{r=1}^{\nu }  V_i(r) \right| \ge  \Delta|V_i (\nu) - V_i(\nu+1)| /4\right\},\\
			I_3&=\left\{ i  : \, \left|\sum_{r=1}^{\nu +\Delta}  V_i(r) \right| \ge  \Delta|V_i (\nu) - V_i(\nu+1)| /4\right\},\\
		\end{align*}
		Then by \Cref{lemma:1d cusum population},  $I_1\cup I_2\cup I_3 =\{1,\ldots,p\}$.  We have
		\[
		\sum_{l=1}^3\left\{ \sum_{ i \in I_l}(V_i(\nu)- V_i(\nu+1)) ^2 \right\}    \ge \sum_{i=1}^ p (V_i(\nu)- V_i(\nu+1)) ^2  = \kappa^2,
	 	\]
		which implies that
		\[
		\max_{l=1,2,3}\left\{  \sum_{ i \in I_l}(V_i(\nu)- V_i(\nu+1)) ^2  \right\} \ge\kappa^2/3.
		\]
	Without loss of generality, suppose $ \sum_{ i \in I_1}(V_i(\nu)- V_i(\nu+1)) ^2  \ge \kappa ^2/3.$ Then
		\begin{align*}
			\max_{1 < t < T}\|\widetilde V (t) \|^2 &\ge \|\widetilde V (\nu- \Delta)\|^2 = \frac{T } {(\nu -\Delta)(T- (\nu-\Delta))}\left\|\sum_{r=1}^{\nu -\Delta} V(r)\right\|^2\\
	&\ge\frac{1  } {T} \sum_{i \in I_1}\left(\sum_{r=1}^{\nu-\Delta} V_i(r)\right)^2\ge\frac{1} {T} 
	\sum_{i\in I_1} \left(  \Delta|(V_i(\nu) - V_i(\nu+1)) | /4\right)^2\\
	&\ge\frac{ \Delta^2} {48 T}\kappa^2,
	\end{align*}
	where the first equality  follows from 
	\eqref{eq:translation free} and the
	second last  inequality follows from the definition of $I_1$.
\end{proof}

\begin{lemma}\label{lemma:Venkatraman}
	Let $[s,e]\subset [0,T]$ be any generic interval containing a change point $\nu$ satisfying
		\[
			\min \{\nu-s , e-\nu \} \ge c_1\Delta.
		\]
		If
		\[
			\| \widetilde V^{s,e}(\nu)\|^2\ge \kappa ^2 \Delta^2 (e-s)^{-1} ,    
		\]
		and 
		there exists a sufficient small absolute constant $c_3 > 0$ such that 
		\begin{align}
			\max_{1 < t < T}  \|\widetilde V ^{s,e} (t)\|^2  -\| \widetilde V^{s,e}(\nu) \|^2   \le  c_3 \| \widetilde V ^{s,e}(\nu)  \|^2  \Delta^2 (e-s)^{-2},\label{eq:ven 2}
		\end{align}
		then there exists an absolute constant $c, c_1>0$ such that $d\in [s,e]$ satisfying $|d-\nu |\le  c_1\Delta/16 $, and
		\[
		\|\widetilde V^{s,e} (\nu)\|^2  - \|\widetilde V^{s,e} (d)\|^2   > c \|\widetilde V^{s,e} (\nu)\|^2  |\nu-d | \Delta(e-s)^{-2},
		\]	
		where $c = \min\{c_1, 1/2-2c_3\}$.
\end{lemma}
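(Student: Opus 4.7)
\medskip

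\noindent\textbf{Proof plan for \Cref{lemma:Venkatraman}.}
The result is the vector-CUSUM analogue of Venkatraman's (1992) Lemma~2.2. My plan is to reduce the statement to an explicit analytic computation on each of the two pieces $[\nu-\Delta,\nu]$ and $[\nu,\nu+\Delta]$ (intersected with $[s,e]$), on which no change point of $V$ appears other than $\nu$, and then use hypothesis \eqref{eq:ven 2} to pick the side on which $f(t):=\|\widetilde V^{s,e}(t)\|^2$ is non-increasing.

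\medskip

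\noindent\textbf{Step 1 (Setup).} After the rescaling $x=(t-s)/(e-s)$ and using the translation invariance in \eqref{eq:translation free}, each coordinate $\widetilde V_j^{s,e}$ extends, on every interval between consecutive change points, to a function of the form $(a_j-b_j x)/\sqrt{x(1-x)}$, exactly as in the proof of \Cref{eq:monotonicity of mcusum}. The minimal spacing $\Delta$ together with $\min\{\nu-s,e-\nu\}\ge c_1\Delta$ guarantees that $[\nu-\Delta,\nu+\Delta]\cap[s,e]$ contains no change point other than $\nu$. Consequently $f$ is $C^2$ separately on each of $[\nu-\Delta,\nu]$ and $[\nu,\nu+\Delta]$, and both $f$ and $f'$ admit closed-form expressions in terms of the partial sums of $V$ and the jump $\delta:=V(\nu+1)-V(\nu)$, with $\|\delta\|=\kappa_\nu\ge\kappa$.

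\medskip

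\noindent\textbf{Step 2 (Venkatraman-type jump in $f'$ and side selection).} I would compute the one-sided derivatives $f'(\nu^{\pm})$ from the closed form in Step~1. The derivative jump satisfies
\[
f'(\nu^-)-f'(\nu^+)\;\asymp\;\frac{\bigl|\langle\delta,\widetilde V^{s,e}(\nu)\rangle\bigr|}{(e-s)\sqrt{x_\nu(1-x_\nu)}},
\]
which, combined with $\min\{\nu-s,e-\nu\}\ge c_1\Delta$ and the hypothesis $\|\widetilde V^{s,e}(\nu)\|^2\ge\kappa^2\Delta^2(e-s)^{-1}$, is bounded below by a constant multiple of $f(\nu)\,\Delta/(e-s)^2$. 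Next, I invoke \eqref{eq:ven 2}: $f$ never exceeds $f(\nu)$ by more than $c_3 f(\nu)\Delta^2(e-s)^{-2}$. In view of the monotonicity structure of \Cref{eq:monotonicity of mcusum}, this forces $f'(\nu^{+})\le 0$ or $f'(\nu^{-})\ge 0$: otherwise $f$ would increase on one of the two sides of $\nu$, and combined with the derivative-jump lower bound the maximum over $[s,e]$ would exceed $f(\nu)$ by more than the allowed slack, provided $c_3$ is taken small enough. Fix $d$ on the decreasing side with $|d-\nu|\le c_1\Delta/16$.

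\medskip

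\noindent\textbf{Step 3 (Integration up to $d$).} Within $[\nu-c_1\Delta/16,\nu+c_1\Delta/16]$ the denominator $x(1-x)$ is bounded away from $0$ by a constant depending on $c_1$ only, so the closed form of $f$ on each segment yields a uniform bound on $f''$ in terms of $f(\nu)/(e-s)^2$. A Taylor expansion on the chosen side then gives
\[
f(\nu)-f(d)\;\ge\;\tfrac12\bigl|f'(\nu^{\pm})\bigr|\cdot|d-\nu|\;\ge\;c\,f(\nu)\,|d-\nu|\,\Delta(e-s)^{-2},
\]
with $c=\min\{c_1,\,1/2-2c_3\}$, as claimed. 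The second-order remainder is absorbed into the first-order bound thanks to $|d-\nu|\le c_1\Delta/16$ and $c_3$ sufficiently small.

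\medskip

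\noindent\textbf{Main obstacle.} The crux is the \emph{lower} bound on the derivative jump in Step~2: it must be in terms of $f(\nu)$ itself, not merely $\kappa_\nu^2$. In the scalar case this is essentially automatic, but for vector CUSUM the inner product $\langle\delta,\widetilde V^{s,e}(\nu)\rangle$ could be much smaller than $\kappa_\nu\|\widetilde V^{s,e}(\nu)\|$. This is exactly where I expect to need the near-maximality of $f$ at $\nu$ supplied by \eqref{eq:ven 2}: if $\widetilde V^{s,e}(\nu)$ were nearly orthogonal to $\delta$, then shifting $t$ slightly off $\nu$ would align the CUSUM more with the post-change direction and push $f$ above $f(\nu)$ by more than the permitted $c_3$-slack. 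Quantifying this forced alignment --- and making sure the resulting bound is robust to the piecewise structure of $f$ --- is the main technical difficulty in carrying out Step~2.
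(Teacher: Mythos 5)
Your plan is a derivative-and-Taylor argument, and as you yourself note in the ``Main obstacle'' paragraph, its crux --- a lower bound on the decrease of $f(t)=\|\widetilde V^{s,e}(t)\|^2$ near $\nu$ that is proportional to $f(\nu)$ itself --- is asserted rather than proved. That is a genuine gap, and moreover the mechanism you propose for closing it is pointed in the wrong direction. The decrease of $f$ to one side of $\nu$ is not driven by the jump in $f'$ across $\nu$ (which is indeed $\asymp \langle \delta, \widetilde V^{s,e}(\nu)\rangle$ and could be small), but by the one-sided behavior of $f$ on a change-point-free stretch. On such a stretch each coordinate $\widetilde V_u^{s,e}(\nu+l)$ is an explicit algebraic combination of its values at the two endpoints of the stretch, so no alignment between $\delta$ and $\widetilde V^{s,e}(\nu)$ is ever needed: when $\nu$ is the last change point before $e$ one gets the exact identity $\|\widetilde V^{s,e}(\nu)\|^2-\|\widetilde V^{s,e}(\nu+l)\|^2=\frac{l(h+i)}{h(i+l)}\|\widetilde V^{s,e}(\nu)\|^2$ with $i=\nu-s$, $h=e-\nu$ (because $\widetilde V^{s,e}(\nu+l)$ is then a scalar multiple of $\widetilde V^{s,e}(\nu)$), and when there is a further change point one works on $[\nu,\nu+\Delta/2]$ and the only error term is $B=f(\nu+\Delta/2)-f(\nu)$, which hypothesis \eqref{eq:ven 2} bounds by $c_3 f(\nu)\Delta^2(e-s)^{-2}$ directly --- no ``forced alignment'' step is required.

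Concretely, the paper's proof proceeds coordinate-wise from the closed forms in Case 1 and Case 2 of Lemma~2.6 of \cite{venkatraman1992consistency}, sums over coordinates (the coefficients are coordinate-free, which is exactly why the scalar argument vectorizes), and invokes the purely algebraic inequality of \Cref{lemma:Venkatraman 2.6} to convert the endpoint information $(f(\nu),B)$ into the linear-in-$|d-\nu|$ decrease with constant $\min\{c_1,1/2-2c_3\}$. If you want to salvage your route, you would have to replace the derivative-jump bound of your Step~2 by a lower bound on the one-sided derivative $f'(\nu^{+})$ (or $f'(\nu^{-})$) in terms of $f(\nu)$ and $B$, and control the Taylor remainder on an interval of length up to $\Delta/16$ rather than infinitesimally; at that point you are essentially re-deriving the algebraic identities, so the exact computation is both shorter and safer.
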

\begin{proof}
	Denote $\widetilde V^{s,e} (t) = \widetilde V (t) $ and  $ l = d- \nu$.
	It suffices consider the case of $l\ge0$, as the case of $l\le 0$ follows by reversing the time series.
	Let $\nu'>\nu$ be the next change point. Then either $\nu' =e$ which means that $\nu$ is the last change point,
	or $\nu'<T$ which indicates that $\nu$ is not the last change point. 

	\vskip 3mm
	\noindent {\bf Case 1.} Suppose $ \nu' =T$. Let $i=\nu -s$ and $h=e-\nu $.  For any $u \in \{1,\ldots,p\}$, by Case 1 in Lemma~2.6 of \cite{venkatraman1992consistency}, it holds that 
		\[
		\widetilde V_u(\nu)=\frac{a_u\sqrt{i+h} }{\sqrt {ih}} , \quad \widetilde V_u (\nu+l)=\frac{h-l}{h}\frac{a_u\sqrt{i+h} }{\sqrt {(i+l)(h-l)}}.
		\]
		Thus 
		\begin{align*}
			\widetilde V_u (\nu)^2 -	\widetilde V_u (\nu+l)^2 = l\frac{a_u^2(i+h)}{ih} \frac{h+i}{h(i+l)} = \frac{l(h+i)}{h(i+l)} \widetilde V_u(\nu)^2.
		\end{align*} 
		So
		\[
			\|\widetilde V(\nu) \|^2 -\|\widetilde V (\nu+l)\|^2 = \frac{l(h+i)}{h(i+l)} \|\widetilde V (\nu) \|^2 \ge \frac{l (e-s)}{ (e-s)^2 }\|\widetilde V (\nu) \|^2 \ge \frac{c_1l\Delta}{ (e-s)^2} \|\widetilde V (\nu) \|^2.
		\]
	
	\vskip 3mm
	\noindent {\bf Case 2.} Suppose $\nu'<e$.  Let $i=\nu-s$, $h=\Delta/2$ and $j=e-\nu-h$.  Let $l\le h/2$.  For any $u \in \{1,\ldots, p\} $, by  Case 2 in Lemma 2.6 of \cite{venkatraman1992consistency}, 
		\[
			\widetilde V_u(\nu)=\frac{a_u\sqrt{i+h} }{\sqrt {ih}} ,\quad \widetilde V_u( \nu+h)=\frac{(a_u+h\theta)\sqrt{i+j+h}}{\sqrt{(i+h)j}}  \quad\text{and}   \quad \widetilde V_u (\nu+l)=\frac{(a_u+l\theta)\sqrt{i+j+h}}{\sqrt{(i+l)(j+h-l)}},
		\]
		where $\theta$ is the solution of 
		\[
			\widetilde V_u ^2( \nu+h) -\widetilde V_u ^2(\nu)=  \frac{(a_u+h\theta)^2(i+j+h)}{(i+h)j}- \frac{a_u^2(i+h)}{ih}.
		\]
		Denote $B = \|\widetilde V (\nu+h)\|^2-\|\widetilde V(\nu)\|^2$ and $B_u= \widetilde V_u(\nu+h)^2-  \widetilde V_u (\nu)^2$. Thus by  \eqref{eq:ven 2},
		\begin{align}\label{eq:venkatraman 2.6 noise size}
			B \le  c_3 \| \widetilde V^{s,e} (\nu)  \|_2^2  \Delta^2 (e-s)^{-2}.
		\end{align}
		Then by  \Cref{lemma:Venkatraman 2.6},
		\begin{align*}
			& \| \widetilde V(\nu)\|^2 - \|\widetilde V (\nu+l)\|^2 =  \sum_{u=1}^p \left\{ \widetilde V_u (\nu)^2 -\widetilde V_u (\nu+l)^2 \right\}\\
			\ge & \sum_{u=1}^p \left\{\frac{\widetilde V_u (\nu)^2 (hl-l^2)}{(i+l)(j+h-l)} - B_u \frac{l(i+h)j}{h(i+l)(j+h-l)} \right\} =  \frac{\| \widetilde V (\nu)\|_2^2 l(h-l)}{(i+l)(j+h-l)} -B\frac{l(i+h)j}{h(i+l)(j+h-l)}  \\
			\ge &\frac{\| \widetilde V (\nu) \|_2^2 l  \Delta}{ 2 (e-s)^2} - 2B\frac{ l }{\Delta} \ge (1/2 - 2c_3)\frac{\| \widetilde V (\nu)\|_2^2 l  \Delta}{(e-s)^2}, 
		\end{align*}
		where the last inequality follows from \eqref{eq:venkatraman 2.6 noise size}.
\end{proof}

\begin{lemma}\label{lemma:Venkatraman 2.6} Denote
	$$\Theta_\nu =\frac{a\sqrt{i+j+h}}{\sqrt{i(j+h)}},\quad 
	\Theta_{\nu+h}=\frac{(a+h\theta)\sqrt{i+j+h}}{\sqrt{(i+h)j}} 
	\quad
	\text{and} \quad 
	\Theta_{\nu+l}=\frac{(a+l\theta)\sqrt{i+j+h}}{\sqrt{(i+l)(j+h-l)}} .
	$$ 
	Then
	$$\Theta_{\nu}^2-\Theta_{\nu+l}^2 \ge \frac{\Theta^2_{\nu} (hl-l^2)}{(i+l)(j+h-l)}
	- (\Theta _{\nu+h}^2-\Theta_\nu^2) \frac{l(i+h)j}{h(i+l)(j+h-l)}.$$
\end{lemma}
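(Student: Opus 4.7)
The plan is a direct algebraic identity: I will show that the ``error term'' between the left and right sides of the claimed inequality equals $\frac{(i+j+h)\,l(h-l)\theta^2}{(i+l)(j+h-l)}$, which is manifestly non--negative in the regime of interest ($0\le l\le h$).

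To keep the bookkeeping manageable I will introduce the shorthand $s=i+j+h$ and $D=(i+l)(j+h-l)$, so that
\[
\Theta_\nu^2=\frac{a^2 s}{i(j+h)},\quad \Theta_{\nu+h}^2=\frac{(a+h\theta)^2 s}{(i+h)j},\quad \Theta_{\nu+l}^2=\frac{(a+l\theta)^2 s}{D}.
\]
First I will simplify the left side together with the first summand on the right. A short expansion yields
\[
(i+l)(j+h-l)-hl+l^2 = i(j+h)+l(j-i),
\]
and therefore
\[
\Theta_\nu^2-\frac{\Theta_\nu^2(hl-l^2)}{D}-\Theta_{\nu+l}^2 \;=\; \frac{sl}{D}\Bigl[-2a\theta-l\theta^2+\frac{a^2(j-i)}{i(j+h)}\Bigr].
\]

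Next I will compute $\Theta_{\nu+h}^2-\Theta_\nu^2$ over a common denominator $i(i+h)j(j+h)$. After expanding $(a+h\theta)^2 i(j+h)-a^2(i+h)j$, grouping the $\theta$ and $\theta^2$ terms as $2a\theta i(j+h)+h\theta^2 i(j+h)$, and then multiplying the result by $\frac{l(i+h)j}{hD}$, the $(i+h)j$ cancels and I obtain the clean expression
\[
(\Theta_{\nu+h}^2-\Theta_\nu^2)\,\frac{l(i+h)j}{hD}
=\frac{sl}{D}\Bigl[\frac{a^2(i-j)}{i(j+h)}+2a\theta+h\theta^2\Bigr].
\]

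Finally, adding the two displayed expressions above, the $\pm 2a\theta$ and $\pm\tfrac{a^2(j-i)}{i(j+h)}$ terms cancel pairwise, and we are left with
\[
\Theta_\nu^2-\Theta_{\nu+l}^2-\frac{\Theta_\nu^2(hl-l^2)}{D}+(\Theta_{\nu+h}^2-\Theta_\nu^2)\frac{l(i+h)j}{hD}
\;=\;\frac{sl(h-l)\theta^2}{D}\;\ge 0,
\]
the last step using $0\le l\le h$ (which is the range in which the lemma is applied in \Cref{lemma:Venkatraman}, Case 2). Rearranging gives the claimed inequality.

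The only genuine obstacle is the algebraic identity in the middle step: the expression $(a+h\theta)^2 i(j+h)-a^2(i+h)j$ has six terms whose $\theta$-dependent part must be recognized as $\theta(2a+h\theta)\cdot i(j+h)$ in order for the subsequent multiplication by $\frac{l(i+h)j}{hD}$ to collapse and line up with the earlier expression. Once that telescoping is in place, the cancellation leaving $\frac{sl(h-l)\theta^2}{D}$ is immediate, and no analytic input beyond $l\le h$ is required.
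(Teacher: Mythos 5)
Your proof is correct and follows essentially the same route as the paper's: both reduce $\Theta_\nu^2-\Theta_{\nu+l}^2$ to the term $\tfrac{a^2(i+j+h)}{i(j+h)D}(lj+lh-li-l^2)$ minus $\tfrac{(2la\theta+l^2\theta^2)(i+j+h)}{D}$ and then use the identity $2ha\theta+h^2\theta^2=\tfrac{bj(i+h)}{i+j+h}+\tfrac{a^2(j-i)h}{i(j+h)}$ with $b=\Theta_{\nu+h}^2-\Theta_\nu^2$, the only (minor) difference being that the paper invokes the bound $l^2\theta^2\le lh\theta^2$ inside the chain while you track the exact remainder $\tfrac{(i+j+h)l(h-l)\theta^2}{(i+l)(j+h-l)}\ge 0$, which is a slightly cleaner presentation of the same cancellation. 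Both arguments use $0\le l\le h$ (implicit in the lemma statement but needed, as you correctly note).
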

\begin{proof}Observe that 
	\begin{align*}
	& \Theta_{\nu}^2-\Theta_{\nu+l}^2 = \frac{a^2(i+j+h)}{i(j+h)}- \frac{(a+l\theta)^2(i+j+h)}{(i+l)(j+h-l)} \\
	&= \frac{a^2(i+j+h)}{i(j+h)(i+l)(j+h-l)}\left((i+l)(j+h-l) -i(j+h)\right) -\frac{(2l\theta a +l^2\theta^2)(i+j+h)}{(i+l)(j+h-l)} \\
	& = \frac{a^2(i+j+h)}{i(j+h)(i+l)(j+h-l)} (-il + lj + lh -l^2 )-(2l\theta a +l^2\theta^2)\frac{(i+j+h)}{(i+l)(j+h-l)}.
	\end{align*}
	To bound the term $2l\theta a +l^2\theta^2$,
	let $b=\Theta_{\nu+h}^2 -\Theta_{\nu}^2$. 
	Then 
	$$b=  \frac{(a+h\theta)^2(i+j+h)}{(i+h)j}-
	\frac{a^2(i+j+h)}{i(j+h)}.
	$$
	Therefore
	$$ \frac{bij(i+h)(j+h)}{i+j+h} = (a^2+2h\theta a + h^2\theta^2 ) i(j+h) -a^2 (i+h)j,$$
	which gives
	$$  2h\theta a + h^2 \theta^2 = \frac{bj(i+h)}{i+j+h} + \frac{a^2(j-i)h}{i(j+h)}.$$
	Therefore 
	\begin{align*}
	 2l\theta a +l^2\theta^2  &\le  2l\theta a  +lh\theta^2  = \frac{l}{h} (2h\theta a  +h^2\theta^2) =  \frac{l}{h} \left( \frac{bj(i+h)}{i+j+h} + \frac{a^2(j-i)h}{i(j+h)}\right)
	\end{align*}
	which implies that 
		\begin{align*}
	\Theta_{\nu}^2-\Theta_{\nu+l}^2 
	& = \frac{a^2(i+j+h)}{i(j+h)(i+l)(j+h-l)} (-il + lj + lh -l^2 )-(2l\theta a +l^2\theta^2)\frac{(i+j+h)}{(i+l)(j+h-l)}
	\\
	& \le 
	 \frac{a^2(i+j+h)}{i(j+h)(i+l)(j+h-l)} (-il + lj + lh -l^2 )  
	 \\
	 & -
	 \frac{l}{h} \left( \frac{bj(i+h)}{i+j+h} + \frac{a^2(j-i)h}{i(j+h)}\right)\frac{(i+j+h)}{(i+l)(j+h-l)}
	 \\
	 & =
	 \frac{a^2(i+j+h)}{i(j+h)(i+l)(j+h-l)} (-il + lj + lh -l^2 )  
	 \\
	 & -
	 \frac{lbj(i+h)}{h(i+l)(j+h-l)}  - \frac{a^2(i+j+h)}{i(j+h)(i+l) (j+h-l)}(j-i) l
	\\
	& 
	=\frac{a^2(i+j+h)}{i(j+h)(i+l)(j+h-l)} (lh-l^2) -b \frac{l(i+h)j}{h(i+l)(j+h-l)},
	\end{align*}
	which concludes the proof.

\end{proof}

\begin{lemma}\label{lemma:one change point population size}
	Suppose $[s,e]$ contains one and only one  change point $\eta_k$, then
		\[
		\|\widetilde  V^{s, e} (t) \|^2=\begin{cases}
			\frac {t-s}{(e-s)(e-t)}  (e-\eta_k)^2  \|V(\eta) -V(\eta+1) \|^2 , & t\le \eta_k, \\
			\frac {e-t}{(e-s)(t-s)}  (\eta_k -s)^2  \|V(\eta) -V(\eta+1) \|^2   , &  t\ge\eta_k.
		\end{cases}
		\]
	\end{lemma}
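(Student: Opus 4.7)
The plan is a direct computation: since $[s,e]$ contains the single change point $\eta_k$, the sequence $\{V(r)\}_{r=s+1}^{e}$ takes only two distinct values. Write $v_1 = V(s+1) = \cdots = V(\eta_k)$ and $v_2 = V(\eta_k+1) = \cdots = V(e)$, so that $v_1 - v_2 = V(\eta_k) - V(\eta_k+1)$. I will substitute these into the definition of $\widetilde V^{s,e}(t)$ from \Cref{def-cusum} and simplify in each of the two cases separately.

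For $t \le \eta_k$, the partial sums are $\sum_{r=s+1}^{t} V(r) = (t-s)\,v_1$ and $\sum_{r=t+1}^{e} V(r) = (\eta_k-t)\,v_1 + (e-\eta_k)\,v_2$. Plugging these into the CUSUM and using the elementary identity
\[
\sqrt{\frac{e-t}{(e-s)(t-s)}}\,(t-s) \;=\; \sqrt{\frac{(t-s)(e-t)}{e-s}} \;=\; \sqrt{\frac{t-s}{(e-s)(e-t)}}\,(e-t),
\]
the $v_1$ contributions combine cleanly to give
\[
\widetilde V^{s,e}(t) \;=\; \sqrt{\frac{t-s}{(e-s)(e-t)}}\,(e-\eta_k)\,(v_1 - v_2),
\]
from which the claimed identity for $\|\widetilde V^{s,e}(t)\|^2$ follows by squaring norms.

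For $t \ge \eta_k$ the argument is entirely symmetric: the partial sums become $\sum_{r=s+1}^{t} V(r) = (\eta_k-s)\,v_1 + (t-\eta_k)\,v_2$ and $\sum_{r=t+1}^{e} V(r) = (e-t)\,v_2$, and the same algebraic identity (with the roles of $(t-s)$ and $(e-t)$ swapped) causes the $v_2$ contributions to cancel, leaving
\[
\widetilde V^{s,e}(t) \;=\; \sqrt{\frac{e-t}{(e-s)(t-s)}}\,(\eta_k-s)\,(v_1 - v_2).
\]

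There is essentially no obstacle here; the lemma is a purely algebraic fact about piecewise-constant CUSUM statistics with a single change point, and the only thing to be careful about is the cancellation identity relating the two CUSUM weights. No probabilistic content or reference to earlier results is required.
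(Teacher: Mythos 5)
Your proposal is correct and follows the same route as the paper, which simply asserts that the identity ``is a straightforward result from the definitions''; you have carried out the direct substitution of the two-valued sequence into the CUSUM definition and the cancellation via the weight identity $\sqrt{\tfrac{e-t}{(e-s)(t-s)}}(t-s)=\sqrt{\tfrac{t-s}{(e-s)(e-t)}}(e-t)$, which is exactly the computation the paper leaves implicit. No gap; your write-up just supplies the details the authors omitted.
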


\begin{proof}
	This is a straightforward result from the definitions.	
\end{proof}

\begin{lemma}\label{lemma:mcusum property near the boundary}
	Let $\eta_1$ be the first change point in $\{1, \ldots, T\}$. Then for any $1\le t\le \eta_1$,
		\[
		\|\widetilde V^{0,T}(t) \|^2 =\frac{ t(T-\eta_1)}{\eta_1(T-t)}\|\widetilde V ^{0,T} (\eta_1)\|^2  .
		\]
\end{lemma}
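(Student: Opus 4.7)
The proof is a direct computation that exploits the translation-invariance observed just before Proposition~\ref{eq:monotonicity of mcusum}. By that remark, we may assume without loss of generality that $\sum_{r=1}^{T} V(r) = 0$, in which case the identity \eqref{eq:translation free} reduces the CUSUM to
\[
\widetilde V^{1,T}(t) \;=\; \frac{\sum_{r=1}^{t} V(r)}{\sqrt{t(T-t)/T}}.
\]

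Since $\eta_1$ is the first change point, Assumption~\ref{assum-multi-cusum} gives $V(1) = V(2) = \cdots = V(\eta_1) =: V_0$. Hence for any $1 \le t \le \eta_1$ the partial sum collapses to $\sum_{r=1}^{t} V(r) = t V_0$, and so
\[
\widetilde V^{1,T}(t) \;=\; \frac{t V_0}{\sqrt{t(T-t)/T}} \;=\; V_0 \sqrt{\tfrac{t\,T}{T-t}},
\qquad
\|\widetilde V^{1,T}(t)\|^{2} \;=\; \|V_0\|^{2}\,\frac{t\,T}{T-t}.
\]

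Specializing to $t = \eta_1$ and solving for $\|V_0\|^{2}$ gives $\|V_0\|^{2} = \|\widetilde V^{1,T}(\eta_1)\|^{2}(T-\eta_1)/(\eta_1 T)$. Substituting this back into the display above yields the stated ratio
\[
\|\widetilde V^{1,T}(t)\|^{2} \;=\; \frac{t(T-\eta_1)}{\eta_1(T-t)}\,\|\widetilde V^{1,T}(\eta_1)\|^{2}.
\]
There is no real obstacle here; the only thing to be mindful of is to invoke translation-invariance at the outset, since without it the partial sum $\sum_{r=1}^{t} V(r)$ would not equal $tV_0$ but rather $tV_0$ plus a shift that would need to be tracked through the mean-centering term of the CUSUM. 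Once that reduction is made, the lemma is immediate by the two-line computation above.
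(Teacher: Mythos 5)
Your proof is correct and is essentially the paper's argument: the paper reduces the vector statement to the scalar CUSUM lemma applied coordinate-wise, and that scalar lemma is proved by exactly your computation (center so the total sum vanishes, collapse the partial sum to $tV_0$ on $[1,\eta_1]$, and read off the ratio). You simply carry out the same calculation directly in vector form, which is fine.
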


\begin{proof}
	This is a direct consequence of \Cref{lemma:cusum property near the boundary}.
\end{proof}

\begin{lemma}\label{lemma:mcusum boundary bound}
	Let $[s,e]$ contain two or more change points such that 
	\[
	\eta_{r-1} \le s < \eta_r < \ldots < \eta_{r+q} \le e \le \eta_{r+q+1}, \quad q\ge 1.
	\]
	If $\eta_{r}-s \le  c\Delta $ for some $c\le 1/4$ and $\eta_{r+1}-\eta_r \ge \Delta, $
	then
	$$\|\widetilde V^{s,e}(\eta_r ) \|^2 \le 2c  \|\widetilde V ^{s,e}({\eta_{r+1} }) \|^2  +4 \kappa_r^2 (\eta_r-s).$$
	
	If there are two and only two change points, then 
	\[
		\max_{s < t < e} \|\widetilde{V}^{s, e}(t)\|^2 \leq (e - \eta_{r+1}) \kappa_{r+1}^2 + (\eta_r - s)\kappa_r^2.
	\]
\end{lemma}
\begin{proof}
	This follows from a similar calculation as in  \Cref{lemma:cusum boundary bound}.
\end{proof}

\subsection{Univariate CUSUM}\label{sec-1d-cusum}

\begin{assumption}
	Let $\{ f(t)\}_{t=1}^T\subset  \mathbb  R$. Assume there exists a sequence $\{ \nu_m\}_{m=0}^M \subset \{1, \ldots, T\}$ such that $1 = \nu_0 < \nu_1 < \ldots < \nu_M \leq T < \nu_{K+1} = T + 1$ and, for $t=2,\ldots,T$, 
	\[
	f(t) \neq f(t-1) \quad  \text{if and only if} \quad t \in \{ \nu_1,\ldots,\nu_M\}.
	\]
	We  set
	\[
		|f (\nu_{m}) -f(\eta_{m}-1 ) |=\kappa_m \geq \kappa.	
	\]
\end{assumption}

For the same reasons as we described after \Cref{assum-multi-cusum}, in this subsection we use a self-contained notation system, and one can interpret $\kappa = \kappa_0n\rho$ as we used in \Cref{assume:model}.

\begin{lemma}\label{lemma:1d cusum population}
	Suppose $ \nu_m$ is a change point of $\{ f (t)\}_{t=1}^T  $ such that $\min_{m'\not = m} \{ \nu_m - \nu_{m'}  \}\ge \Delta $. Then 
		\begin{equation}\label{eq:size at change point}
			\max\left \{ \left|\sum_{r=1}^{\nu_m-\Delta} f(r) \right|,\quad \left|\sum_{r=1}^{\nu_m } f(r) \right|,\quad \left|\sum_{r=1}^{\nu_m + \Delta} f(r) \right| \right\}  \ge  \Delta|f(\nu_m) -f(\nu_m+1)| /4.
		\end{equation}
\end{lemma}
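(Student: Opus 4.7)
The plan is a short proof by contradiction, exploiting the fact that the partial sums telescope on the two ``quiet'' windows on either side of $\nu_m$. Set $a = f(\nu_m)$ and $b = f(\nu_m+1)$, so that $|a-b| = |f(\nu_m) - f(\nu_m+1)|$. The spacing hypothesis $\min_{m' \neq m}|\nu_m - \nu_{m'}| \geq \Delta$ guarantees that no other change point lies in $(\nu_m - \Delta,\, \nu_m + \Delta]$; hence $f \equiv a$ on $\{\nu_m - \Delta + 1, \ldots, \nu_m\}$ and $f \equiv b$ on $\{\nu_m + 1, \ldots, \nu_m + \Delta\}$.

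Writing $S_t := \sum_{r=1}^t f(r)$, I would then observe the two telescoping identities
\[
S_{\nu_m} - S_{\nu_m - \Delta} = \sum_{r = \nu_m - \Delta + 1}^{\nu_m} f(r) = \Delta\, a,
\qquad
S_{\nu_m + \Delta} - S_{\nu_m} = \sum_{r = \nu_m + 1}^{\nu_m + \Delta} f(r) = \Delta\, b.
\]
These are the core of the argument: they express the jump value directly in terms of the three partial sums appearing in \eqref{eq:size at change point}.

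Now suppose, for contradiction, that each of $|S_{\nu_m - \Delta}|$, $|S_{\nu_m}|$, $|S_{\nu_m + \Delta}|$ is strictly less than $\Delta|a-b|/4$. Then by the triangle inequality,
\[
\Delta|a| = |S_{\nu_m} - S_{\nu_m - \Delta}| \leq |S_{\nu_m}| + |S_{\nu_m - \Delta}| < \tfrac{\Delta}{2}|a - b|,
\]
and, symmetrically,
\[
\Delta|b| = |S_{\nu_m + \Delta} - S_{\nu_m}| \leq |S_{\nu_m + \Delta}| + |S_{\nu_m}| < \tfrac{\Delta}{2}|a-b|.
\]
Adding the two resulting bounds gives $|a| + |b| < |a-b|$, which contradicts the triangle inequality $|a - b| \leq |a| + |b|$. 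Therefore at least one of the three partial sums must have absolute value at least $\Delta|a-b|/4$, as claimed.

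There is no real obstacle here: the only subtlety is recognizing that the lemma does not involve a centered CUSUM but rather the raw partial sums $S_t$, so no translation-invariance reduction is needed. Once the two telescoping identities are written down, the bound follows from a one-line triangle inequality, and the constant $1/4$ is the sharp value produced by splitting the failure hypothesis evenly between the two flanking windows.
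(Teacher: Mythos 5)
Your proof is correct and follows essentially the same route as the paper's: both isolate the two constant windows of length $\Delta$ flanking $\nu_m$, express their sums as differences of the three prefix sums, and conclude via the triangle inequality (the paper states $\max\{|f(\nu_m)|,|f(\nu_m+1)|\}\ge|f(\nu_m)-f(\nu_m+1)|/2$ directly, while you phrase the same step as a contradiction). Your indexing $\sum_{r=\nu_m-\Delta+1}^{\nu_m}$ is in fact slightly cleaner than the paper's $\sum_{r=\nu_m-\Delta}^{\nu_m}$, which carries a harmless off-by-one.
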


\begin{proof} For simplicity denote $\nu_m =\nu $.
	Observe that 
	$$ \max\{ |  f(\nu) | , |f(\nu+1) | \} \ge|f(\nu) -f(\nu+1)| /2. $$
	Thus 
	\begin{align}
	\max
	\left \{ \left|\sum_{r=\nu -\Delta}^{\nu}f(r) \right|,
	\left|\sum_{r=\nu+1}^{\nu +\Delta}  f(r)\right| 
	\right \}
	\ge 
	\Delta|f(\nu) -f(\nu+1)| /2 .
	\label{eq:mcusum size 2}
	\end{align}
	Since 
	\begin{align}
	\left|\sum_{r=\nu -\Delta}^{\nu}f(r) \right|  \le \left|\sum_{r=1}^{\nu-\Delta} f(r) \right|
	+
	\left|\sum_{r=1}^{\nu } f(r) \right|
	\quad \mbox{and}\quad 
	\left|\sum_{r=\nu+1}^{\nu +\Delta} f(r)\right|  \le \left|\sum_{r=1}^{\nu }f(r) \right|
	+
	\left|\sum_{r=1}^{\nu + \Delta} f(r) \right| \label{eq:mcusum size 3},
	\end{align}
	we have that \eqref{eq:mcusum size 2} and \eqref{eq:mcusum size 3} directly imply \eqref{eq:size at change point}.
\end{proof}

\begin{lemma}\label{lemma:cusum property near the boundary}
	Let $\eta_1$ be the first change point in $\{2, \ldots, T\}$. Then for any $1\le t < \eta_1$,
		\[
		\widetilde f^{0,T}_t  = \sqrt {\frac{ t(T-\eta_1)}{\eta_1(T-t)}}\widetilde f^{1,T}_{\eta_1} .
		\]
\end{lemma}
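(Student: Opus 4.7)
The plan is to reduce the statement to a direct algebraic computation by exploiting the fact that, since $\eta_1$ is the \emph{first} change point, $f$ is constant on $\{1,\ldots,\eta_1\}$. Write $\alpha \coloneqq f(1)=f(2)=\cdots=f(\eta_1)$ and $S \coloneqq \sum_{r=1}^{T} f(r)$. For any $t\in\{1,\ldots,\eta_1\}$ we then have the trivial identities
\[
\sum_{r=1}^{t} f(r) = t\alpha \qquad \text{and} \qquad \sum_{r=t+1}^{T} f(r) = S - t\alpha.
\]

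Next, substitute these into the CUSUM formula of \Cref{def-cusum} (taken with $s=0$, $e=T$):
\[
\widetilde f^{1,T}_t \;=\; \sqrt{\frac{T-t}{T\,t}}\,(t\alpha)\;-\;\sqrt{\frac{t}{T(T-t)}}\,(S-t\alpha).
\]
The key observation is that both terms admit a common factor of $\sqrt{t/(T(T-t))}$: indeed $\sqrt{(T-t)/(Tt)}\cdot t = \sqrt{t/(T(T-t))}\cdot(T-t)$, so
\[
\widetilde f^{1,T}_t \;=\; \sqrt{\frac{t}{T(T-t)}}\Bigl[(T-t)\alpha - (S - t\alpha)\Bigr] \;=\; \sqrt{\frac{t}{T(T-t)}}\,\bigl(T\alpha - S\bigr).
\]
Crucially, the bracketed factor $(T\alpha - S)$ is independent of $t$.

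Finally, apply the same identity at $t=\eta_1$ to obtain $\widetilde f^{1,T}_{\eta_1} = \sqrt{\eta_1/(T(T-\eta_1))}\,(T\alpha - S)$, and take the ratio to conclude
\[
\widetilde f^{1,T}_t \;=\; \sqrt{\frac{t(T-\eta_1)}{\eta_1(T-t)}}\,\widetilde f^{1,T}_{\eta_1}.
\]
The only mild degeneracy is the case $T\alpha = S$, in which both sides vanish and the identity holds trivially. There is essentially no obstacle here; the statement is a one-line computation once one notices that constancy of $f$ on $[1,\eta_1]$ allows the partial sums to be rewritten in closed form.
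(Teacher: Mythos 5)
Your proof is correct and follows essentially the same route as the paper's: both reduce the claim to a direct computation using the constancy of $f$ on $\{1,\ldots,\eta_1\}$ to write the partial sums in closed form, yielding $\widetilde f^{1,T}_t = \sqrt{t/(T(T-t))}\,(T\alpha - S)$ with a $t$-independent factor. The only cosmetic difference is that the paper first normalizes $\sum_{t=1}^T f_t = 0$ via translation invariance, whereas you carry the total sum $S$ through explicitly; the computations are otherwise identical.
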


\begin{proof}
	Without loss of generality assume $\sum_{t=1}^T f_t =0 $.  Thus $\eta_1 f_1 = \sum_{t=1}^{\eta_1}f_t = - \sum_{t=\eta_1 +1} ^T f_t$.  As a result, for any $1\le t < \eta_1$,
	\begin{align*}
	\widetilde f^{1,T}_{t} 
	&= \sqrt { \frac{T-t}{Tt} } \sum_{i=1}^{t}f_i - \sqrt { \frac{t}{T(T-t)} } \sum_{i=t+1}^{T}f_i
	\\
	&=  \sqrt { \frac{T-t}{Tt} }tf_1 - \sqrt { \frac{t}{T(T-t)} }\left(  (\eta_1- t) f_1 + \sum_{i=\eta_1+1}^T f_i\right)
	\\
	&=  \sqrt { \frac{T-t}{Tt} }tf_1 - \sqrt { \frac{t}{T(T-t)} }\left\{(\eta_1- t) f_1  -\eta_1 f_1\right\} =\frac{(T-t)\sqrt t +t\sqrt t }{\sqrt{T(T-t)}}f_1 =\sqrt{ \frac{Tt}{T-t} }f_1.
	\end{align*}
	
\end{proof}

\begin{remark}
	If there exists $b \in [1,\eta_1]$ such that  $\widetilde f^{1,T}_b >0$, then by \Cref{lemma:cusum property near the boundary}, $ \widetilde f^{1,T}_{\eta_1} >0$.
	Since  for $t \in [1,\eta_1] $,  $ \sqrt {\frac{ t(T-\eta_1)}{\eta_1(T-t)}} $ is an increasing function of $t$,  this also 
	implies $\widetilde f^{1,T}_t >0$ is increasing within $[1,\eta_1]$, as a function of $t$.
	
\end{remark}

\begin{lemma}\label{lemma:cusum boundary bound}
	Let $[s,e]$ contain two or more change points such that 
		\[
		\eta_{r-1} \le s\le \eta_r \le \ldots\le \eta_{r+q} \le e \le \eta_{r+q+1}, \quad q\ge 1.
		\]
		If 
		$\eta_{r}-s \le  c_1^2\Delta $ for some $c_1\le 1/4$ and $\eta_{r+1}-\eta_{r} \ge \Delta $,
		then
		$$|\widetilde f^{s,e}_{\eta_r}| \le c_1  |  \widetilde f^{s,e}_{\eta_{r+1}}| +2\kappa_r  \sqrt {\eta_r -s}. $$
		
		If $[s, e]$ contains two and only two change points $\eta_r$ and $\eta_{r+1}$, then
		\[
			\max_{s < t < e} \left|\widetilde{f}^{s, e}_t\right| \leq \sqrt{e - \eta_{r+1}} \kappa_{r+1} + \sqrt{\eta_r - s} \kappa_r.
		\]
		
\end{lemma}

\begin{proof}
	Consider the sequence $\{g_t\}_{t=s+1}^e $ be such that 
		\[
		g_t=
		\begin{cases}
		f_{\eta_{r+1}}, & \text{if} \quad s +1\le  t\le \eta_{r},
		\\
		f_t, & \text{if} \quad \eta_{r}+1 \le t \le e. 
		\end{cases}
		\]

	For any  $t\ge \eta_r +1$,
		\begin{align*}
		&\widetilde f^{s,e}_{t} - \widetilde g^{s,e}_{t} 
		\\
		=&
	\sqrt { \frac{e-t}{(e-s)(t-s)} }  \left(	\sum_{i=s+1}^{\eta_r} f_{\eta_r} + \sum_{i=\eta_r +1}^{t  } f_{\eta_{r+1}} 
	- \sum_{i=s+1}^{\eta_r} g _{\eta_r}  - \sum_{i=\eta_r +1}^{t  } g_{\eta_{r+1}}  
	 \right) \\
	 &\hspace{5cm}- 	\sqrt { \frac{t-s}{(e-s)(e-t) } }  \left(	\sum_{i=t+1}^{e} f_{t} - \sum_{i=t+1}^{e} g_{t}
	 \right)
	 \\
	 =	&\sqrt { \frac{e-t}{(e-s)(t-s)} } (\eta_r-s) (f_{\eta_{r+1}} -f_{\eta_{r}})  \le \sqrt{\eta_r-s} \kappa_r . 
		\end{align*}
		Thus 
		\begin{align*}
		|\widetilde f^{s,e}_{\eta_r} | &
		\le  |\widetilde g^{s,e}_{\eta_r}  |+  \sqrt{\eta_r-s} \kappa_r \le \sqrt { \frac{(\eta_r-s)  (e-\eta_{r+1})  }{   ( \eta_{r+1}-s)  (e-\eta_r)    } }|\widetilde g^{s,e}_{\eta_{r+1}} |+  \sqrt{\eta_r-s} \kappa_r
		\\
		&\le \sqrt { \frac{c_1^2\Delta }{ \Delta}}|\widetilde g^{s,e}_{\eta_{r+1}} |  +  \sqrt{\eta_r-s} \kappa_r \le c_1 |\widetilde f^{s,e}_{\eta_{r+1}} |  + 2\sqrt{\eta_r-s} \kappa_r,
		\end{align*}
		where the first inequality follows from \Cref{lemma:cusum property near the boundary} and the observation that the first change point of $g_t$ in $[s,e]$ is  $\eta_{r+1}$.

		If there are two and only two change points, then 
		\[
			\max_{s < t < e} \left|\widetilde{f}^{s, e}_t\right| = \max\left\{|\widetilde f^{s,e}_{\eta_r} |, |\widetilde f^{s,e}_{\eta_{r+1}} |\right\} \leq \max_{s < t < e} |\widetilde g^{s,e}_t| + \sqrt{\eta_{r} - s} \kappa_r \leq \sqrt{e - \eta_{r+1}} \kappa_{r+1} + \sqrt{\eta_r - s} \kappa_r.
		\]	
	\end{proof}

 \section{Additional lemmas}\label{app:ancillary}
 \begin{lemma}
 	Suppose $x>0$ and that $x^2 +bx-c\ge0$ where $b, c>0$ and that 
 		\[
 	b\le \sqrt c /4.
 		\] 
 		Then $x\ge  7\sqrt c/8$.
 \end{lemma}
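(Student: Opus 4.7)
The plan is to solve the quadratic inequality directly. Since $x > 0$, the function $\phi(x) = x^2 + bx - c$ has derivative $\phi'(x) = 2x + b > 0$, so $\phi$ is strictly increasing on $(0,\infty)$. Therefore $\phi(x) \geq 0$ forces $x$ to be at least the unique positive root of $\phi$, which by the quadratic formula is
\[
x_* = \frac{-b + \sqrt{b^2 + 4c}}{2}.
\]

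Next I would bound $x_*$ from below using the trivial estimate $\sqrt{b^2 + 4c} \geq \sqrt{4c} = 2\sqrt{c}$, which yields
\[
x \geq x_* \geq \frac{-b + 2\sqrt{c}}{2} = \sqrt{c} - \frac{b}{2}.
\]
Then invoking the hypothesis $b \leq \sqrt{c}/4$ gives
\[
x \geq \sqrt{c} - \frac{\sqrt{c}}{8} = \frac{7\sqrt{c}}{8},
\]
which is exactly the claim. There is no real obstacle here; the only thing to be careful of is to note that the positivity of $x$ rules out the other root of the quadratic (which is negative since $c > 0$), so the lower bound $x \geq x_*$ is justified.
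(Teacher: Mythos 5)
Your proof is correct and follows essentially the same route as the paper: both isolate the positive root $\frac{-b+\sqrt{b^2+4c}}{2}$, discard the negative one, and use $\sqrt{b^2+4c}\ge 2\sqrt{c}$ together with $b\le\sqrt{c}/4$ to conclude $x\ge 7\sqrt{c}/8$. Your write-up just spells out the intermediate bound more explicitly than the paper does.
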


 \begin{proof} 
	We have either $x \ge \frac{-b + \sqrt {b^2  +4c}}{2}$ or $\quad x \le \frac{-b - \sqrt {b^2  +4c}}{2}$.  Since $x, b, c > 0$ and $  b\le \sqrt c /4$, we have
		\[
		x\ge  \frac{-b + \sqrt {b^2  +4c}}{2} \ge 7\sqrt c /8.
		\]
\end{proof}
	
 Let $\{\alpha_m\}_{m=1}^M,\{\beta_m\}_{m=1}^M$ be two sequences independently selected at random from $\{1, \ldots, T\}$, and   
	\begin{equation}\label{event-M}
		\mathcal{M} = \bigcap_{k = 1}^K \bigl\{\alpha_m \in \mathcal{S}_k, \beta_m \in \mathcal{E}_k, \, \mbox{for some }m \in \{1, \ldots, M\}\bigr\}, 
	\end{equation}
	where $\mathcal S_{k}= [\eta_k-3\Delta/4, \eta_k-\Delta/2 ]$ and $\mathcal
	E_{k}= [\eta_k+\Delta/2, \eta_k+3\Delta/4 ]$, $k = 1, \ldots, K$.  In the lemma below, we give a lower bound on the probability of $\mathcal{M}$. 

 \begin{lemma}\label{lemma:random interval}
	For the event $\mathcal{M}$ defined in \eqref{event-M}, we have
 
		\begin{equation}\label{eq:event.E}
			\mathbb{P}(\mathcal M) \geq 1 -\exp\left\{\log\left(\frac{T}{\Delta}\right) - \frac{M\Delta^2}{16 T^2} \right\}.
		 \end{equation}
 
\end{lemma}
	
\begin{proof}
	Since the number of change points are bounded by $T/\Delta $,
		\begin{align*}
			\mathbb{P}\bigl(\mathcal{M}^c\bigr) \leq \sum_{k=1}^K \prod_{m =1}^M \bigl\{1 - \mathbb{P}\bigl(\alpha_m \in \mathcal{S}_k, \beta_m \in \mathcal{E}_k\bigr)\bigr\}	 \leq K (1-\Delta^2/(16T^2))^M \leq (T/\Delta)  (1 - \Delta^2/(16T^2))^M.
		\end{align*}
\end{proof}

\end{document}